\documentclass{article}

\usepackage[utf8]{inputenc}   
\usepackage[english]{babel}      
\usepackage{amsmath, amssymb, amsthm, mathrsfs}     
\usepackage[a4paper, margin=3cm]{geometry} 
\usepackage{graphicx}         
\usepackage{hyperref}         
\usepackage{tikz}
\allowdisplaybreaks
\usepackage{float} 
\usepackage{graphicx, pgfplots, xcolor}
\usepackage{bm}
\definecolor{granate}{RGB}{128, 0, 64} 

\usepackage[T1]{fontenc}
\usepackage{subcaption}
\usepackage{booktabs}
\usepackage[authoryear]{natbib} 
\newtheorem{theorem}{Theorem}

\newtheorem{definition}[theorem]{Definition}
\newtheorem{lemma}[theorem]{Lemma}

\newtheorem{proposition}[theorem]{Proposition}

\begin{document}
\title{Robust Estimation in Step-Stress Experiments under Weibull Lifetime Distributions.}
\author{Mar\'{i}a Jaenada$^{(1)}, $ Juan Manuel Mill\'{a}n$^{(2)}$ and Leandro Pardo$^{(2)}$ \\
{\small $^{(1)}$ Department of Statistics, O.R. and N.A., UNED, Madrid, Spain}\\
{\small $^{(2)}$ Department of Statistics and O.R., Complutense University of Madrid, Spain}
}
\date{}
\maketitle

\begin{abstract}
	Many modern products are highly reliable, often exhibiting long lifetimes. As a result, conducting experiments under normal operating conditions can be prohibitively time-consuming to collect sufficient failure data for robust statistical inference. Accelerated life tests (ALTs) offer a practical solution by inducing earlier failures, thereby reducing the required testing time. In step-stress experiments, a stress factor that accelerates product degradation is identified and systematically increased at predetermined time points, while remaining constant between intervals. Failure data collected under these elevated stress levels is analyzed, and the results are then extrapolated to normal operating conditions.
	
	Traditional estimation methods for such data, such as the maximum likelihood estimator (MLE), are highly efficient under ideal conditions but can be severely affected by outlying or contaminated observations. To address this, we propose the use of Minimum Density Power Divergence Estimators (MDPDEs) as a robust alternative, offering a balanced trade-off between efficiency and resistance to contamination. The MDPDE framework is extended to mixed distributions and its theoretical properties, including the asymptotic distribution of the model parameters, are derived assuming Weibull lifetimes. The effectiveness of the proposed approach is illustrated through extensive simulation studies, and its practical applicability is further demonstrated using real-world data.
\end{abstract}

\noindent \underline{\textbf{Keywords}}\textbf{: Minimum Density Power Divergence Estimator, Step-stress Accelerated life-test, Robustness.}

\section{Introduction}
The continuous improvement in the manufacturing of industrial and consumer products has led to a significant increase in their reliability, resulting in longer expected lifetimes until failure. These highly reliable products are commonly found in modern devices such as electronic components, batteries, sensors, and mechanical parts used in automotive and aerospace industries. However, these advancements have also made lifetime testing more challenging, since longer product lifetimes render tests under normal operating conditions both time and resource intensive.
To overcome this limitation, experiments designed to induce early failures are required. One widely used approach in industry, introduced in the early second half of the twentieth century, is the Accelerated Life Test (ALT). In these experiments, one or more stress factors responsible for product degradation and failure are identified and then increased to accelerate the failure process. For instance, for electronic components, the operating temperature can be increased, whereas for batteries, a higher voltage may be applied to accelerate the degradation process.
Data collected from an ALT are statistically analyzed, and results are then extrapolated to estimate reliability characteristics under normal conditions. Specifically, the lifetime at normal usage stress can be inferred through extrapolation using a stress–response regression model.

The statistical methodology for analyzing Accelerated Life Tests (ALTs) data emerged prominently in the early 1980s, primarily through the work of \cite{nelson1978, nelson1980, nelson1990}. Key initial contributions include \cite{nelson1978}, which established the theoretical foundation for optimum censored tests under Weibull and Extreme Value distributions; \cite{nelson1980}, which provided an important overview of step-stress models and their corresponding data analysis techniques; and \cite{nelson1990}, a reference that organized statistical models and test plans for the field. Since this foundational work, a vast body of literature has explored statistical methods across various ALTs models and censoring schemes.

Significant references in the development of step-stress ALTs methodology are numerous. For example, \cite{miller1983} determined optimal simple step-stress plans for ALT, and \cite{bai1989} proposed optimum simple step-stress ALTs with censoring. \cite{meeker1998} stands as a textbook consolidating statistical methods for reliability data, complementing the mentioned literature. Later contributions include \cite{balakrishnan2007}, who provided point and interval estimation for the simple step-stress model under Type-II censoring with exponential lifetimes, \cite{balakrishnan2009a}, who presented point and interval estimation for the Lognormal lifetime distribution with Type-I censoring, and \cite{balakrishnan2009b}, who proposed exact inference for the Exponential distribution under time constraints.

Censoring is an inherent feature in reliability experiments, as failure times are often unobservable within the duration of the study. Depending on how the experiment termination criterion is defined, two conventional censoring schemes are typically considered. Under Type-I censoring, the experiment is terminated at a predetermined time, making the number of observed failures a random variable. Conversely, in Type-II censoring, the test ends after a predetermined number of failures, resulting in a random total test duration. These two schemes constitute the basis for most statistical analyses of reliability data. 
Besides, the assumed censoring scheme directly affects both the variables involved in the analysis and the structure of the statistical model. In this study, we adopt the first approach, in which the test duration is predetermined.

Parametric inference relies on the assumption that the lifetime distribution belongs to a specific parametric family, such as the exponential, Weibull, gamma, or log-normal model, and so that its form is known except for a parameter vector, that we aim to estimate from observed data.
One of the most common distributions used to model lifetimes, generalizing the exponential distribution, is the Weibull distribution with scale parameter $\lambda$ and shape parameter $\eta$, $W(\lambda, \eta)$. Its strength lies in its flexibility, which allows different relationships between stress levels and reliability to be captured. The Weibull distribution has been shown to fit many real-world products well and is frequently employed in analyses involving the proportional hazards property. It includes the exponential distribution when the shape parameter is $\eta=1.$ Because of its practical relevance, the Weibull distribution will be considered in this work.

Four different types of stress loading are commonly described in the ALT literature: Constant-stress tests are those in which the stress applied to all test units remains constant throughout the entire experiment. This design allows for straightforward modeling but may require longer testing periods to observe failures.

Important references in constant-stress ALTs are \cite{meeker1975, kielpinski1975, nelson1976, nelson2009}. \cite{kielpinski1975} and \cite{nelson1976} addressed the theory and optimum censored tests for Normal and Lognormal lifetime distributions. \cite{meeker1975} developed optimum tests for Weibull and Extreme Value distributions. \cite{bai1991} presented an optimum design for the Exponential distribution. \cite{yang1994} developed optimum constant-stress test plans. Work on Partially Accelerated Life Testing, a related constant-stress approach, includes \cite{bai1993}, which developed an optimum design for the Lognormal distribution under Type-I censoring. \cite{hyun2015} studied the Log-Logistic distribution. \cite{abdel2016} compared estimation methods under exponentiated distributions. \cite{nassar2018} compared estimation methods for the Exponentiated Rayleigh distribution. \cite{elsagheer2018} developed inference procedures based on progressive Type-II censoring. \cite{aldayian2021} addressed Bayesian estimation and prediction for the Topp Leone–Inverted Kumaraswamy distribution. More recent work includes \cite{kumar2022}, who proposed estimation procedures for the Generalized Inverse Lindley distribution.

In step-stress ALTs (SSALTs), all units are subjected to the same stress pattern, where the stress level is abruptly increased at predetermined times or after a fixed number of failures, until either all samples have failed or the duration at the maximum stress level has elapsed. Key references in SSALTs include \cite{xiong1998}, which develops inference for the simple step-stress model with Type-II censored exponential data. \cite{balakrishnan2007} provides point and interval estimation under Type-II censoring. \cite{kateri2008} performs inference for the Weibull distribution under Type-II censoring. \cite{balakrishnan2009a} addresses inference for the Lognormal distribution under Type-I censoring. \cite{kateri2024} proposes the product of spacings estimation method. More recent developments focus on robustness, such as \cite{balakrishnan2023a}, which studies robust estimation for non-destructive one-shot devices.

In progressive-stress ALTs (PSALTs), the stress level continuously increases over time. This approach was fundamentally introduced by \cite{starr1961} as an accelerated method for voltage endurance testing. Early studies include \cite{yin1987}, which examined acceleration laws and data analysis methods for PSALTs. Subsequent work focused on inference under specific distributions and censoring schemes. \cite{abdel2011} developed inference procedures for the Weibull distribution under progressive Type-II censoring. \cite{abdel2015} addressed inference for the Exponentiated Exponential distribution under progressive hybrid Type-II censoring. \cite{alhussaini2015} studied Bayesian prediction intervals for the Half-Logistic distribution. \cite{mohie2017} developed classical and Bayesian inference for the Extension of the Exponential distribution. \cite{ronghua2004} investigated statistical inference for the Weibull distribution under a tampered failure rate model. \cite{bai1992} developed optimum simple ramp-test designs for the Weibull distribution.

Finally, in cyclic-stress ALTs (CySALTs) the stress level changes periodically. Important contributions include reliability modeling of one-shot units under thermal cyclic stresses \cite{cheng2017}, the application of accelerated temperature cycle tests and the Coffin-Manson model \cite{cui2005}, the proposal of a general model for age acceleration during thermal cycling \cite{nachlas1986}, and reliability estimation for one-shot devices under cyclic ALT as well as modeling using the Generalized Gamma lifetime \cite{zhu2021, zhu2022}.

Each stress design has its own advantages and disadvantages, and the choice of design may also depend on the nature of the product. In this study, we focus on step-stress tests under continuous monitoring, which have been shown to achieve the same level of accuracy as constant-stress tests but with shorter experimental times and, consequently, lower costs.

Bringing all the above considerations together, this study focuses on the SSALT model under Type-I censoring, assuming Weibull lifetime distribution. This model has been extensively investigated in the literature; to mention a few representative contributions, \cite{khamis1997} compared constant stress levels and simple step-stress ALTs under a Weibull distribution. \cite{bai1993} studied the optimal SSALTs for Weibull distributions. \cite{zhao2005} developed inference methods using Weibull and lognormal distributions.  These studies rely on the Maximum Likelihood Estimator (MLE) to perform inference, as it performs well in the absence of data contamination; however, they may become unreliable when the data under study contain outlier observations. In other words, the MLE is asymptotically efficient, but non-robust. To address this gap, a class of robust estimators for the SSALT model based on the Weibull lifetime distribution is developed in this paper.

To the best of our knowledge, the only studies that investigate robust estimators for the aforementioned model is under Weibull distribution\cite{balakrishnan2025}, where minimum density power divergence estimators (MDPDE) were developed. These works assume the failures observed are interval-censored, and so the which consisted of partitioning the support of the distribution that characterizes the lifetime (in our case the support of the Weibull distribution) of the device under consideration according to inspection times
In these studies, failures are assumed to be interval-censored, so the observed data consist of counts of failures within the inspected intervals. Accordingly, the support of the Weibull distribution is discretized according to the scheduled inspection times of the units under test, and the resulting likelihood is modeled through a multinomial distribution with probabilities obtained from the continuous lifetime distribution. Thus, the MDPDEs proposed therein were derived by minimizing the Density Power Divergence (DPD) between the empirical and model-based probability vectors of the corresponding multinomial formulation.
The interval-monitored experimental design has also been studied under other commonly used lifetime distributions, with theoretical developments addressing both estimation and hypothesis testing. For example, \cite{balakrishnan2023a,balakrishnan2023b, balakrishnan2023c}  used a exponential model, while \cite{balakrishnan2024b} and  \cite{balakrishnan2024a} assumed gamma and lognormal lifetimes, respectively. A semi-parametric approach was adopted in \cite{balakrishnan2023d}.
 However, in the experimental design with interval censoring, the exact failure times are not available, but only a count of failures is recorded. This limitation entails a loss of information that is relevant to the analysis, if continuous monitoring is feasible.
The use of exact failure times, together with inference based on MDPDEs, results in a novel modeling framework, first introduced for exponential lifetime data in \cite{jaenada2025}. We extend the theory here for Weibull lifetime distributions.

The rest of the paper is organized as follows: 
 Section 2 introduces the Weibull model with continuous monitoring time and two stress levels. Section 3 develops the Minimum Density Power Divergence Estimator (MDPDE). Section 4 studies its asymptotic properties and establishes both point and interval estimation methods for characteristics of interest. Finally, Section 5 provides an analysis with simulated data, and Section 6 presents a study with real data.

\section{The Step-Stress Model under Weibull Distributions}

In a SSALT under Type I censoring, the total duration of the study is partitioned into stress periods determined by the times of stress change. For simple SSALTs with only two stress levels,  $ x_1 $ and $ x_2, $ these  time points are denoted by $ \tau_0 = 0, \tau_1 $, and $ \tau_2 $.  Figure~\ref{fig:step-stress} describes the stress profile applied during the experiment.
Consistent with the assumption that higher stress accelerates degradation, the stress profile is assumed to increase monotonically.

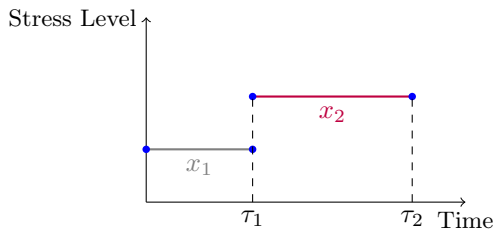
\begin{figure}[htb]
\centering
\begin{tikzpicture}[scale=0.7]
  \draw[->] (0, 0) -- (6, 0) node[below] {\small Time};
  \draw[->] (0, 0) -- (0, 3.5) node[left] {\small Stress Level};

  \draw[thick, gray] (0, 1) -- (2, 1) node[pos=0.5, below] {$x_1$};
  \draw[thick, purple] (2, 2) -- (5, 2) node[pos=0.5, below] {$x_2$};

  \fill[blue] (0, 1) circle(2pt);
  \fill[blue] (2, 1) circle(2pt);
  \fill[blue] (2, 2) circle(2pt);
  \fill[blue] (5, 2) circle(2pt);

  \node at (2, -0.3) {$\tau_1$};
  \node at (5, -0.3) {$\tau_2$};

  \draw[dashed] (2, 0) -- (2, 2);
  \draw[dashed] (5, 0) -- (5, 2);
\end{tikzpicture}
  \caption{Step-stress test design with Type-I censoring.}
  \label{fig:step-stress}
\end{figure}

Let $n$ units be placed under test, and denote their observed failure times by $ t_j, \, j=1,\ldots,n $.  Units that do not fail before the termination time, $\tau_2$ are right-censored, consistent with a Type-I censoring scheme.
Moreover,  let $ n_i $ denote the number of failures observed under a stress level $ x_i $, $ i=1,2 $. At the end of the experiment, the number of censored observations is  
$
n - n_1 - n_2,
$
and the ordered failure times satisfy  
\begin{equation}
\tau_0 = 0 < t_{1:n} < \cdots < t_{n_1:n} < \tau_1 < t_{n_1+1:n} < \cdots < t_{n_1+n_2:n} < \tau_2.
\label{eq:order}
\end{equation}
Assuming a Weibull distribution for the lifetimes with scale $\lambda > 0$ and shape $\eta >0$, the cumulative distribution function (c.d.f.) of a device lifetime, $T,$ is given by
\begin{equation}
	F(t|\lambda,\eta)=1-\exp\left(-\left(\frac{t}{\lambda}\right)^{\eta}\right)  , \quad t>0. \nonumber
\end{equation}
Since the step-stress testing procedure involves changing the stress level applied to the units during the experiment, it is necessary to specify two models for the analysis; one linking the stress level to the lifetime distribution, and another describing how the distribution evolves as the stress level increases. Several approaches have been proposed in the literature; among them, the log-linear relationship is commonly adopted for the stress–lifetime link, and the cumulative exposure model (CEM) for the stress-transition mechanism.

Under the assumption that higher stress levels accelerate product degradation, the parameters of the Weibull distribution under constant stress are expected to be stress-dependent. In the step-stress literature employing the Weibull model, the scale parameter is commonly modeled as a log-linear function of the stress level, while the shape parameter is typically assumed to remain constant across stress conditions.
 That is, the scale of the Weibull distribution under stress level $x_i$ is given by
\begin{equation}
	\lambda_i = \exp(a_0 + a_1 x_i)\ \text{for } i=1, 2,
	\label{eq:funcrelation}
\end{equation}
where the regression vector $(a_0, a_1) \in \mathbb{R} \times \mathbb{R}^{-}.$ The restricted  parameter space reflects the expected decrease in mean lifetime with increased stress.

On the other hand, under the CE model, the c.d.f. during the experiment is built as a continuous piecewise function based on the c.d.f. at constant stress, where accumulated wear from prior stresses is carried over to subsequent levels by adjusting their respective distributions. Therefore, the c.d.f. for device lifetimes under simple SSALT, $F_T$ , is given by
\begin{equation}
	F_T(t) = \begin{cases}
		F_{x_1}(t)  = 1-\exp\left(-\left(-\frac{t}{\lambda_1}\right)^{\eta}\right)  & 0 < t < \tau_1\\ \nonumber
		F_{x_2}(t+h) = 1-\exp\left(-\frac{1}{\lambda_2^\eta} \left(t+h\right)^{\eta}\right) & \tau_1 \leq t < \tau_2  \nonumber
	\end{cases},
\end{equation}
where $F_{x_i}$ denotes the c.d.f. under stress level $x_i, i =1,2,$ and the shifting time $h$ ensures the continuity of the c.d.f.,
\begin{equation}
F_{x_1}(\tau_1) = F_{x_2}(\tau_1 + h).  \nonumber
\end{equation}
The cumulative distribution function can be extended to infinity by simply setting $\tau_2=\infty.$
The shifting time $h$ can be directly calculated using the properties of these scale parameters as
\begin{equation}
h = \frac{\lambda_2}{\lambda_1} \tau_1 - \tau_1.   \nonumber
\end{equation}


From the above c.d.f., we directly obtain the probability density function (p.d.f.) $f_T$ under the step-stress experiment as follows
\begin{equation}
f_T(t \mid \lambda_1,\lambda_2,\eta)=
\begin{cases}
\begin{aligned}
f_{x_1}(t) &=
\frac{\eta}{\lambda_1}
\left(\frac{t}{\lambda_1}\right)^{\eta-1}
\exp\!\left[
-\left(\frac{t}{\lambda_1}\right)^{\eta}
\right],
\\[-0.2em]
&\hfill 0 \le t < \tau_1
\end{aligned}
\\[0.6em]
\begin{aligned}
f_{x_2}(t) &=
\frac{\eta}{\lambda_2}
\left(
\frac{t + \frac{\lambda_2}{\lambda_1}\tau_1 - \tau_1}
{\lambda_2}
\right)^{\eta-1}
\\
& \times
\exp\!\left[
-\left(
\frac{t + \frac{\lambda_2}{\lambda_1}\tau_1 - \tau_1}
{\lambda_2}
\right)^{\eta}
\right],
\\[-0.2em]
&\hfill \tau_1 \le t < \infty
\end{aligned} \nonumber
\end{cases}
\end{equation}

Now, given the observed failure times $t_{1:n}, \dots, t_{n_1:n}, t_{n_1+1:n}, \dots, t_{n_1+n_2:n}=\bm{T}$, the log-likelihood of the SSALT model is given by
\begin{align}
\ell(\lambda_1,\lambda_2,\eta;
\bm{T})
&= \log(n!)
 - \log\bigl((n-n_1-n_2)!\bigr)
\nonumber\\
&
+ \sum_{i=1}^{n_1}
\log\!\bigl(f_{x_1}(t_i \mid \lambda_1,\eta)\bigr)
+\sum_{i=n_1+1}^{n_1+n_2}
\log\!\bigl(f_{x_2}(t_i+h \mid \lambda_1,\lambda_2,\eta)\bigr)
\nonumber\\
&
+ (n-n_1-n_2)
\log\!\Bigl(
1 - F_{x_2}(\tau_2+h \mid \lambda_1,\lambda_2,\eta)
\Bigr)  \nonumber
\\ 
&= \log(n!)
 - \log\bigl((n-n_1-n_2)!\bigr)
+ (n_1+n_2)\log(\eta)\nonumber
\\
&- \eta\!\left(
n_1\log(\lambda_1)
+ n_2\log(\lambda_2)
\right)+ (\eta-1)\!\left(
\sum_{i=1}^{n_1}\frac{t_i}{\lambda_1}
+ \sum_{i=n_1+1}^{n_1+n_2}\frac{t_i+h}{\lambda_2}
\right)
\nonumber\\
&
- \sum_{i=1}^{n_1}
\left(\frac{t_i}{\lambda_1}\right)^{\eta}
- \sum_{i=n_1+1}^{n_1+n_2}
\left(\frac{t_i+h}{\lambda_2}\right)^{\eta}- (n-n_1-n_2)
\left(\frac{\tau_2+h}{\lambda_2}\right)^{\eta}.
\label{eq:likelihood}
\end{align}

Therefore, the MLE for the SSALT is defined as
\begin{equation*}
	\left(\hat{\lambda}_1^{MLE}, \hat{\lambda}_2^{MLE},\hat{\eta}^{MLE} \right) = \arg \max_{(\lambda_1, \lambda_2,\eta) \in \mathbb{R}^{3+}} \ell(\lambda_1, \lambda_2,\eta ; \bm{T}),
	\label{eq:max_likelihood_lambda}
\end{equation*}
Importantly, note that the MLEs for $ \lambda_1 $ and $ \lambda_2 $ can only be obtained simultaneously if $ n_1 \geq 1 $ and $ n_2 \geq 1 $. Therefore, the existence of the MLE requires that failures occur in both intervals, a condition that will be assumed throughout this paper.

As noted above, we adopt a log-linear link between stress and the Weibull scale parameter$\lambda_i = \exp(a_0 + a_1 x_i)\ \text{for } i=1, 2,$ 
which models the stress–lifetime relationship and enables extrapolation to nominal operating conditions. Besides, when more than two step-stress levels are used, this parametrization reduces dimensionality by replacing a separate scale at each level with the two regression coefficients. Therefore, under the assumption of common Weibull shape parameter $\eta$ across stress levels, the SSALT parameter vector is $\boldsymbol{\theta} = (a_0,a_1,\eta),$ and the corresponding MLE is given by
\begin{align}
	\boldsymbol{\widehat{\theta}}^{MLE} =& \left(\hat{a}_0^{MLE}, \hat{a}_1^{MLE},\hat{\eta}^{MLE} \right)  \arg \max_{\bm{\theta} \in A} \ell(\exp(a_0 + a_1 x_1), \exp(a_0 + a_1 x_2),\eta ; \bm{T}), \nonumber
	\label{eq:max_likelihood_params}
\end{align}
with
$A=\left\{ (a_0, a_1,\eta) \in \mathbb{R} \times  \mathbb{R}^{-} \times  \mathbb{R}^{+} \right\}$. 


 \section{The minimum density power divergence estimator \label{sec:MDPDE}}

Divergence-based estimators form a broad class of estimators obtained by minimizing a statistical divergence between the empirical distribution of the observed data and a postulated parametric model, rather than maximizing the assumed model likelihood. Different divergences can be selected to trade off robustness and efficiency. For instance, $\phi$-divergences often lead to efficient estimators under correct specification, whereas the density power divergence (DPD) and the log-DPD yield robust estimators under model misspecification or data contamination. In particular, using the Kullback–Leibler divergence, which is a special case within these  families, results in the MLE as a minimum-distance solution.
Divergences are typically defined in terms of the empirical c.d.f. for continuous data or in terms of the empirical probability vector for discrete data. To the best of our knowledge, only a few studies have addressed the mixed-distribution case. For instance, \cite{Nair2007} defined the Shannon entropy for mixed distributions as a combination of its continuous and discrete components, and a similar approach was followed in \cite{jaenada2025}.
In this section, we derive the MDPDE for the step-stress model under continuous monitoring. To that end, we define the DPD family (which contains the MLE as the limit 
for the mixed distribution induced by our observation scheme—namely, a continuous component for exact failure times and a point mass at the termination time $\tau_2$
for right-censored units.

Before introducing the Kullback-Leibler divergence, let us motivate the use of a mixed distribution for the observable lifetime. Although units are continuously monitored during the test, at termination only the number of survivors is recorded; their exact failure times are unobserved. Consequently, the observed time $T^{\ast} = \operatorname{min}\left(T, \tau_2\right)$ follows a mixed distribution, denoted by $F_{T^\ast},$ having a continuous component on during the experimental time $(0,\tau_2)$ with density $f_T$, and an atom at $\tau_2$ of size $ 1 - F_T(\tau_2). $ This discrete mass equals the probability that a unit survives until the end of the experiment.
Therefore, for Weibull lifetimes, the observable c.d.f. is defined as:
\begin{equation}
F_{T^{\ast}}(t \mid \boldsymbol{\lambda},\eta)=
\begin{cases}
0, \quad t < 0,
\\
\begin{aligned}
F_{x_1}(t)
&= 1 - \exp\!\left[
-\left(\dfrac{t}{\lambda_1}\right)^{\eta}
\right],
\\[-0.2em]
& 0 < t < \tau_1,
\end{aligned}
\\[0.6em]
\begin{aligned}
F_{x_2}(t+h)
&= 1 - \exp\!\left[
-\dfrac{1}{\lambda_2^{\eta}}
\left(
t + h
\right)^{\eta}
\right],
\\
& \tau_1 \le t < \tau_2,
\end{aligned}
\\
1, \quad \tau_2 \le t.
\end{cases}
\label{eq:funF}
\end{equation}

where $\boldsymbol{\lambda} =\left(\lambda_1, \lambda_2 \right) $ and $\lambda_i, i =1,2$  follows the log-linear relation stated in \eqref{eq:funcrelation}.
Similarly, the true underlying c.d.f. of the lifetime, truncated at the end of the experiment will also be given as a piecewise distribution of the form
\begin{equation}
	G_{T^\ast}(t)=
	\begin{cases}
		0 \quad & t<0
		\\
		G_{x_1}(t) \quad & 0 \leq t <\tau_1
		\\
		G_{x_2}(t) \quad & \tau_1 \leq t < \tau_2
		\\
		1 \quad & \tau_2 \leq t
	\end{cases},
\label{eq:funG}
\end{equation}
where $G_{x_1}(\cdot)$ and $G_{x_2}(\cdot)$ denote the c.d.f.s under constant stresses $x_1$ and $x_2$, respectively.
Note that, rather than assuming a specific model relating the c.d.f. at different stresses, we only consider that it may change when the stress level varies.
\\
Besides, for mathematical simplicity, for $0 < t <\tau_1$
\begin{align}
	f_{x_1}(t|\lambda_1,\eta)&=\frac{\eta}{\lambda_1^{\eta}}t^{\eta-1}\exp\left[-\left(\frac{t}{\lambda_1}\right)^{\eta}\right] \nonumber
	\\
	 g_{x_1}(t)&=\frac{\partial G_{x_1}(t)}{\partial t} \nonumber,
\end{align}
and for $\tau_1<t<\tau_2$
\begin{align}
	f_{x_2}(t+h|\boldsymbol{\lambda},\eta)&=\frac{\eta}{\lambda_2^{\eta}}\left(t+h\right)^{\eta-1}\exp\left[-\left(\frac{t+h}{\lambda_2}\right)^{\eta}\right] , \nonumber
	\\
	g_{x_2}(t)&=\frac{\partial G_{x_2}(t)}{\partial t} \nonumber . 
\end{align}
These functions are not proper density functions themselves, but are proportional to them up to a constant factor, as they represent the continuous component of the mixed distribution.
From the above, we can define the Kullback-Leibler divergence between the mixed distributions $F_{T^\ast}(\cdot|\boldsymbol{\lambda},\eta)$ and $G_{T^\ast}(\cdot)$ as defined in \eqref{eq:funF} and \eqref{eq:funG} is given by
\begin{equation}
\begin{aligned}
d_{\mathrm{KL}}\!\left(
G_{T^{\ast}}, F_{T^{\ast}}
\right)
&=
\int_{0}^{\tau_1}
g_{x_1}(t)
\log\!\frac{g_{x_1}(t)}
{f_{x_1}(t \mid \lambda_1)}
\, dt
\\
&
+ \int_{\tau_1}^{\tau_2}
g_{x_2}(t)
\log\!\frac{g_{x_2}(t)}
{f_{x_2}(t+h \mid \boldsymbol{\lambda},\eta)}
\, dt
\\
&
+ \bigl(1-G_{x_2}(\tau_2)\bigr)
\\
&\times
\log\!\frac{1-G_{x_2}(\tau_2)}
{1-F_{x_2}(\tau_2+h \mid \boldsymbol{\lambda},\eta)} .
\end{aligned}
\label{eq:kl_def}
\end{equation}

By separating terms that do not depend on the unknown parameters, expression \eqref{eq:kl_def} can be rewritten as follows
\begin{equation}
\begin{aligned}
d_{\mathrm{KL}}\!\left(
G_{T^{\ast}}, F_{T^{\ast}}
\right)
&=
- \int_{0}^{\tau_1}
\log f_{x_1}(t \mid \lambda_1)\,
dG_{x_1}(t)
\\
&
\quad- \int_{\tau_1}^{\tau_2}
\log f_{x_2}(t+h \mid \boldsymbol{\lambda},\eta)\,
dG_{x_2}(t)
\\
&
\quad- \bigl(1-G_{x_2}(\tau_2)\bigr)
\log\!\bigl(
1-F_{x_2}(\tau_2+h \mid \boldsymbol{\lambda},\eta)
\bigr)
\\
&
\quad+ K .
\end{aligned}
\end{equation}

where K a constant that does not depends on the Weibull parameters $(\boldsymbol{\lambda}, \eta).$

Minimum divergence estimates of the SSALT parameters $ \boldsymbol{\theta} = (a_0, a_1, \eta)^T$ are those that make the assumed parametric  c.d.f. fit the true distribution as closely as possible, as measured by the Kullback-Leibler divergence. Because the true c.d.f.'s $G_{x_1}(\cdot)$ and $G_{x_2}(\cdot)$ are unknown, given a ordered random sample from the experiment, $(t_{1:n}, ..., t_{n1+n2:n})$, we can approximate them by their corresponding empirical distribution functions associated. Then we have
\begin{align}
d_{\mathrm{KL}}\bigl(G_{T^\ast}, F_{T^\ast}\bigr)
&\approx
- \frac{n_1}{n} \sum_{i=1}^{n_1} \log f_{x_1}(t_{i:n} \mid \lambda_1)
\nonumber\\
&
- \frac{n_2}{n} \sum_{i=n_1+1}^{n_1+n_2} \log f_{x_2}(t_{i:n}+h \mid \boldsymbol{\lambda},\eta)
\nonumber\\
&
- \frac{n-n_1-n_2}{n} \log\bigl(1 - F_{x_2}(\tau_2+h \mid \boldsymbol{\lambda},\eta)\bigr)
+ K \nonumber
\\
&= -\frac{n_1+n_2}{n} \log(\eta)
+ \frac{\eta}{n}\bigl(n_1 \log(\lambda_1) + n_2 \log(\lambda_2)\bigr)
\nonumber\\
&
- \frac{\eta-1}{n} \left(
\sum_{i=1}^{n_1} \frac{t_i}{\lambda_1}
+ \sum_{i=n_1+1}^{n_1+n_2} \frac{t_i+h}{\lambda_2}
\right)
\nonumber\\
&
+ \frac{1}{n} \sum_{i=1}^{n_1} \left(\frac{t_i}{\lambda_1}\right)^\eta
+ \frac{1}{n} \sum_{i=n_1+1}^{n_1+n_2} \left(\frac{t_i+h}{\lambda_2}\right)^\eta
\nonumber\\
&
+ \frac{n-n_1-n_2}{n} \left(\frac{\tau_2+h}{\lambda_2}\right)^\eta
+ K
\nonumber\\
&= K - \frac{1}{n} \ell(\lambda_1, \lambda_2, \eta; \bm{T}).
\end{align}
where $\ell(\lambda_1, \lambda_2,\eta ; n)$ is the log-likelihood of the model, as defined in \eqref{eq:likelihood}.
Therefore, the values of $(\lambda_1, \lambda_2,\eta)$ that minimizes the estimated Kullback-Leibler divergence is the same as the one that maximizes the log-likelihood function. Assuming the log-linear relationship \eqref{eq:funcrelation} and reparametrizing accordingly, we can equivalently defined the MLE as the minimum Kullback-Leibler estimator,
\begin{equation}
\left(\widehat{a}_0^{MLE}, \widehat{a}_1^{MLE},\widehat{\eta}^{MLE}\right)= \arg \min_{\bm{\theta} \in A}d_{\text{KL}} \left(G_{T^\ast}, F_{T^*}(\cdot|\lambda_1,\lambda_2,\eta)\right).
\label{eq:kl_min}
\end{equation}
with $A=\left\{ (a_0, a_1,\eta) \in \mathbb{R} \times  \mathbb{R}^{-} \times  \mathbb{R}^{+} \right\}$ and 
$\lambda_i=\exp\left(a_0+a_1x_i\right).$

As noted earlier, while the MLE enjoys attractive asymptotic properties, it lacks robustness: even mild contamination can severely affect the estimates. To mitigate this sensitivity, robust alternatives can be employed. Among various approaches, minimum DPD estimators offer an appealing trade-off in practice, balancing efficiency and robustness and allowing for a controlled trade-off through a tuning parameter $\beta$.
The DPD was considered for the first time in \cite{basu1997}, and have proved to perform well for reliability applications \cite{balakrishnan2023b,balakrishnan2023c,balakrishnan2023d,balakrishnan2023e,balakrishnan2024a,balakrishnan2024b,balakrishnan2025,jaenada2025,gonzalez2025}.

Following a similar approach to that used for the KL divergence for mixed distributions, we define the DPD between $G_{T^\ast}(\cdot)$ and $F_{T^\ast}(\cdot|\boldsymbol{\lambda},\eta)$ as a combination of the DPD between both continuous and discrete components as follows
\begin{align}
d_{\beta}\!\left(G_{T^\ast}(\cdot), F_{T^\ast}(\cdot|\boldsymbol{\lambda},\eta)\right) \nonumber
&=
d_{\beta}^{0,\tau_1}
+
d_{\beta}^{\tau_1,\tau_2}
+
d_{\beta}^{\tau_2},\nonumber
\end{align}

where
\begin{align}
d_{\beta}^{0,\tau_1}\nonumber
&=
\int_0^{\tau_1}
\Biggl[
f_{x_1}(t|\lambda_1,\eta)^{\beta+1}
-
\left(1+\frac{1}{\beta}\right)
f_{x_1}(t|\lambda_1,\eta)^{\beta}
g_{x_1}(t)\nonumber
\\
&+
\frac{1}{\beta}
g_{x_1}(t)^{\beta+1}
\Biggr]
\,dt,\nonumber
\\
d_{\beta}^{\tau_1,\tau_2}
&=
\int_{\tau_1}^{\tau_2}
\Biggl[
f_{x_2}(t+h|\boldsymbol{\lambda},\eta)^{\beta+1}
-
\left(1+\frac{1}{\beta}\right)
f_{x_2}(t+h|\boldsymbol{\lambda},\eta)^{\beta}
g_{x_2}(t)\nonumber
\\
&+
\frac{1}{\beta}
g_{x_2}(t)^{\beta+1}
\Biggr]
\,dt,\nonumber
\\
d_{\beta}^{\tau_2}
&=
\Bigl(1-F_{x_2}(\tau_2+h|\boldsymbol{\lambda},\eta)\Bigr)^{\beta+1}
-
\left(1+\frac{1}{\beta}\right)
\Bigl(1-F_{x_2}(\tau_2+h|\boldsymbol{\lambda},\eta)\Bigr)^{\beta}\nonumber
\\
&\times \Bigl(1-G_{x_2}(\tau_2)\Bigr)
+
\frac{1}{\beta}
\Bigl(1-G_{x_2}(\tau_2)\Bigr)^{\beta+1}.\nonumber
\end{align}

Now, given a ordered random sample from the SSALT, $(t_{i:n}, ..., t_{n1+n2:n})$, and substituting the true (unknown) c.d.f. for its empirical estimate, the DPD is given by
\begin{align}
	d_{\beta}\left(G_{T^\ast}(\cdot), F_{T^\ast}(\cdot|\boldsymbol{\lambda},\eta)\right)&\sim 
\int_{0}^{\tau_1} f_{x_1}(t|\lambda_1,\eta)^{\beta+1} dt \nonumber
\\&+ \int_{\tau_1}^{\tau_2} f_{x_2}(t+h|\boldsymbol{\lambda},\eta)^{\beta+1} dt
\nonumber\\
&
+ \Bigl(1-F_{x_2}(\tau_2+h|\boldsymbol{\lambda},\eta)\Bigr)^{\beta+1}
\nonumber\\
&
- \frac{\beta+1}{\beta n} \Biggl\{
\sum_{i=1}^{n_1} f_{x_1}(t_{i:n}|\lambda_1,\eta)^\beta \nonumber
\\
& + \sum_{i=n_1+1}^{n_1+n_2} f_{x_2}(t_{i:n}+h|\boldsymbol{\lambda},\eta)^\beta
\nonumber\\
&
+ (n-n_1-n_2)  \nonumber
\\
& \times \Bigl(1-F_{x_2}(\tau_2+h|\boldsymbol{\lambda},\eta)\Bigr)^{\beta}
\Biggr\}
+ K.   \nonumber
\end{align}
where $K$ is a constant independent of the Weibull parameters. In computing the minimum–divergence estimator, parameter-free terms (i.e., those involving only the true distribution) can be ignored on the optimization; hence the constant $K$ may be omitted from the objective, and the MDPDE  with  tuning parameter $\beta >0$ for a SSALT model 
 is defined as
\begin{equation*}
\left(\hat{\lambda}_1^{\beta},\hat{\lambda}_2^{\beta},\hat{\eta}^{\beta}\right)=arg \min_{\lambda_1, \lambda_2,\eta}H_{n}^{\beta}\left(\lambda_1, \lambda_2,\eta\right),
\end{equation*}
being
\begin{equation} \label{eq:H_n}
H_{n}^{\beta}\left(\lambda_1, \lambda_2,\eta\right)=h_{n}^{1,\beta}\left(\lambda_1, \lambda_2,\eta\right) -\frac{\beta+1}{\beta n} h_{n}^{2,\beta}\left(\lambda_1, \lambda_2,\eta\right) 
\end{equation}
where
\begin{align*}
h_{n}^{1,\beta}\left(\lambda_1, \lambda_2,\eta\right)&=\int_{0}^{\tau_1} f_{x_1}\left(t|\lambda_1,\eta\right)^{\beta+1}dt
\\
&+\int_{\tau_1}^{\tau_2}f_{x_2}\left(t+h|\lambda_1,\lambda_2,\eta\right)^{\beta+1}dt
\\
&+\left(1-F_{x_2}\left(\tau_2+h|\lambda_1,\lambda_2,\eta\right)\right)^{\beta+1},
\end{align*}
and
\begin{align*}
h_{n}^{2,\beta}\left(\lambda_1, \lambda_2,\eta\right)&=\sum_{i=1}^{n_1}f_{x_1}\left(t|\lambda_1,\eta\right)^{\beta}+\sum_{i=n_1+1}^{n_1+n_2} f_{x_2}\left(t|\lambda_1,\lambda_2,\eta\right)^{\beta}  \\
&+\left(n-n_1-n_2\right)\left(1-F_{x_2}\left(\tau_2+h|\lambda_1,\lambda_2,\eta\right)\right)^{\beta} .
\end{align*}
In the limit as $\beta \rightarrow 0$, the DPD reduces to the Kullback-Leiber divergence; consequently, the corresponding MDPDE at $\beta=0$ coincides with the MLE as the limiting case of the MDPDE family.
  
%

For estimation of the SSALT model parameters $\boldsymbol{\theta} = \left(a_0,a_1,\eta\right)^T$, closed-form expressions for the right-hand-side terms in \eqref{eq:H_n} under the Weibull model and log-linear stress–scale relationship are required. Since the Weibull scale parameter is expressed through the regression vector $(a_0,a_1)$, we will henceforth write 
$h_{n}^{i,\beta}\left(a_0,a_1,\eta\right), i =1,2$  to denote the corresponding functions after reparameterization.

Let us first introduce an auxiliary Lemma.
\begin{lemma}
Under Weibull lifetimes, we have
\begin{align}
\zeta_{\alpha,\beta}^{\tau_1}\left(\lambda_1,\lambda_2,\eta\right)&=\int_{0}^{\tau_1} \left(\frac{t}{\lambda_1}\right)^{\alpha} f_{x_1}\left(t|\lambda_1,\eta\right)^{\beta+1}dt  \nonumber \\
&=\left(\frac{\eta}{\lambda_1}\right)^{\beta}\frac{1}{\left(\beta+1\right)^{\frac{\alpha+\left(\eta-1\right)(\beta+1)+1}{\eta}}} \nonumber
\\
&\times \gamma\left(\frac{\alpha+(\eta-1)(\beta+1)+1}{\eta},\left(\frac{\tau_1}{\lambda_1}\right)^{\eta}(\beta+1)\right),
 \\
\zeta_{\alpha,\beta}^{\tau_1,\tau_2}\left(\lambda_1,\lambda_2,\eta\right)&=\int_{\tau_1}^{\tau_2}\left(\frac{t+h}{\lambda_2}\right)^{\alpha}f_{x_2}\left(t+h|\lambda_1,\lambda_2,\eta\right)^{\beta+1}dt \nonumber \\
&=\left(\frac{\eta}{\lambda_2}\right)^{\beta}\frac{1}{\left(\beta+1\right)^{\frac{\alpha+(\eta-1)(\beta+1)+1}{\eta} } }\nonumber
\\ 
&\times \left\{\gamma\left(\frac{\alpha+(\eta-1)(\beta+1)+1}{\eta} , \frac{\left(\tau_2+h\right)^{\eta}}{\lambda_2^{\eta}} (\beta+1) \right) \right. \nonumber \\
&\left. -\gamma \left(\frac{\alpha+(\eta-1)(\beta+1)+1}{\eta},\left(\frac{\tau_1}{\lambda_1}\right)^{\eta}(\beta+1)\right) \right\}.
\end{align}
Where $\gamma(s,a)$ is the incomplete Gamma function:
\begin{align*}
\gamma(s,a)=\int_0^{a} t^{s-1} \exp(-t)dt.
\end{align*}
\end{lemma}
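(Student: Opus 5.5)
The plan is a direct computation by a change of variables that turns each integral into an incomplete Gamma integral. Both integrals have the same structure. A power of the normalized time $s$ multiplies a Weibull density raised to $\beta+1$, which produces a factor $\exp\{-(\beta+1)s^{\eta}\}$. The substitution $u=(\beta+1)s^{\eta}$ then reduces everything to $\int u^{c-1}e^{-u}\,du$. The only differences between the two cases are the scale ($\lambda_1$ or $\lambda_2$) and the limits of integration.

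\emph{Step 1 (first integral).} Write $f_{x_1}(t|\lambda_1,\eta)=\frac{\eta}{\lambda_1}(t/\lambda_1)^{\eta-1}\exp[-(t/\lambda_1)^{\eta}]$. Put $m=\alpha+(\eta-1)(\beta+1)$. Then the integrand equals
\[
\left(\frac{\eta}{\lambda_1}\right)^{\beta+1}\left(\frac{t}{\lambda_1}\right)^{m}\exp\left[-(\beta+1)\left(\frac{t}{\lambda_1}\right)^{\eta}\right].
\]
Substitute $s=t/\lambda_1$, so that $dt=\lambda_1\,ds$ and the new range is $(0,\tau_1/\lambda_1)$. Next substitute $u=(\beta+1)s^{\eta}$, so that
\[
s=\left(\frac{u}{\beta+1}\right)^{1/\eta},\qquad ds=\frac{1}{\eta}(\beta+1)^{-1/\eta}u^{1/\eta-1}\,du .
\]
Collecting powers, $s^{m}\,ds=\frac{1}{\eta}(\beta+1)^{-(m+1)/\eta}u^{(m+1)/\eta-1}\,du$. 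The prefactor $(\eta/\lambda_1)^{\beta+1}\lambda_1\eta^{-1}$ simplifies to $(\eta/\lambda_1)^{\beta}$. The upper limit becomes $(\beta+1)(\tau_1/\lambda_1)^{\eta}$. This gives exactly
\[
\left(\frac{\eta}{\lambda_1}\right)^{\beta}(\beta+1)^{-(m+1)/\eta}\,\gamma\!\left(\frac{m+1}{\eta},(\beta+1)(\tau_1/\lambda_1)^{\eta}\right).
\]
I will note that convergence at $0$ requires $(m+1)/\eta>0$, that is, $\alpha+(\eta-1)(\beta+1)+1>0$. This holds for the values of $\alpha$ used later, for instance $\alpha\ge 0$ or $\alpha=\eta$ with $\eta>0$, and I will state it as an implicit condition.

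\emph{Step 2 (second integral).} With $f_{x_2}(t+h)=\frac{\eta}{\lambda_2}\left(\frac{t+h}{\lambda_2}\right)^{\eta-1}\exp[-((t+h)/\lambda_2)^{\eta}]$, I will use the substitution $s=(t+h)/\lambda_2$, so $dt=\lambda_2\,ds$. The integrand has the same form as in Step 1 with $\lambda_1$ replaced by $\lambda_2$. The same second substitution $u=(\beta+1)s^{\eta}$ then produces the prefactor $(\eta/\lambda_2)^{\beta}(\beta+1)^{-(m+1)/\eta}$ and the integral $\int u^{(m+1)/\eta-1}e^{-u}\,du$ over the transformed range. Splitting $\int_{a}^{b}=\gamma(\cdot,b)-\gamma(\cdot,a)$ gives the difference of incomplete Gamma functions.

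\emph{Step 3 (the lower limit).} This is the only point needing care. The upper limit is $(\beta+1)(\tau_2+h)^{\eta}/\lambda_2^{\eta}$ directly. The lower limit is $(\beta+1)((\tau_1+h)/\lambda_2)^{\eta}$. Using $h=\frac{\lambda_2}{\lambda_1}\tau_1-\tau_1$ gives $\tau_1+h=\lambda_2\tau_1/\lambda_1$, hence $(\tau_1+h)/\lambda_2=\tau_1/\lambda_1$. This matches the stated lower argument $(\tau_1/\lambda_1)^{\eta}(\beta+1)$ and reflects the continuity built into the cumulative exposure model. No convergence issue arises here, since the lower limit is positive.
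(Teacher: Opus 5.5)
Your computation is correct and matches the paper's proof. Your two substitutions $s=t/\lambda_1$ (resp.\ $s=(t+h)/\lambda_2$) followed by $u=(\beta+1)s^{\eta}$ compose to the paper's single substitution $l=(\beta+1)(t/\lambda_1)^{\eta}$ (resp.\ $l=(\beta+1)\bigl((t+h)/\lambda_2\bigr)^{\eta}$). The lower limit is identified through $\tau_1+h=\lambda_2\tau_1/\lambda_1$ just as in the appendix, and your exponent bookkeeping is cleaner than the paper's.

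Two of your side remarks on convergence need correcting.

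\emph{Step 1.} The condition $\alpha+(\eta-1)(\beta+1)+1>0$ does not follow from $\alpha\ge 0$ when $\eta<1$. For $\alpha=0$ it requires $\eta>\beta/(\beta+1)$. So state it as a genuine hypothesis, not as something automatic for the exponents used later.

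\emph{Step 2.} The same positivity is what justifies writing the integral as $\gamma(\cdot,b)-\gamma(\cdot,a)$. When $(m+1)/\eta\le 0$, each lower incomplete Gamma diverges, even though the integral over $[a,b]$ with $a>0$ is finite. This can happen for the negative exponents used later, e.g.\ $\alpha=-2$ with $\eta<(\beta+2)/(\beta+1)$. In that case the correct closed form uses upper incomplete Gamma functions: $\Gamma\bigl(\tfrac{m+1}{\eta},a\bigr)-\Gamma\bigl(\tfrac{m+1}{\eta},b\bigr)$.
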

\begin{proof}
See Appendix
\end{proof}

The next result gets the expression for $h_{n}^{1,\beta}\left(\lambda_1,\lambda_2,\eta\right)$ under Weibull lifetimes,
\begin{proposition}
For the Weibull-lifetime we have:
\begin{align*}
h_{n}^{1,\beta}(\lambda_1,\lambda_2,\eta)&=\left(\frac{\eta}{\lambda_1}\right)^{\beta}\frac{1}{(\beta+1)^{\frac{(\beta+1)(\eta-1)+1}{\eta}}}
\\
&\times \gamma\left(\frac{(\beta+1)(\eta-1)+1}{\eta}, \left(\frac{\tau_1}{\lambda_1}\right)^{\eta}(\beta+1)\right)
\\
&+\left(\frac{\eta}{\lambda_2}\right)^{\beta}\frac{1}{(\beta+1)^{\frac{(\beta+)(\eta-1)+1}{\eta}}}
\\
&\times \left\{\gamma\left(\frac{(\beta+1)(\eta-1)+1}{\eta},\frac{\left(\tau_2+h\right)^{\eta}}{\lambda_2^{\eta}}(\beta+1)\right)\right.
\\
&-\left. \gamma\left(\frac{(\eta-1)(\beta+1)+1}{\eta},\left(\frac{\tau_1}{\lambda_1}\right)^{\eta}(\beta+1)\right) \right\}
\\
&+\exp\left[-\frac{\left(\tau_2+h\right)^{\eta}}{\lambda_2^{\eta}}(\beta+1)\right].
\end{align*}

\label{thm:Hn}
\end{proposition}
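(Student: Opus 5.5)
The plan is to split $h_{n}^{1,\beta}$ into its three summands and handle each directly. The first two summands are the integrals of the Lemma with $\alpha=0$. The third summand is the survival term, which needs no integration.

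For the first integral, set $\alpha=0$ in the expression for $\zeta_{\alpha,\beta}^{\tau_1}$. This gives the exponent $\frac{(\eta-1)(\beta+1)+1}{\eta}$ and the upper limit $(\tau_1/\lambda_1)^\eta(\beta+1)$, which is exactly the first term claimed.

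For the second integral, set $\alpha=0$ in $\zeta_{\alpha,\beta}^{\tau_1,\tau_2}$. This yields the difference of two lower incomplete gamma functions. The upper argument is $(\tau_2+h)^\eta(\beta+1)/\lambda_2^\eta$. For the lower argument, I will use the identity $\tau_1+h=\frac{\lambda_2}{\lambda_1}\tau_1$, so that
\[
\left(\frac{\tau_1+h}{\lambda_2}\right)^{\eta}=\left(\frac{\tau_1}{\lambda_1}\right)^{\eta}.
\]
This is what makes the lower gamma argument coincide with the one in the first term.

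If the Lemma is to be re-verified rather than cited, the computation is as follows. Substitute $u=(\beta+1)\left((t+h)/\lambda_2\right)^\eta$, or $u=(\beta+1)(t/\lambda_1)^\eta$ on the first interval. Then write
\[
f^{\beta+1}=\left(\frac{\eta}{\lambda}\right)^{\beta+1}\left(\frac{s}{\lambda}\right)^{(\eta-1)(\beta+1)}e^{-(\beta+1)(s/\lambda)^\eta},
\]
with $s=t$ and $\lambda=\lambda_1$ on the first interval, or $s=t+h$ and $\lambda=\lambda_2$ on the second. The Jacobian is $ds=\frac{\lambda}{\eta}(\beta+1)^{-1/\eta}u^{1/\eta-1}\,du$, and it absorbs one factor $\eta/\lambda$. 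Collecting the powers of $u$ and of $(\beta+1)$ gives the exponent $s$ of the gamma function and the prefactor $(\eta/\lambda)^\beta(\beta+1)^{-s}$.

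For the third summand, the observable cdf gives $1-F_{x_2}(\tau_2+h)=\exp\!\left[-\left((\tau_2+h)/\lambda_2\right)^\eta\right]$. Raising this to the power $\beta+1$ gives the final exponential term.

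Summing the three pieces gives the stated expression. The only point requiring care is the lower limit of the second integral: one must check that the map $t\mapsto t+h$ sends $\tau_1$ to $\lambda_2\tau_1/\lambda_1$, so that the two gamma arguments match. The apparent typo ``$(\beta+)$'' in the statement is read as $(\beta+1)$.
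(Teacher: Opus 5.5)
Your proposal is correct and is essentially the paper's proof: write $h_n^{1,\beta}=\zeta_{0,\beta}^{\tau_1}+\zeta_{0,\beta}^{\tau_1,\tau_2}+\bigl(1-F_{x_2}(\tau_2+h)\bigr)^{\beta+1}$, evaluate the two integrals with the Lemma at $\alpha=0$ (using $\tau_1+h=\lambda_2\tau_1/\lambda_1$ to match the lower gamma argument), and compute the survival term directly. Your version also avoids slips in the paper's written proof: it has $(\eta/\lambda_1)^\beta$ instead of $(\eta/\lambda_2)^\beta$, drops a $+1$ in one gamma argument, and omits the exponent $\eta$ in the final exponential; note that both arguments tacitly require $(\beta+1)(\eta-1)+1>0$, since otherwise the first integral and $\gamma(\cdot,\cdot)$ diverge.
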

\begin{proof}
See Appendix
\end{proof}
And the expression for $h_{n}^{2,\beta}(\lambda_1,\lambda_2,\eta)$  is
\begin{align*}
h_{n}^{2,\beta}\left(\lambda_1,\lambda_2,\eta\right)
&=\sum_{i=1}^{n_1}\frac{\eta^{\beta}}{\lambda_1^{\eta \beta}}\,t_i^{\beta\left(\eta-1\right)}
\exp\left[-\beta\left(\frac{t_i}{\lambda_1}\right)^{\eta}\right] \\
& +\sum_{i=n_1+1}^{n_1+n_2} \frac{\eta^{\beta}}{\lambda_1^{\eta \beta}}\left(t_i+h\right)^{\beta\left(\eta-1\right)}\\
&\times\exp\left[-\beta\left(\frac{t_i+h}{\lambda_2}\right)^{\eta}\right] \\
& +\left(n-n_1-n_2\right)\exp\left[-\beta\frac{\left(\tau_2+h\right)^{\eta}}{\lambda_2^\eta} \right].
\end{align*}
\begin{definition}
The MDPDE with tuning parameter $\beta$ for the SSALT model under Weibull lifetime is defined by
\begin{align*}
\left(\hat{\lambda}_1^{\beta},\hat{\lambda}_2^{\beta},\hat{\eta}^{\beta}\right)=arg \min_{\lambda_1,\lambda_2,\eta}H_{n}^{\beta}\left(\lambda_1,\lambda_2,\eta\right).
\end{align*}
\end{definition}
where
\begin{equation}
\begin{aligned}
H_{n}^{\beta}(\lambda_1,\lambda_2,\eta)
&=\left(\frac{\eta}{\lambda_1}\right)^{\beta}
\frac{1}{(\beta+1)^{\frac{(\beta+1)(\eta-1)+1}{\eta}}}\\
&\gamma\!\left(
\frac{(\beta+1)(\eta-1)+1}{\eta},
(\beta+1)\left(\frac{\tau_1}{\lambda_1}\right)^{\eta}
\right)
\\
&+\left(\frac{\eta}{\lambda_1}\right)^{\beta}
\frac{1}{(\beta+1)^{\frac{(\beta+1)(\eta-1)+1}{\eta}}}
\\
&\Bigg[
\gamma\!\left(
\frac{(\beta+1)(\eta-1)+1}{\eta},
\frac{\left(\tau_2+h\right)^{\eta}}{\lambda_2^{\eta}}
(\beta+1)
\right)
\\
&
-\gamma\!\left(
\frac{(\eta-1)(\beta+1)+1}{\eta},
(\beta+1)\left(\frac{\tau_1}{\lambda_1}\right)^{\eta}
\right)
\Bigg]
\\
&+\exp\!\left[
-(\beta+1)\frac{\left(\tau_2+h\right)^{\eta}}{\lambda_2^{\eta}}
\right]
\\
&-\frac{\beta+1}{\beta n}
\Bigg[
\sum_{i=1}^{n_1}
\frac{\eta^{\beta}}{\lambda_1^{\eta\beta}}
t_i^{\beta(\eta-1)}
\exp\!\left[-\beta\left(\frac{t_i}{\lambda_1}\right)^{\eta}\right]
\\
&
+\sum_{i=n_1+1}^{n_1+n_2}
\frac{\eta^{\beta}}{\lambda_1^{\eta\beta}}
(t_i+h)^{\beta(\eta-1)}
\exp\!\left[-\beta\left(\frac{t_i+h}{\lambda_1}\right)^{\eta}\right]
\\
&
+(n-n_1-n_2)
\exp\!\left[
-\beta\frac{\left(\tau_2+h\right)^{\eta}}{\lambda_2^\eta}
\right].   \nonumber
\end{aligned}
\end{equation}

The MDPDE for $(a_0,a_1,\eta)$ is given by:

\begin{align*}
\left(\hat{a}_0^{\beta},\hat{a}_1^{\beta},\hat{\eta}^{\beta}\right)=arg \min_{a_0,a_1,\eta}H_{n}^{\beta}\left(\exp\left(a_0+a_1x_1\right),\exp\left(a_0+a_1x_2\right),\eta\right).
\end{align*}
\section{Asymptotic distribution of the minimum density power divergence estimator \label{sec:asymptotic}}

In this section, the asymptotic distribution of the MDPDEs for the SSALT model is presented.
\begin{theorem}\label{thm:asymp}
The MDPDE with tuning parameter $\beta$ for the SSALT model under Weibull lifetimes satisfies
\begin{equation*}
\sqrt{n}\left(\left(\hat{a}_0^{\beta},\hat{a}_1^{\beta},\hat{\eta}^{\beta}\right)^T-(a_0,a_1,\eta)\right)\xrightarrow[N \to \infty]{\mathscr{L}} \mathcal{N}\left(\bm{0},\bm{\Sigma}\right)
\end{equation*}
where the variance-covariance asymptotic matrix is given by
\begin{equation*}
\boldsymbol{\Sigma}=\boldsymbol{J}_{\beta}^{-1}\boldsymbol{K}_{\beta}\boldsymbol{J}_{\beta}^{-1},
\end{equation*}
the matrix $\boldsymbol{J}_{\beta} = \left(J^{\beta}(a_0,a_1, \eta)\right)_{i,j} i, j=1,2,3$ is the Jacobian matrix of the DPD-loss \eqref{eq:H_n} 
and
\begin{equation*}
\mathbf{K}_{\beta}=\mathbf{J}_{2\beta}- \bm{\xi}_{\beta}\bm{\xi}_{\beta}^{T};
\quad 
\bm{\xi}_{\beta}=\left(
		\xi_1(a_0), 
		\xi_2(a_1), 
		\xi_3(\eta)
\right)^T.
\end{equation*}
Where each component of $\bm{\xi}_{\beta}$ is defined as:
\begin{equation*}
\xi_{\beta}(s)=\int \frac{\partial \log\left[f_{T}(t|\bm{\theta})\right] }{\partial s} f_{T}(t|\bm{\theta})^{\beta+1} dt \quad s \in \left\{a_0,a_1,\eta\right\}
\end{equation*}

\end{theorem}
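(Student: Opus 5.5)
We would treat the MDPDE as an M-estimator and follow the standard route of \cite{basu1997}, adapted to the mixed distribution of $T^\ast=\min(T,\tau_2)$ as in \cite{jaenada2025}. The first step is to rewrite the DPD-loss \eqref{eq:H_n} as an empirical mean
$$H_n^\beta(\bm\theta)=\frac1n\sum_{j=1}^n V_\beta(T_j^\ast;\bm\theta),$$
where
$$V_\beta(t;\bm\theta)=h^{1,\beta}(\bm\theta)-\Bigl(1+\frac1\beta\Bigr)\Bigl[f_T(t|\bm\theta)^\beta\,\mathbb{1}\{t<\tau_2\}+\bigl(1-F_{x_2}(\tau_2+h|\bm\theta)\bigr)^\beta\mathbb{1}\{t=\tau_2\}\Bigr].$$
Here $f_T$ is the piecewise step-stress density and $\lambda_i=\exp(a_0+a_1x_i)$. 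This is possible because the counts $n_1,n_2,n-n_1-n_2$ are simply sums of indicators over the i.i.d. observations $T_j^\ast$. Differentiating in $\bm\theta$ yields the estimating equations $\frac1n\sum_j\psi_\beta(T_j^\ast;\bm\theta)=\bm 0$, with
$$\psi_\beta(t;\bm\theta)=u_{\bm\theta}(t)\,f^\ast(t|\bm\theta)^\beta-\bm\xi_\beta(\bm\theta).$$
In this expression $f^\ast$ is the density of $T^\ast$ with respect to Lebesgue measure on $(0,\tau_2)$ plus a point mass at $\tau_2$, and $u_{\bm\theta}=\partial\log f^\ast/\partial\bm\theta$. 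Since $\partial h^{1,\beta}/\partial\bm\theta=(\beta+1)\int u_{\bm\theta}f^{\ast\,\beta+1}$, the vector $\bm\xi_\beta$ in the statement is read as this integral taken with respect to the mixed dominating measure, so it includes the atom at $\tau_2$. Unbiasedness under the model, $E_{\bm\theta}[\psi_\beta(T^\ast;\bm\theta)]=0$, is then immediate.

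Next we would establish consistency of $\hat{\bm\theta}^\beta$. The model is smooth in $(a_0,a_1,\eta)$, and the closed forms in the Lemma and Proposition~\ref{thm:Hn} give smoothness of $h^{1,\beta}$. The plan is to show that the population loss $H^\beta(\bm\theta)=E[V_\beta(T^\ast;\bm\theta)]$ has a unique minimizer at the true $\bm\theta_0$, which holds because the DPD vanishes only when the mixed models coincide, together with identifiability of $(\lambda_1,\lambda_2,\eta)$ when $n_1,n_2\ge1$ with probability tending to one and $x_1\neq x_2$. On compact neighbourhoods, uniform convergence of $H_n^\beta$ to $H^\beta$ should follow from a uniform law of large numbers, since $f_T^\beta$ is bounded by an integrable envelope. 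For $\eta<1$ the density blows up at $0$, so we would restrict to $\eta$ in a compact subset of $(0,\infty)$ and check that $t^{\beta(\eta-1)}$ is integrable there.

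For asymptotic normality we would Taylor-expand the estimating equation around $\bm\theta_0$:
$$\bm 0=\frac1n\sum_j\psi_\beta(T_j^\ast;\bm\theta_0)+\Bigl[\frac1n\sum_j\frac{\partial\psi_\beta}{\partial\bm\theta^T}(T_j^\ast;\bar{\bm\theta})\Bigr](\hat{\bm\theta}-\bm\theta_0).$$
The first term, multiplied by $\sqrt n$, is asymptotically $\mathcal N(\bm 0,\mathrm{Var}\,\psi_\beta)$ by the CLT. Its variance is
$$E[u u^T f^{\ast\,2\beta}]-\bm\xi_\beta\bm\xi_\beta^T=\mathbf J_{2\beta}-\bm\xi_\beta\bm\xi_\beta^T=\mathbf K_\beta,$$
where $\mathbf J_\beta=\int u u^T f^{\ast\,\beta+1}$. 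The bracketed term converges in probability, by consistency and a dominated ULLN on second derivatives, to $-E[\partial\psi_\beta/\partial\bm\theta^T]$. Under the model this equals $\mathbf J_\beta$, since the terms involving $\partial u/\partial\bm\theta$ cancel against the derivative of $\bm\xi_\beta$, and $\mathbf J_\beta$ coincides with the Hessian of the DPD-loss, possibly up to the factor $(\beta+1)$, which cancels in the sandwich. Slutsky's theorem then gives the stated $\mathbf J_\beta^{-1}\mathbf K_\beta\mathbf J_\beta^{-1}$.

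The main obstacle is verifying the regularity conditions for this Weibull mixed model, namely the invertibility of $\mathbf J_\beta$ and the domination of the derivatives of $\psi_\beta$. The derivatives with respect to $\eta$ involve factors $\log t\,(t/\lambda)^\eta$, and the shift $h=(\lambda_2/\lambda_1-1)\tau_1$ couples $a_0$ and $a_1$ in the second interval. We would first bound all derivatives by polynomials in $t$, $t^{-1}$ and $|\log t|$ times $f_T^\beta$ on the bounded interval $(0,\tau_2]$, and show integrability using the incomplete-Gamma representations of the Lemma. Positive definiteness of $\mathbf J_\beta$ would be argued by showing that the scores in $a_0,a_1,\eta$ are linearly independent functions on $(0,\tau_1)\cup(\tau_1,\tau_2)$, which uses $x_1\neq x_2$.
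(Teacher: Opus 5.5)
Your proposal is correct and follows the same route as the paper. The paper's proof only invokes the M-estimation argument of \cite{basu1997}, with $\mathbf{J}_\beta$, $\mathbf{K}_\beta$ and $\bm{\xi}_\beta$ written as sums over the two continuous pieces on $(0,\tau_1)$ and $(\tau_1,\tau_2)$ plus the atom at $\tau_2$, which is exactly your mixed-measure reading; you additionally spell out the empirical-mean form of $H_n^\beta$, unbiasedness, consistency, the CLT and the Slutsky step that the paper leaves to the citation. One precision on the integrability check you flag near $t=0$: finiteness of $\mathbf{J}_{2\beta}$ (equivalently, a positive first argument of the incomplete Gamma functions in the Lemma with $\beta$ replaced by $2\beta$) requires the $\beta$-dependent bound $\eta>2\beta/(2\beta+1)$, not merely $\eta$ in a compact subset of $(0,\infty)$, so this condition should be added as a hypothesis of the theorem.
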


\begin{proof}
This result is based on Basu et al. \cite{basu1997}.
\end{proof}
In the following propositions we will establish explicit expression of the elements of $\mathbf{J}_{\beta}$ and $\mathbf{\xi}_{\beta}$ for the SSALT model under Weibull lifetimes. All proofs are presented in the Appendix.

\begin{proposition}\label{thm:asympa0}
For the Weibull-lifetime, the value of the element $J_{11}^{\beta}(a_0)$ is
\begin{equation*}
J_{11}^{\beta}(a_0,a_1,\eta) = J_{11}^{\beta}(a_0)=J_{11,\tau_1}^{\beta}(a_0)+J_{11,\tau_1\tau_2}^{\beta}(a_0)+J_{11,\tau_2}^{\beta}(a_0),
\end{equation*}
with
\begin{equation}
\begin{aligned}
J_{11,\tau_1}^{\beta}(a_0)
&=\eta^2\Big[
\zeta_{0,\beta}^{\tau_1}(a_0,a_1,\eta)
+\zeta_{2\eta,\beta}^{\tau_1}(a_0,a_1,\eta)
\\
&-2\zeta_{\eta,\beta}^{\tau_1}(a_0,a_1,\eta)
\Big],
\\
J_{11,\tau_1\tau_2}^{\beta}(a_0)
&=\eta^{2}\Big[
\zeta_{0,\beta}^{\tau_1,\tau_2}(a_0,a_1,\eta)
+\zeta_{2\eta,\beta}^{\tau_1,\tau_2}(a_0,a_1,\eta)
\\
&-2\zeta_{\eta,\beta}^{\tau_1,\tau_2}(a_0,a_1,\eta)
\Big],
\\
J_{11,\tau_2}^{\beta}(a_0)
&=\eta^2
\left(
\frac{\tau_2+\frac{\lambda_2}{\lambda_1}\tau_1-\tau_1}{\lambda_2}
\right)^{2\eta}
\\
&\times
\exp\!\Bigg[
-(\beta+1)
\left(
\frac{\tau_2+\frac{\lambda_2}{\lambda_1}\tau_1-\tau_1}{\lambda_2}
\right)^{\eta}
\Bigg].
\end{aligned}  \nonumber
\end{equation}

\end{proposition}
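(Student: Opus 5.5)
Following Basu et al. (1997), I will use the standard MDPDE form of the $(1,1)$ entry:
\[
J_{11}^{\beta}=\int u_{a_0}(t)^2 f_{T^\ast}(t\mid\boldsymbol\theta)^{\beta+1}\,d\mu(t),
\]
where $u_{a_0}=\partial\log f_{T^\ast}/\partial a_0$. The integral is taken with respect to the mixed dominating measure, namely Lebesgue measure on $(0,\tau_2)$ plus a point mass at $\tau_2$. This splits $J_{11}^\beta$ into three pieces, one for $(0,\tau_1)$, one for $(\tau_1,\tau_2)$ and one for the atom $1-F_{x_2}(\tau_2+h)$. These are exactly $J^\beta_{11,\tau_1}$, $J^\beta_{11,\tau_1\tau_2}$ and $J^\beta_{11,\tau_2}$. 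The whole proof then reduces to computing the score with respect to $a_0$ on each piece and rewriting the resulting integrals through the functions $\zeta$ of the auxiliary Lemma.

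\emph{Scores.} Since $\partial\lambda_i/\partial a_0=\lambda_i$, differentiating in $a_0$ acts as $\lambda_i\,\partial/\partial\lambda_i$.

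On $(0,\tau_1)$, write $\log f_{x_1}=\log\eta-\eta\log\lambda_1+(\eta-1)\log t-(t/\lambda_1)^\eta$. This gives
\[
u_{a_0}=-\eta+\eta(t/\lambda_1)^\eta=\eta\bigl[(t/\lambda_1)^\eta-1\bigr].
\]

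On $(\tau_1,\tau_2)$, the key observation is that
\[
\frac{t+h}{\lambda_2}=\frac{t-\tau_1}{\lambda_2}+\frac{\tau_1}{\lambda_1},
\]
and $\lambda_2/\lambda_1=\exp(a_1(x_2-x_1))$ does not depend on $a_0$. Hence $\partial_{a_0}$ of $(t+h)/\lambda_2$ equals $-(t-\tau_1)/\lambda_2-\tau_1/\lambda_1=-(t+h)/\lambda_2$. Equivalently, $t+h$ scales with $\lambda_2$ under a shift of $a_0$. The same computation as in the first piece then gives
\[
u_{a_0}=\eta\bigl[((t+h)/\lambda_2)^\eta-1\bigr].
\]

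For the atom, $\log(1-F_{x_2}(\tau_2+h))=-((\tau_2+h)/\lambda_2)^\eta$, so
\[
u_{a_0}=\eta\,((\tau_2+h)/\lambda_2)^\eta.
\]

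\emph{Assembly.} I square each score and expand. Using $(a-1)^2=a^2-2a+1$ with $a=(t/\lambda_1)^\eta$, the first piece becomes
\[
\eta^2\bigl[\zeta^{\tau_1}_{2\eta,\beta}-2\zeta^{\tau_1}_{\eta,\beta}+\zeta^{\tau_1}_{0,\beta}\bigr],
\]
directly from the definition of $\zeta^{\tau_1}_{\alpha,\beta}$. The second piece gives the same combination with $\zeta^{\tau_1,\tau_2}$, since the weight $((t+h)/\lambda_2)^\alpha f_{x_2}(t+h)^{\beta+1}$ matches that definition exactly. For the atom, I multiply $\eta^2((\tau_2+h)/\lambda_2)^{2\eta}$ by
\[
(1-F_{x_2}(\tau_2+h))^{\beta+1}=\exp\bigl[-(\beta+1)((\tau_2+h)/\lambda_2)^\eta\bigr]
\]
and substitute $h=\frac{\lambda_2}{\lambda_1}\tau_1-\tau_1$, which yields the stated $J^\beta_{11,\tau_2}$.

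\emph{Main obstacle.} The delicate point is the middle interval. One must confirm that the $a_0$-dependence of the shift $h$ exactly compensates, so that the score collapses to the simple form above. A naive derivative treating $h$ as fixed would add spurious terms. I will check this carefully via the identity $(t+h)/\lambda_2=(t-\tau_1)/\lambda_2+\tau_1/\lambda_1$. A second point requiring care is justifying that Basu's $J$ extends to the mixed measure, with the atom contributing its own term. Taking $J$ as the expected negative Hessian of the DPD loss at the model, the cross term cancels under correct specification. This leaves $\int u u^T f^{\beta+1}$, summed over the continuous and discrete parts, as in Jaenada et al. (2025).
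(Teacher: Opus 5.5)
Your proposal is correct and follows the paper's proof essentially step for step. It uses the same three-way split of $J_{11}^\beta$ into the $(0,\tau_1)$, $(\tau_1,\tau_2)$ and atom contributions, the same scores $\eta[(t/\lambda_1)^\eta-1]$, $\eta[((t+h)/\lambda_2)^\eta-1]$ and $\eta((\tau_2+h)/\lambda_2)^\eta$, and the same expansion of the squares into $\zeta$-integrals; the paper obtains the middle score by the chain rule, using $\partial(\lambda_2/\lambda_1)/\partial a_0=0$.

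One correction to your commentary: $h=\tau_1\bigl(e^{a_1(x_2-x_1)}-1\bigr)$ does not depend on $a_0$ at all. Treating $h$ as fixed is therefore exact here, and there is nothing to compensate; that difficulty only arises for the $a_1$-derivatives. Likewise, it is $t+h$ that is invariant in $a_0$, not something that scales with $\lambda_2$.
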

\begin{proof}
See Appendix
\end{proof}
\begin{proposition}\label{thm:asympa1}
For the Weibull-lifetime, the value of the element $J_{22}^{\beta}(a_1)$ is
\begin{equation*}
J_{22}^{\beta}(a_0,a_1,\eta) = J_{22}^{\beta}(a_1)=J_{22,\tau_1}^{\beta}(a_1)+J_{22,\tau_1\tau_2}^{\beta}(a_1)+J_{22,\tau_2}^{\beta}(a_1),
\end{equation*}
with
\begin{equation}
\begin{aligned}
J_{22,\tau_1}^{\beta}(a_1)
&=x_1^2 J_{11,\tau_1}^{\beta}(a_0),
\\
J_{22,\tau_1\tau_2}^{\beta}(a_1)
&=\eta^2 x_2^2
\Big[
\zeta_{0,\beta}^{\tau_1,\tau_2}
+\zeta_{2\eta,\beta}^{\tau_1,\tau_2}
-2\zeta_{\eta,\beta}^{\tau_1,\tau_2}
\Big]
\\
&
+\frac{\tau_1^2}{\lambda_1^2}(x_2-x_1)^2
\Big[
(\eta-1)^2 \zeta_{-2,\beta}^{\tau_1,\tau_2}
+\eta^2 \zeta_{2(\eta-1),\beta}^{\tau_1,\tau_2}
\\
&
-2\eta(\eta-1)\zeta_{\eta-2,\beta}^{\tau_1,\tau_2}
\Big]
\\
&
+\frac{\tau_1}{\lambda_1}(x_2-x_1)x_2
\Big[
-2\eta(\eta-1)\zeta_{-1,\beta}^{\tau_1,\tau_2}
\\
&
+2\eta(2\eta-1)\zeta_{\eta-1,\beta}^{\tau_1,\tau_2}-2\eta^2\zeta_{2\eta-1,\beta}^{\tau_1,\tau_2}
\Big],
\\
J_{22,\tau_2}^{\beta}(a_1)
&=\eta^2
\left(
\frac{\tau_2+h}{\lambda_2}
\right)^{2\eta-2}
\left(
\frac{\frac{x_1\tau_1\lambda_2}{\lambda_1}
+x_2(\tau_2-\tau_1)}{\lambda_2}
\right)^2
\\
&\times
\exp\!\Bigg[
-(\beta+1)
\left(
\frac{\tau_2+h}{\lambda_2}
\right)^{\eta}
\Bigg].
\end{aligned}  \nonumber
\end{equation}

\end{proposition}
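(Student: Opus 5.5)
Following the structure used for $J_{11}^{\beta}(a_0)$ in Proposition~\ref{thm:asympa0}, we take $\mathbf{J}_\beta$ in the Basu et al. form. For the mixed distribution this means the $(2,2)$ entry splits into the three pieces of the DPD: the two continuous intervals and the atom at $\tau_2$. In each piece the entry is the squared score in $a_1$ integrated against the model mass raised to the power $\beta+1$:
\[
J_{22}^{\beta}=\int_0^{\tau_1}\Big(\tfrac{\partial \log f_{x_1}}{\partial a_1}\Big)^2 f_{x_1}^{\beta+1}dt+\int_{\tau_1}^{\tau_2}\Big(\tfrac{\partial \log f_{x_2}(t+h)}{\partial a_1}\Big)^2 f_{x_2}(t+h)^{\beta+1}dt+\Big(\tfrac{\partial \log S}{\partial a_1}\Big)^2 S^{\beta+1},
\]
with $S=1-F_{x_2}(\tau_2+h)$. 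The plan is to compute each score explicitly by the chain rule through $\lambda_i=\exp(a_0+a_1x_i)$, using $\partial\lambda_i/\partial a_1=x_i\lambda_i$. We then expand the square as a polynomial in powers of $u=(t+h)/\lambda_2$ (or $t/\lambda_1$) and recognise each monomial integral as one of the $\zeta$ functions of the auxiliary Lemma.

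On $(0,\tau_1)$ the density depends on $a_1$ only through $\lambda_1$, so
\[
\partial_{a_1}\log f_{x_1}=x_1\,\partial_{a_0}\log f_{x_1}=x_1\eta\big[(t/\lambda_1)^\eta-1\big].
\]
Hence $J_{22,\tau_1}^{\beta}=x_1^2J_{11,\tau_1}^{\beta}$ directly from Proposition~\ref{thm:asympa0}.

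For the atom, write $S=\exp(-u_2^{\eta})$ with $u_2=(\tau_2+h)/\lambda_2$, so that $\partial_{a_1}\log S=-\eta u_2^{\eta-1}\partial_{a_1}u_2$. Since $\lambda_2/\lambda_1=\exp(a_1(x_2-x_1))$, we have $\partial_{a_1}h=(x_2-x_1)\tau_1\lambda_2/\lambda_1$. This gives
\[
\partial_{a_1}u_2=\frac{\partial_{a_1}h}{\lambda_2}-x_2u_2=-\frac{x_1\tau_1\lambda_2/\lambda_1+x_2(\tau_2-\tau_1)}{\lambda_2},
\]
and squaring produces the stated $J_{22,\tau_2}^{\beta}$ once $S^{\beta+1}$ is written as an exponential.

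The middle interval is the main work. With $u=(t+h)/\lambda_2$, the log-density is
\[
\log f_{x_2}=\log\eta-\log\lambda_2+(\eta-1)\log u-u^\eta,
\]
so its $a_1$-score is $-x_2+\big[(\eta-1)u^{-1}-\eta u^{\eta-1}\big]\partial_{a_1}u$. Here $\partial_{a_1}u=c-x_2u$ with $c=(x_2-x_1)\tau_1/\lambda_1$. Substituting, the score collapses to
\[
\eta x_2(u^\eta-1)+c\big[(\eta-1)u^{-1}-\eta u^{\eta-1}\big].
\]
Squaring gives three groups of terms:
\begin{itemize}
\item $\eta^2x_2^2(u^{2\eta}-2u^\eta+1)$, which integrates to the $\zeta_{0}$, $\zeta_{\eta}$ and $\zeta_{2\eta}$ terms;
\item $c^2\big[(\eta-1)^2u^{-2}-2\eta(\eta-1)u^{\eta-2}+\eta^2u^{2\eta-2}\big]$, which gives the $\zeta_{-2}$, $\zeta_{\eta-2}$ and $\zeta_{2\eta-2}$ terms;
\item the cross term $2\eta x_2c\,(u^\eta-1)\big[(\eta-1)u^{-1}-\eta u^{\eta-1}\big]$. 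Expanded, this is $2\eta x_2c\big[-(\eta-1)u^{-1}+(2\eta-1)u^{\eta-1}-\eta u^{2\eta-1}\big]$, matching the stated coefficients.
\end{itemize}
Each monomial integrated against $f_{x_2}^{\beta+1}$ over $(\tau_1,\tau_2)$ is exactly $\zeta_{\alpha,\beta}^{\tau_1,\tau_2}$ for the appropriate $\alpha$, by the Lemma.

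The main obstacle is the bookkeeping of the $h$-dependence on $a_1$. This dependence is what creates the $c$-terms and makes the middle and atom pieces differ from the $a_0$ case; note that $\partial_{a_0}h=0$ because $h$ depends only on the ratio $\lambda_2/\lambda_1$. A secondary issue is checking that the Lemma's incomplete-gamma formula stays valid for negative $\alpha$ such as $\alpha=-2$. This holds because on $(\tau_1,\tau_2)$ we have $u$ bounded away from $0$, so the $\gamma$-difference is still the correct finite integral even when its first argument is nonpositive; the formula should be read as that difference rather than via two separately convergent incomplete gammas.
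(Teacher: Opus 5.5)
Your proposal is correct and follows essentially the same route as the paper. You split $J_{22}^{\beta}$ into the two continuous pieces and the atom at $\tau_2$, compute the $a_1$-scores by the chain rule through $\lambda_i=\exp(a_0+a_1x_i)$ (including $\partial_{a_1}h$), square them, and identify each monomial integral with a $\zeta_{\alpha,\beta}$; the only differences are cosmetic, since you group the squared middle-interval score into three blocks where the paper lists ten summands $c_1^*,\dots,c_{10}^*$, and your atom score correctly keeps the factor $x_1\tau_1\lambda_2/\lambda_1$ that the paper's intermediate display omits.
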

\begin{proof}
See Appendix
\end{proof}

\begin{proposition}\label{thm:asympeta}
For the Weibull-lifetime, the value of the element $J_{33}^{\beta}(\eta)$ is
\begin{equation*}
J_{33}^{\beta}(a_0,a_1,\eta) = J_{33}^{\beta}(\eta)=J_{33,\tau_1}^{\beta}(\eta)+J_{33,\tau_1\tau_2}^{\beta}(\eta)+J_{33,\tau_2}^{\beta}(\eta).
\end{equation*}
Where
\begin{equation}
\begin{aligned}
J_{33,\tau_1}^{\beta}(\eta)
&=\frac{1}{\eta^2}
\Big[
H_{0,0,\beta}^{\tau_1}
+2\eta H_{0,1,\beta}^{\tau_1}
+\eta^2 H_{0,2,\beta}^{\tau_1}
\Big]
\\
&
-\frac{2}{\eta}
\Big[
H_{\eta,1,\beta}^{\tau_1}
+\eta H_{\eta,2,\beta}^{\tau_1}
\Big]
+H_{2\eta,2,\beta}^{\tau_1},
\\[0.6em]
J_{33,\tau_1\tau_2}^{\beta}(\eta)
&=\frac{1}{\eta^2}
\Big[
H_{0,0,\beta}^{\tau_1,\tau_2}
+2\eta H_{0,1,\beta}^{\tau_1,\tau_2}
+\eta^2 H_{0,2,\beta}^{\tau_1,\tau_2}
\Big]
\\
&
-\frac{2}{\eta}
\Big[
H_{\eta,1,\beta}^{\tau_1,\tau_2}
+\eta H_{\eta,2,\beta}^{\tau_1,\tau_2}
\Big]
+H_{2\eta,2,\beta}^{\tau_1,\tau_2},
\\
J_{33,\tau_2}^{\beta}(\eta)
&=
\left(
\frac{\tau_2+\frac{\lambda_2}{\lambda_1}\tau_1-\tau_1}{\lambda_2}
\right)^{2\eta}
\left[
\log\!\left(
\frac{\tau_2+h}{\lambda_2}
\right)
\right]^2
\\
&\times
\exp\!\Bigg[
-(\beta+1)
\left(
\frac{\tau_2+h}{\lambda_2}
\right)^{\eta}
\Bigg].
\end{aligned}  \nonumber
\end{equation}

\end{proposition}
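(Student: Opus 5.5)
The plan is to use the representation of $\boldsymbol{J}_\beta$ from the theory of Basu et al.\ \cite{basu1997}, on which Theorem~\ref{thm:asymp} rests, adapted to the mixed distribution of $T^\ast$. At the true parameter, the Jacobian of the DPD-loss \eqref{eq:H_n} takes the form $\int \boldsymbol{u}_{\boldsymbol\theta}\boldsymbol{u}_{\boldsymbol\theta}^T f^{\beta+1}$, where $\boldsymbol{u}_{\boldsymbol\theta}=\partial\log f/\partial\boldsymbol\theta$. For the mixed law this integral splits into three pieces: an integral over $(0,\tau_1)$ with $f_{x_1}$, an integral over $(\tau_1,\tau_2)$ with $f_{x_2}(t+h)$, and the atom at $\tau_2$ with mass $1-F_{x_2}(\tau_2+h)$. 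The $(3,3)$ entry is therefore $J_{33,\tau_1}^\beta+J_{33,\tau_1\tau_2}^\beta+J_{33,\tau_2}^\beta$, and each piece only requires the squared $\eta$-score integrated against the $(\beta+1)$-th power of the corresponding density or mass. (I would first verify that the terms involving second derivatives of $\log f$ cancel in the mixed setting. This follows exactly as in the continuous case, because the atom term in $H_n^\beta$ has the same structure as a discrete DPD.)

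For the first interval, write $L=\log(t/\lambda_1)$ and $z=(t/\lambda_1)^\eta$. From $\log f_{x_1}=\log\eta-\eta\log\lambda_1+(\eta-1)\log t-z$ we get
\[
\frac{\partial \log f_{x_1}}{\partial\eta}=\frac1\eta+L-zL=\frac1\eta\bigl(1+\eta L\bigr)-zL .
\]
Squaring gives $\frac1{\eta^2}(1+2\eta L+\eta^2L^2)-\frac2\eta\, z(L+\eta L^2)+z^2L^2$. Multiplying by $f_{x_1}^{\beta+1}$ and integrating over $(0,\tau_1)$, each term has the form $H^{\tau_1}_{\alpha,k,\beta}=\int_0^{\tau_1}(t/\lambda_1)^\alpha\,[\log(t/\lambda_1)]^k f_{x_1}^{\beta+1}\,dt$ with $(\alpha,k)\in\{(0,0),(0,1),(0,2),(\eta,1),(\eta,2),(2\eta,2)\}$. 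This reproduces $J_{33,\tau_1}^\beta$ term by term. The functions $H$ generalize the $\zeta$ of the auxiliary Lemma by an additional log-power. After the substitution $s=(\beta+1)(t/\lambda_1)^\eta$ they become derivatives in the first argument of incomplete gamma functions, which is the natural way to give them closed form.

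For the second interval, the key observation is that $h=\tau_1(\lambda_2/\lambda_1-1)$ does not depend on $\eta$. The $\eta$-score of $f_{x_2}(t+h)$ is therefore the same expression as above with $t/\lambda_1$ replaced by $(t+h)/\lambda_2$. The identical expansion then yields $J_{33,\tau_1\tau_2}^\beta$ with $H^{\tau_1,\tau_2}_{\alpha,k,\beta}$ defined analogously on $(\tau_1,\tau_2)$. For the atom, $\log(1-F_{x_2}(\tau_2+h))=-((\tau_2+h)/\lambda_2)^\eta$ has $\eta$-derivative $-((\tau_2+h)/\lambda_2)^\eta\log((\tau_2+h)/\lambda_2)$. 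Squaring this and multiplying by the mass to the power $\beta+1$, namely $\exp[-(\beta+1)((\tau_2+h)/\lambda_2)^\eta]$, gives $J_{33,\tau_2}^\beta$ directly.

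The main obstacle is the first step: justifying rigorously that the Basu et al.\ form $\boldsymbol J_\beta=\int\boldsymbol u\boldsymbol u^T f^{\beta+1}$ holds for the mixed continuous/atom distribution. This requires checking that $\int \partial^2_{\eta} f\, f^{\beta}$ terms cancel piecewise, including at the discontinuity points $\tau_1,\tau_2$, and that differentiation under the integral is valid. The latter holds because the Weibull integrands are dominated by polynomial$\times\log^2\times\exp$ terms on compact intervals, and because the integration limits do not depend on $\eta$. The remaining algebra is routine.
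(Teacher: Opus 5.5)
Your proposal is correct and takes essentially the same route as the paper's proof. Like the paper, you split $\int \boldsymbol{u}_{\boldsymbol\theta}\boldsymbol{u}_{\boldsymbol\theta}^{T} f^{\beta+1}$ into the two intervals and the atom at $\tau_2$, square the $\eta$-score $\frac{1}{\eta}+L-zL$ (using that $h$ does not depend on $\eta$), match the six resulting terms to the functions $H_{\alpha,\gamma,\beta}$, and differentiate $-\left((\tau_2+h)/\lambda_2\right)^{\eta}$ for the atom. The step you call the main obstacle is in fact immediate: at $g=f$ the second-derivative terms cancel pointwise in each piece and in the atom, and the integration limits do not depend on $\boldsymbol\theta$, which is why the paper simply takes this form of $\boldsymbol{J}_\beta$ from Basu et al.
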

\begin{proof}
See Appendix
\end{proof}
\begin{proposition}
For the Weibull-lifetime, the value of the element $J_{12}^{\beta}(a_0,a_1)$ is
\begin{equation*}
J_{12}^{\beta}(a_0,a_1)=J_{12,\tau_1}^{\beta}(a_0,a_1)+J_{12,\tau_1\tau_2}^{\beta}(a_0,a_1)+J_{12,\tau_2}^{\beta}(a_0,a_1).
\end{equation*}
Where
\begin{equation}
\begin{aligned}
J_{12,\tau_1}^{\beta}(a_0,a_1)
&=\eta^2 x_1
\Big[
\zeta_{0,\beta}^{\tau_1}
+\zeta_{2\eta,\beta}^{\tau_1}
-2\zeta_{\eta,\beta}^{\tau_1}
\Big],
\\
J_{12,\tau_1\tau_2}^{\beta}(a_0,a_1)
&=\eta^2 x_2
\Big[
\zeta_{0,\beta}^{\tau_1,\tau_2}
+\zeta_{2\eta,\beta}^{\tau_1,\tau_2}
-2\zeta_{\eta,\beta}^{\tau_1,\tau_2}
\Big]
\\
&
+\frac{\tau_1}{\lambda_1}(x_2-x_1)
\Big[
-\eta(\eta-1)\zeta_{-1,\beta}^{\tau_1,\tau_2}
\\
&
+\eta(2\eta-1)\zeta_{\eta-1,\beta}^{\tau_1,\tau_2}-\eta^2\zeta_{2\eta-1,\beta}^{\tau_1,\tau_2}
\Big],
\\
J_{12,\tau_2}^{\beta}(a_0,a_1)
&=\eta^2
\left(
\frac{\tau_2+h}{\lambda_2}
\right)^{2\eta-1}
\frac{\frac{x_1\tau_1\lambda_2}{\lambda_1}
+x_2(\tau_2-\tau_1)}{\lambda_2}
\\
&\quad\times
\exp\!\Bigg[
-(\beta+1)
\left(
\frac{\tau_2+h}{\lambda_2}
\right)^{\eta}
\Bigg].
\end{aligned}  \nonumber
\end{equation}

\end{proposition}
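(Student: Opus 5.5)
The plan is to use the structure of $\mathbf{J}_\beta$ in the DPD theory of Basu et al., adapted to the mixed distribution of $T^\ast$ as in the definition of $d_\beta$. At the true parameter,
\[
J^{\beta}_{12}=\int_0^{\tau_1} u_{a_0}u_{a_1} f_{x_1}^{\beta+1}\,dt+\int_{\tau_1}^{\tau_2} u_{a_0}u_{a_1} f_{x_2}(t+h)^{\beta+1}\,dt+u^{\ast}_{a_0}u^{\ast}_{a_1}\bigl(1-F_{x_2}(\tau_2+h)\bigr)^{\beta+1},
\]
where $u_s=\partial\log f/\partial s$ on each continuous piece and $u^\ast_s=\partial\log(1-F_{x_2}(\tau_2+h))/\partial s$ at the atom. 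These three summands are the terms $J^{\beta}_{12,\tau_1}$, $J^{\beta}_{12,\tau_1\tau_2}$ and $J^{\beta}_{12,\tau_2}$. The work therefore reduces to computing the scores, multiplying them, and rewriting each integral through the functions $\zeta^{\tau_1}_{\alpha,\beta}$ and $\zeta^{\tau_1,\tau_2}_{\alpha,\beta}$ of the auxiliary Lemma. The same scheme gives Propositions \ref{thm:asympa0} and \ref{thm:asympa1}.

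\emph{First piece.} Write $z=(t/\lambda_1)^\eta$ with $\lambda_1=e^{a_0+a_1x_1}$. From
\[
\log f_{x_1}=\log\eta-\eta\log\lambda_1+(\eta-1)\log t-z
\]
we get $u_{a_0}=\eta(z-1)$ and $u_{a_1}=x_1\eta(z-1)$. Their product is $\eta^2x_1(1-2z+z^2)$, and integrating against $f_{x_1}^{\beta+1}$ gives $\eta^2x_1[\zeta_{0,\beta}^{\tau_1}+\zeta^{\tau_1}_{2\eta,\beta}-2\zeta^{\tau_1}_{\eta,\beta}]$ directly from the definition of $\zeta^{\tau_1}_{\alpha,\beta}$.

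\emph{Second piece.} This is the main obstacle, because the shift $h=\tau_1(\lambda_2/\lambda_1-1)$ depends on the parameters. Since $\lambda_2/\lambda_1=e^{a_1(x_2-x_1)}$, we have $\partial h/\partial a_0=0$ and $\partial h/\partial a_1=\tau_1(x_2-x_1)\lambda_2/\lambda_1$. Put $s=(t+h)/\lambda_2$, so that $\log f_{x_2}=\log\eta-\eta\log\lambda_2+(\eta-1)\log(t+h)-s^\eta$. Then $u_{a_0}=\eta(s^\eta-1)$. Carefully differentiating $(\eta-1)\log(t+h)$ and $s^\eta$ in $a_1$ should give
\[
u_{a_1}=\eta x_2(s^\eta-1)+\tfrac{\tau_1}{\lambda_1}(x_2-x_1)\bigl[(\eta-1)s^{-1}-\eta s^{\eta-1}\bigr].
\]
Multiplying by $\eta(s^\eta-1)$ and expanding, the $x_2$ part yields $\eta^2x_2(s^\eta-1)^2$. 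The cross part collects into
\[
-\eta(\eta-1)s^{-1}+\eta(2\eta-1)s^{\eta-1}-\eta^2s^{2\eta-1}.
\]
Integrating against $f_{x_2}(t+h)^{\beta+1}$ then produces exactly the $\zeta^{\tau_1,\tau_2}_{\alpha,\beta}$ combinations in the statement. The only care needed is bookkeeping of the powers of $s$ in this expansion.

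\emph{Atom term.} Here $1-F_{x_2}(\tau_2+h)=\exp(-s_2^\eta)$ with $s_2=(\tau_2+h)/\lambda_2$, so $u^\ast_s=-\eta s_2^{\eta-1}\,\partial s_2/\partial s$. We have $\partial s_2/\partial a_0=-s_2$, and
\[
\partial s_2/\partial a_1=\bigl[h'-x_2(\tau_2+h)\bigr]/\lambda_2=-\bigl[x_1\tau_1\lambda_2/\lambda_1+x_2(\tau_2-\tau_1)\bigr]/\lambda_2,
\]
using $h'=\partial h/\partial a_1$ and the explicit form of $h$ for the simplification. The product of the two scores times $\exp(-(\beta+1)s_2^\eta)$ is then
\[
\eta^2 s_2^{2\eta-1}\,\frac{x_1\tau_1\lambda_2/\lambda_1+x_2(\tau_2-\tau_1)}{\lambda_2}\,\exp\bigl(-(\beta+1)s_2^\eta\bigr),
\]
which is $J^{\beta}_{12,\tau_2}$. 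Summing the three pieces completes the proof.
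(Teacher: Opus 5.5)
Your proof is correct and follows the same route as the paper's appendix. You split $J_{12}^\beta$ into the two continuous pieces and the atom at $\tau_2$, compute the $a_0$- and $a_1$-scores (accounting for $\partial h/\partial a_1=\tau_1(x_2-x_1)\lambda_2/\lambda_1$), multiply and expand, and read off the $\zeta$ integrals. Your bookkeeping is cleaner than the appendix's: for example, you keep the factor $\lambda_2$ in $x_1\tau_1\lambda_2/\lambda_1$ in the atom score, which the appendix's displayed derivative drops even though the statement retains it.
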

\begin{proof}
See Appendix
\end{proof}
In order to simplify notation e introduce two functions previously defined in Basu et al. \cite{basu2017}.

\begin{lemma}
	For the Weibull-lifetime model, we have:
\begin{equation}
\begin{aligned}
H_{\alpha,\gamma,\beta}^{\tau_1}(a_0,a_1,\eta)
&:=\int_{0}^{\tau_1}
\left(\frac{t}{\lambda_1}\right)^{\alpha}
\left[
\log\!\left(\frac{t}{\lambda_1}\right)
\right]^{\gamma}
g_{x_1}(t)^{\beta+1}\,dt
\\
&=\lambda_1
\left(\frac{\eta}{\lambda_1}\right)^{\beta+1}
\int_{0}^{\frac{\tau_1}{\lambda_1}}
\Big[l^{\alpha+(\eta-1)(\beta+1)}
\big[\log(l)\big]^{\gamma}
\\
&\times
\exp\!\Big[
-(\beta+1)l^{\eta}
\Big]\,\Big]dl ,
\end{aligned}  \nonumber
\end{equation}
\begin{equation}
\begin{aligned}
H_{\alpha,\gamma,\beta}^{\tau_1,\tau_2}(a_0,a_1,\eta)
&:=\int_{\tau_1}^{\tau_2}\Big[
\left(
\frac{t+\frac{\lambda_2}{\lambda_1}\tau_1-\tau_1}{\lambda_2}
\right)^{\alpha}
\\
&\times
\left[
\log\!\left(
\frac{t+h}{\lambda_2}
\right)
\right]^{\gamma}
g_{x_2}(t)^{\beta+1}\,\Big]dt
\\
&=\lambda_2
\left(\frac{\eta}{\lambda_2}\right)^{\beta+1}
\int_{\frac{\tau_1}{\lambda_1}}^{\frac{\tau_2+h}{\lambda_2}}
\Big[l^{\alpha+(\eta-1)(\beta+1)}
\big[\log(l)\big]^{\gamma}
\\
&\times
\exp\!\Big[
-(\beta+1)l^{\eta}
\Big]\,\Big]dl .
\end{aligned}  \nonumber
\end{equation}

\end{lemma}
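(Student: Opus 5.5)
The plan is a change of variables in each integral, with $g_{x_i}$ replaced by the model density. This substitution is how the lemma is used: everything is evaluated at the true parameter, where $g_{x_1}=f_{x_1}(\cdot|\lambda_1,\eta)$ and $g_{x_2}(t)=f_{x_2}(t+h|\boldsymbol{\lambda},\eta)$, exactly as in the earlier Lemma giving $\zeta_{\alpha,\beta}^{\tau_1}$ and $\zeta_{\alpha,\beta}^{\tau_1,\tau_2}$.

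For the first identity I will substitute $l=t/\lambda_1$, so that $dt=\lambda_1\,dl$ and the range $(0,\tau_1)$ maps to $(0,\tau_1/\lambda_1)$. The density becomes
\[
f_{x_1}(t|\lambda_1,\eta)=\frac{\eta}{\lambda_1}\,l^{\eta-1}\exp(-l^{\eta}).
\]
Raising to the power $\beta+1$ gives $(\eta/\lambda_1)^{\beta+1}l^{(\eta-1)(\beta+1)}\exp[-(\beta+1)l^{\eta}]$. Multiplying by $l^{\alpha}[\log l]^{\gamma}$ and the Jacobian $\lambda_1$ yields exactly the stated integrand and prefactor $\lambda_1(\eta/\lambda_1)^{\beta+1}$.

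For the second identity I will substitute $l=(t+h)/\lambda_2$, so that $dt=\lambda_2\,dl$. Since $h=\frac{\lambda_2}{\lambda_1}\tau_1-\tau_1$, the lower limit $t=\tau_1$ maps to
\[
\frac{\tau_1+h}{\lambda_2}=\frac{\frac{\lambda_2}{\lambda_1}\tau_1}{\lambda_2}=\frac{\tau_1}{\lambda_1},
\]
which is the key algebraic observation, reflecting the continuity built into the CE model. The upper limit maps to $(\tau_2+h)/\lambda_2$. The density is
\[
f_{x_2}(t+h)=\frac{\eta}{\lambda_2}\,l^{\eta-1}\exp(-l^{\eta}),
\]
and the same bookkeeping gives the stated prefactor $\lambda_2(\eta/\lambda_2)^{\beta+1}$ and integrand.

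There is no real obstacle here. The only care needed is in identifying $g_{x_i}$ with the model density at the true parameter, in tracking the power of $\lambda_i$ (the factor $\lambda_i^{-(\beta+1)}$ from the density against the Jacobian $\lambda_i$), and in the limit computation above. No closed form is claimed when $\gamma\ge1$, so the integrals are left in $l$. As a consistency check, when $\gamma=0$ a further substitution $u=(\beta+1)l^{\eta}$ reduces both integrals to incomplete Gamma functions and recovers the earlier expressions for $\zeta_{\alpha,\beta}$.
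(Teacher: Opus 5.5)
Your proposal is correct and follows the paper's proof: the appendix substitutes $l=t/\lambda_1$ and $l=(t+h)/\lambda_2$ (so $dt=\lambda_i\,dl$), replaces $g_{x_i}$ by the Weibull model density, and computes the lower limit $t=\tau_1\mapsto\tau_1/\lambda_1$ exactly as you do. Your explicit tracking of the powers of $\lambda_i$ is cleaner than the appendix, whose intermediate line for $H^{\tau_1}$ misstates the power of $\lambda_1$ in the denominator but still arrives at the correct prefactor $\lambda_i(\eta/\lambda_i)^{\beta+1}$.
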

\begin{proof}
See Appendix
\end{proof}
\begin{proposition}
For the Weibull-lifetime, the value of the element $J_{13}^{\beta}(a_0,\eta)$ is
\begin{equation*}
J_{13}^{\beta}(a_0,\eta)=J_{13,\tau_1}^{\beta}(a_0,\eta)+J_{13,\tau_1\tau_2}^{\beta}(a_0,\eta)+J_{13,\tau_2}^{\beta}(a_0,\eta).
\end{equation*}
Where
\begin{equation}
\begin{aligned}
J_{13,\tau_1}^{\beta}(a_0,\eta)
&=
-\zeta_{0,\beta}^{\tau_1}
+\zeta_{\eta,\beta}^{\tau_1}
\\
&
-\eta
\Big[
H_{0,1,\beta}^{\tau_1}
-2 H_{\eta,1,\beta}^{\tau_1}
+H_{2\eta,1,\beta}^{\tau_1}
\Big],
\\
J_{13,\tau_1\tau_2}^{\beta}(a_0,\eta)
&=
-\zeta_{0,\beta}^{\tau_1,\tau_2}
+\zeta_{\eta,\beta}^{\tau_1,\tau_2}
\\
&
-\eta
\Big[
H_{0,1,\beta}^{\tau_1,\tau_2}
-2 H_{\eta,1,\beta}^{\tau_1,\tau_2}
+H_{2\eta,1,\beta}^{\tau_1,\tau_2}
\Big],
\\
J_{13,\tau_2}^{\beta}(a_0,\eta)
&=
-\eta
\log\!\left(
\frac{\tau_2+h}{\lambda_2}
\right)
\left(
\frac{\tau_2+h}{\lambda_2}
\right)^{2\eta}
\\
&\times
\exp\!\Bigg[
-(\beta+1)
\left(
\frac{\tau_2+h}{\lambda_2}
\right)^{\eta}
\Bigg].
\end{aligned}
\end{equation}

\end{proposition}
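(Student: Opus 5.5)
The plan is to use the Basu et al. form of $\mathbf J_\beta$ for a mixed model, as in the diagonal entries already computed in Propositions~\ref{thm:asympa0} and \ref{thm:asympeta}. Each entry is an integral of the product of score components against $f^{\beta+1}$ on the continuous part, plus the corresponding product times $(1-F)^{\beta+1}$ on the atom at $\tau_2$. Thus
\[
J_{13}^{\beta}=\int_0^{\tau_2} u_{a_0}(t)\,u_\eta(t)\, f_T(t)^{\beta+1}\,dt + u_{a_0}^{\ast}u_\eta^{\ast}\bigl(1-F_{x_2}(\tau_2+h)\bigr)^{\beta+1},
\]
where $u_s=\partial_s\log f_T$ and $u^\ast_s=\partial_s\log(1-F_{x_2}(\tau_2+h))$. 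The integral splits at $\tau_1$, which gives the three pieces $J_{13,\tau_1}$, $J_{13,\tau_1\tau_2}$ and $J_{13,\tau_2}$.

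\emph{First piece.} Write $z=t/\lambda_1$ and $\log f_{x_1}=\log\eta-\eta\log\lambda_1+(\eta-1)\log t-z^\eta$. Since $\partial\lambda_1/\partial a_0=\lambda_1$, we get $u_{a_0}=\eta(z^\eta-1)$ and $u_\eta=\tfrac1\eta+\log z\,(1-z^\eta)$. Their product is
\[
(z^\eta-1)-\eta\log z\,(1-z^\eta)^2=(z^\eta-1)-\eta\log z\,(1-2z^\eta+z^{2\eta}).
\]
Integrating against $f_{x_1}^{\beta+1}$ and recognizing the moments as $\zeta^{\tau_1}_{\alpha,\beta}$ (Lemma 1) and $H^{\tau_1}_{\alpha,1,\beta}$ yields the stated expression term by term.

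\emph{Second piece.} The same computation applies with $w=(t+h)/\lambda_2$. The key observation is that $(t+h)/\lambda_2=(t-\tau_1)/\lambda_2+\tau_1/\lambda_1$ depends on $a_0$ only through the common factor $e^{-a_0}$, because $\lambda_2/\lambda_1$ is free of $a_0$. Hence $\partial w/\partial a_0=-w$ exactly, and $u_{a_0}=\eta(w^\eta-1)$ takes the same form as before. For $u_\eta$, $h$ does not depend on $\eta$, so $u_\eta=\tfrac1\eta+\log w\,(1-w^\eta)$. The identical algebra then gives the $\zeta^{\tau_1,\tau_2}$ and $H^{\tau_1,\tau_2}$ terms, after the change of variable $l=w$ whose limits are $\tau_1/\lambda_1$ and $(\tau_2+h)/\lambda_2$, as in Lemma 1 and the $H$ lemma.

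\emph{Atom term.} Here $\log(1-F)=-w_2^\eta$ with $w_2=(\tau_2+h)/\lambda_2$. This gives $u^\ast_{a_0}=\eta w_2^\eta$ and $u^\ast_\eta=-w_2^\eta\log w_2$. Their product times $e^{-(\beta+1)w_2^\eta}$ is exactly $J_{13,\tau_2}$.

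The main obstacle is bookkeeping rather than any conceptual step. First, $\partial w/\partial a_0=-w$ must be verified carefully through $h$. Second, one must match the normalization of $H$ and $\zeta$: both include the factor $(\eta/\lambda)^{\beta+1}\lambda$ from the density, consistent with how $J_{11}$ was written. Finally, the signs must be checked, since the stated result groups $-\zeta_0+\zeta_\eta$ and $-\eta[H_{0,1}-2H_{\eta,1}+H_{2\eta,1}]$. The sign convention of the score, or a possible overall factor, should be reconciled with Proposition~\ref{thm:asympa0}.
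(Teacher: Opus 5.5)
Your proposal is correct and follows essentially the same route as the paper. You split $J_{13}^{\beta}$ into the $(0,\tau_1)$, $(\tau_1,\tau_2)$ and atom contributions, multiply the score components $\eta(w^\eta-1)$ and $1/\eta+\log w-w^\eta\log w$ (valid because $\lambda_2/\lambda_1$, and hence $h$, does not depend on $a_0$), and identify the resulting moments with $\zeta$ and $H$. Your factorization $(z^\eta-1)-\eta\log z\,(1-z^\eta)^2$ already gives exactly the stated signs, so the closing caveat about reconciling signs or an overall factor is unnecessary.
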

\begin{proof}
See Appendix
\end{proof}
\begin{proposition}
For the Weibull-lifetime, the value of the element $J_{23}^{\beta}(a_1,\eta)$ is
\begin{equation*}
J_{23}^{\beta}(a_1,\eta)=J_{23,\tau_1}^{\beta}(a_1,\eta)+J_{23,\tau_1\tau_2}^{\beta}(a_1,\eta)+J_{23,\tau_2}^{\beta}(a_1,\eta).
\end{equation*}
Where
\begin{equation}
\begin{aligned}
J_{23,\tau_1}^{\beta}(a_1,\eta)
&=x_1 J_{13,\tau_1}^{\beta}(a_0,\eta)
\\
&=x_1
\Big[
-\zeta_{0,\beta}^{\tau_1}
+\zeta_{\eta,\beta}^{\tau_1}
\\
&-\eta
\big(
H_{0,1,\beta}^{\tau_1}
-2H_{\eta,1,\beta}^{\tau_1}
+H_{2\eta,1,\beta}^{\tau_1}
\big)
\Big],
\\
J_{23,\tau_1\tau_2}^{\beta}(a_1,\eta)
&=
x_2
\Big[
-\zeta_{0,\beta}^{\tau_1,\tau_2}
+\zeta_{\eta,\beta}^{\tau_1,\tau_2}
\\
&-\eta
\big(
H_{0,1,\beta}^{\tau_1,\tau_2}
-2H_{\eta,1,\beta}^{\tau_1,\tau_2}
+H_{2\eta,1,\beta}^{\tau_1,\tau_2}
\big)
\Big]
\\
&
+\frac{\tau_1}{\lambda_1}(x_2-x_1)
\Big[
\frac{\eta-1}{\eta}\zeta_{-1,\beta}^{\tau_1,\tau_2}
-\zeta_{\eta-1,\beta}^{\tau_1,\tau_2}
\\
&
+(\eta-1)H_{-1,1,\beta}^{\tau_1,\tau_2}
-(2\eta-1)H_{\eta-1,1,\beta}^{\tau_1,\tau_2}
+\eta H_{2\eta-1,1,\beta}^{\tau_1,\tau_2}
\Big],
\\
J_{23,\tau_2}^{\beta}(a_1,\eta)
&=
-\eta
\frac{\frac{x_1\lambda_2\tau_1}{\lambda_1}
+x_2(\tau_2-\tau_1)}{\lambda_2}
\log\!\left(
\frac{\tau_2+h}{\lambda_2}
\right)
\\
&\times
\left(
\frac{\tau_2+h}{\lambda_2}
\right)^{2\eta-1}\exp\!\Bigg[
-(\beta+1)
\left(
\frac{\tau_2+h}{\lambda_2}
\right)^{\eta}
\Bigg].
\end{aligned}  \nonumber
\end{equation}

\end{proposition}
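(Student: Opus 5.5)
We compute $J_{23}^{\beta}$ in the same way as the diagonal entries and $J_{12}^{\beta}$, $J_{13}^{\beta}$. Following Basu et al., we use the representation
\[
J_{23}^{\beta}=\int u_{a_1}(t)\,u_{\eta}(t)\, f_{T^\ast}(t|\boldsymbol\theta)^{\beta+1}\,d\mu(t),
\]
where $u_s=\partial \log f_{T^\ast}/\partial s$ and $\mu$ is Lebesgue measure on $(0,\tau_2)$ plus a point mass at $\tau_2$. This splits the entry into three pieces, over $(0,\tau_1)$, over $(\tau_1,\tau_2)$, and at the atom $\tau_2$, matching $J_{23,\tau_1}^{\beta}+J_{23,\tau_1\tau_2}^{\beta}+J_{23,\tau_2}^{\beta}$. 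Write $z_1=t/\lambda_1$, $z_2=(t+h)/\lambda_2$ and $z^\ast=(\tau_2+h)/\lambda_2$. Each piece becomes a combination of the functions $\zeta_{\alpha,\beta}$ of Lemma 1 and $H_{\alpha,\gamma,\beta}$ of the last Lemma. The $\zeta$ terms come from pure powers of $z$; the $H$ terms come from one factor $\log z$, which always arises from $\partial/\partial\eta$.

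\textbf{First interval.} On $(0,\tau_1)$, $\log f_{x_1}=\log\eta-\log\lambda_1+(\eta-1)\log z_1-z_1^\eta$, and $\lambda_1=\exp(a_0+a_1x_1)$. The chain rule then gives $u_{a_1}=x_1u_{a_0}$, and hence $J_{23,\tau_1}^{\beta}=x_1J_{13,\tau_1}^{\beta}$ directly from the previous Proposition. For completeness we record
\[
u_{a_0}=-\eta(1-z_1^\eta),\qquad u_\eta=\tfrac1\eta+\log z_1-z_1^\eta\log z_1 .
\]
Their product expands to $-(1-z_1^\eta)-\eta(1-z_1^\eta)^2\log z_1$, which integrates to the stated bracket of $\zeta$ and $H$ terms.

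\textbf{Atom at $\tau_2$.} Here $\log(1-F)=-(z^\ast)^\eta$, so
\[
u_\eta=-(z^\ast)^\eta\log z^\ast,\qquad u_{a_1}=-\eta (z^\ast)^{\eta-1}\,\frac{\partial z^\ast}{\partial a_1}.
\]
Since $h=\tau_1(\lambda_2/\lambda_1-1)$, we have $z^\ast=(\tau_2-\tau_1)/\lambda_2+\tau_1/\lambda_1$. Differentiating gives
\[
\frac{\partial z^\ast}{\partial a_1}=-\frac{x_2(\tau_2-\tau_1)}{\lambda_2}-\frac{x_1\tau_1}{\lambda_1}.
\]
Multiplying by $(1-F)^{\beta+1}=\exp[-(\beta+1)(z^\ast)^\eta]$ yields $J_{23,\tau_2}^{\beta}$. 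The sign must be checked carefully against the stated minus sign.

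\textbf{Second interval (the main obstacle).} On $(\tau_1,\tau_2)$, $z_2=(t-\tau_1)/\lambda_2+\tau_1/\lambda_1$. Its derivative splits as
\[
\frac{\partial z_2}{\partial a_1}=-x_2z_2+\frac{\tau_1}{\lambda_1}(x_2-x_1),
\]
using $-x_2(t-\tau_1)/\lambda_2=-x_2 z_2+x_2\tau_1/\lambda_1$.

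The score is $u_{a_1}=-x_2+\bigl[(\eta-1)/z_2-\eta z_2^{\eta-1}\bigr]\partial z_2/\partial a_1$. This separates into
\[
u_{a_1}=x_2u_{a_0}^{(2)}+\frac{\tau_1}{\lambda_1}(x_2-x_1)\Bigl[\frac{\eta-1}{z_2}-\eta z_2^{\eta-1}\Bigr],
\]
where $u_{a_0}^{(2)}=-\eta(1-z_2^\eta)$ has the same form as in the first interval. The part proportional to $x_2$ reproduces $x_2$ times the $J_{13,\tau_1\tau_2}^{\beta}$ bracket.

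The remainder is $\frac{\tau_1}{\lambda_1}(x_2-x_1)\bigl[(\eta-1)z_2^{-1}-\eta z_2^{\eta-1}\bigr]\bigl[\tfrac1\eta+(1-z_2^\eta)\log z_2\bigr]$. Expanding gives the powers $z_2^{-1}$ and $z_2^{\eta-1}$, plus $\log z_2$ times $z_2^{-1}$, $z_2^{\eta-1}$ and $z_2^{2\eta-1}$. For example, the $z_2^{\eta-1}\log z_2$ coefficient is $-(\eta-1)-\eta=-(2\eta-1)$. Integrating against $f_{x_2}^{\beta+1}$ converts these into $\zeta_{-1}$, $\zeta_{\eta-1}$, $H_{-1,1}$, $H_{\eta-1,1}$ and $H_{2\eta-1,1}$.

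The delicate points in this step are two. First, we must track the overall sign convention (as in $J_{13}$). Second, we must fix the normalization of the $\zeta$ and $H$ functions, the $\lambda_2$ Jacobian factor, which accounts for the $1/\eta$ weights in the stated bracket. We will verify both by matching the analogous, already established terms of $J_{13}$ and $J_{22}$.
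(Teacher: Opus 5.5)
Your proposal is correct and is essentially the paper's own computation. You split $J_{23}^{\beta}=\int u_{a_1}u_{\eta}f^{\beta+1}$ over $(0,\tau_1)$, $(\tau_1,\tau_2)$ and the atom at $\tau_2$, use $u_{a_1}=x_1u_{a_0}$ on the first interval, and expand the score products into $\zeta$ and $H$ integrals; your splitting $u_{a_1}=x_2u_{a_0}^{(2)}+\frac{\tau_1}{\lambda_1}(x_2-x_1)\bigl[\frac{\eta-1}{z_2}-\eta z_2^{\eta-1}\bigr]$ is only a tidier bookkeeping of the paper's twelve-term expansion. The two ``delicate points'' you flag are not issues: the weight $\frac{\eta-1}{\eta}$ on $\zeta_{-1,\beta}^{\tau_1,\tau_2}$ comes directly from the $\frac{1}{\eta}$ term of $u_\eta$ (no Jacobian renormalization is needed, since $\zeta$ and $H$ are already integrals in $t$ against $f_{x_2}^{\beta+1}$), and multiplying your two atom scores gives $-\eta\,(z^\ast)^{2\eta-1}\log z^\ast\,\frac{x_1\tau_1\lambda_2/\lambda_1+x_2(\tau_2-\tau_1)}{\lambda_2}$, which is exactly the stated sign.
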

\begin{proof}
See Appendix
\end{proof}

The following propositions present explicit expressions of the vector $\boldsymbol{\xi}.$
\begin{proposition}
For the Weibull-lifetime, the value of the element $\xi_{1}^{\beta}(a_0)$ is
\begin{equation*}
\xi_{1}^{\beta}(a_0)=\xi_{1,\tau_1}^{\beta}(a_0)+\xi_{1,\tau_1\tau_2}^{\beta}(a_0)+\xi_{1,\tau_2}^{\beta}(a_0),
\end{equation*}
where
\begin{align*}
\xi_{1,\tau_1}^{\beta}(a_0)&=\eta\left\{-\zeta_{0,\beta}^{\tau_1}(a_0,a_1,\eta)+\zeta_{\eta,\beta}^{\tau_1}(a_0,a_1,\eta)\right\},
\\
\xi_{1,\tau_1\tau_2}^{\beta}(a_0)&=\eta\left\{-\zeta_{0,\beta}^{\tau_1,\tau_2}(a_0,a_1,\eta)+\zeta_{\eta,\beta}^{\tau_1,\tau_2}(a_0,a_1,\eta)\right\},
\\
\xi_{1,\tau_2}^{\beta}(a_0)&=\eta \left(\frac{\tau_2+h}{\lambda_2}\right)^{\eta} \exp\left[-\left(\frac{\tau_2+h}{\lambda_2}\right)^{\eta}(\beta+1)\right].
\end{align*}
\end{proposition}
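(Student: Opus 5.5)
The computation splits $\xi_{\beta}(a_0)=\int \partial_{a_0}\log f_{T^\ast}(t|\bm\theta)\, f_{T^\ast}(t|\bm\theta)^{\beta+1}\,dt$ along the three pieces of the mixed distribution $F_{T^\ast}$ in \eqref{eq:funF}. These are the continuous part on $(0,\tau_1)$ with density $f_{x_1}$, the continuous part on $[\tau_1,\tau_2)$ with density $f_{x_2}(t+h)$, and the atom at $\tau_2$ of mass $1-F_{x_2}(\tau_2+h)=\exp[-((\tau_2+h)/\lambda_2)^\eta]$. For the atom, the integral is replaced by the point mass: the term is $\partial_{a_0}\log p\cdot p^{\beta+1}$ with $p$ the survival probability. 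This is the same convention used to define the DPD for mixed distributions in Section~\ref{sec:MDPDE}.

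The key preliminary step is the chain rule for $a_0$. Since $\lambda_i=\exp(a_0+a_1x_i)$, we have $\partial\lambda_i/\partial a_0=\lambda_i$. The ratio $\lambda_2/\lambda_1=\exp(a_1(x_2-x_1))$ does not depend on $a_0$, so
\[
\frac{\partial}{\partial a_0}\left(\frac{t+h}{\lambda_2}\right)=-\frac{t+h}{\lambda_2}.
\]
In both stress intervals, $a_0$ therefore acts purely as a log-scale parameter for $u=t/\lambda_1$ and for $u=(t+h)/\lambda_2$.

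\emph{First interval.} Write $\log f_{x_1}=\log\eta-\eta\log\lambda_1+(\eta-1)\log t-(t/\lambda_1)^\eta$. This gives
\[
\partial_{a_0}\log f_{x_1}=-\eta+\eta\left(\frac{t}{\lambda_1}\right)^\eta .
\]
Multiplying by $f_{x_1}^{\beta+1}$ and integrating over $(0,\tau_1)$ gives, by the definition of $\zeta^{\tau_1}_{\alpha,\beta}$ in the auxiliary Lemma, the result $\eta\{-\zeta^{\tau_1}_{0,\beta}+\zeta^{\tau_1}_{\eta,\beta}\}$.

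\emph{Second interval.} Write $\log f_{x_2}(t+h)=\log\eta-\eta\log\lambda_2+(\eta-1)\log(t+h)-((t+h)/\lambda_2)^\eta$. The derivative of $(\eta-1)\log(t+h)$ is $(\eta-1)$ times $\partial_{a_0}\log(t+h)$. Since $h=\tau_1(\lambda_2/\lambda_1-1)$ is free of $a_0$, this contribution vanishes. The derivative of $-\eta\log\lambda_2$ is $-\eta$. The last term gives $+\eta((t+h)/\lambda_2)^\eta$. Integrating over $[\tau_1,\tau_2)$ against $f_{x_2}^{\beta+1}$ gives $\eta\{-\zeta^{\tau_1,\tau_2}_{0,\beta}+\zeta^{\tau_1,\tau_2}_{\eta,\beta}\}$.

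\emph{Atom at $\tau_2$.} Here $\log p=-((\tau_2+h)/\lambda_2)^\eta$, so $\partial_{a_0}\log p=\eta((\tau_2+h)/\lambda_2)^\eta$. Multiplying by $p^{\beta+1}=\exp[-(\beta+1)((\tau_2+h)/\lambda_2)^\eta]$ gives the stated $\xi^\beta_{1,\tau_2}(a_0)$.

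The only delicate point is the bookkeeping in the second interval. One must confirm that the shift $h$, although it depends on $(\lambda_1,\lambda_2)$, is invariant under $a_0$. This invariance is what makes the $(\eta-1)\log(t+h)$ term drop out, in contrast with the $a_1$-derivative. The closed forms of the $\zeta$ terms via incomplete Gamma functions are already supplied by the auxiliary Lemma and need not be recomputed. Summing the three pieces yields the proposition.
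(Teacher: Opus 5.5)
Your proposal is correct and follows the paper's appendix proof essentially step for step. It uses the same split of $\xi_1^{\beta}(a_0)$ into the pieces on $(0,\tau_1)$, $[\tau_1,\tau_2)$ and the atom at $\tau_2$. It uses the same score $\eta\bigl(-1+u^{\eta}\bigr)$ on each continuous piece, relying as the paper does on $\partial(\lambda_2/\lambda_1)/\partial a_0=0$, so that $h$ is free of $a_0$. It also gives the same point-mass term $\eta\bigl((\tau_2+h)/\lambda_2\bigr)^{\eta}\exp\bigl[-(\beta+1)\bigl((\tau_2+h)/\lambda_2\bigr)^{\eta}\bigr]$. Your integration limits $[\tau_1,\tau_2)$ on the second piece are the right ones; the paper's appendix writes $\int_0^{\tau_1}$ there, which is a typo.
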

\begin{proof}
See Appendix
\end{proof}
\begin{proposition}
For the Weibull-lifetime, the value of the element $\xi_{2}^{\beta}(a_1)$ is
\begin{equation*}
\xi_{2}^{\beta}(a_1)=\xi_{2,\tau_1}^{\beta}(a_1)+\xi_{2,\tau_1\tau_2}^{\beta}(a_1)+\xi_{2,\tau_2}^{\beta}(a_1).
\end{equation*}
Where
\begin{equation}
\begin{aligned}
\xi_{2,\tau_1}^{\beta}(a_1)
&=\eta x_1
\Big[
-\zeta_{0,\beta}^{\tau_1}
+\zeta_{\eta,\beta}^{\tau_1}
\Big],
\\
\xi_{2,\tau_1\tau_2}^{\beta}(a_1)
&=
\eta x_2
\Big[
-\zeta_{0,\beta}^{\tau_1,\tau_2}
+\zeta_{\eta,\beta}^{\tau_1,\tau_2}
\Big]
\\
&
+\frac{\tau_1}{\lambda_1}(x_2-x_1)
\Big[
(\eta-1)\zeta_{-1,\beta}^{\tau_1,\tau_2}
-\eta\zeta_{\eta-1,\beta}^{\tau_1,\tau_2}
\Big],
\\
\xi_{2,\tau_2}^{\beta}(a_1)
&=
-\eta
\left(
\frac{\tau_2+h}{\lambda_2}
\right)^{\eta}
\frac{\frac{x_1\tau_1}{\lambda_1}
+x_2(\tau_2-\tau_1)}{\lambda_2}
\\
&\times
\exp\!\Bigg[
-(\beta+1)
\left(
\frac{\tau_2+h}{\lambda_2}
\right)^{\eta}
\Bigg].
\end{aligned}  \nonumber
\end{equation}

\end{proposition}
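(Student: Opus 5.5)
We compute $\xi_2^{\beta}(a_1)$ directly from its definition in Theorem~\ref{thm:asymp}, with the integral taken over the mixed distribution of $T^\ast$. It therefore splits into three pieces, just as in the proposition for $\xi_1^{\beta}(a_0)$: a continuous part on $(0,\tau_1)$, a continuous part on $(\tau_1,\tau_2)$, and an atom at $\tau_2$. In the atom the density is replaced by the survival probability $S(\tau_2)=1-F_{x_2}(\tau_2+h)=\exp[-((\tau_2+h)/\lambda_2)^\eta]$.

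The key observation is how $\lambda_1,\lambda_2$ and $h$ depend on $a_1$. We have $\partial\lambda_i/\partial a_1=x_i\lambda_i$. Since $h=(\lambda_2/\lambda_1-1)\tau_1$,
\[
\frac{\partial h}{\partial a_1}=(x_2-x_1)\frac{\lambda_2}{\lambda_1}\tau_1 .
\]

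\textbf{First interval.} Only $\lambda_1$ enters, so $\partial\log f_{x_1}/\partial a_1=x_1\,\partial\log f_{x_1}/\partial a_0$. Writing $u=t/\lambda_1$, this derivative is $\eta x_1(u^\eta-1)$. Integrating against $f_{x_1}^{\beta+1}$ and using the definition of $\zeta^{\tau_1}_{\alpha,\beta}$ from the auxiliary Lemma gives $\xi^{\beta}_{2,\tau_1}$ immediately.

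\textbf{Second interval.} Set $u=(t+h)/\lambda_2$ and write
\[
\log f_{x_2}=\log\eta-\eta\log\lambda_2+(\eta-1)\log(t+h)-u^\eta .
\]
Differentiating in $a_1$ gives
\[
-\eta x_2+\frac{(\eta-1)\,\partial_{a_1}h}{t+h}-\eta u^{\eta-1}\,\frac{\partial_{a_1}h-x_2(t+h)}{\lambda_2}.
\]
Collecting terms, this equals
\[
\eta x_2(u^\eta-1)+\frac{\partial_{a_1}h}{\lambda_2}\bigl[(\eta-1)u^{-1}-\eta u^{\eta-1}\bigr].
\]
Since $\partial_{a_1}h/\lambda_2=(x_2-x_1)\tau_1/\lambda_1$, integrating against $f_{x_2}^{\beta+1}$ and recognising each power of $u$ as a $\zeta^{\tau_1,\tau_2}_{\alpha,\beta}$ term (with $\alpha\in\{0,\eta,-1,\eta-1\}$) yields $\xi^{\beta}_{2,\tau_1\tau_2}$.

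\textbf{Atom at $\tau_2$.} Here the contribution is $\partial_{a_1}\log S(\tau_2)\cdot S(\tau_2)^{\beta+1}$. We have
\[
\partial_{a_1}\log S=-\eta u_2^{\eta-1}\,\frac{\partial_{a_1}h-x_2(\tau_2+h)}{\lambda_2},\qquad u_2=\frac{\tau_2+h}{\lambda_2}.
\]
Simplifying the numerator using $h$ explicitly is a short algebra step, giving $x_2(\tau_2-\tau_1)+x_1\tau_1\lambda_2/\lambda_1$ up to sign. Then $S^{\beta+1}=\exp[-(\beta+1)u_2^\eta]$ produces the stated exponential.

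The main obstacle is bookkeeping in the second interval and at the atom. The derivative of the shift $h$ interacts with the $\lambda_2$-derivative, and the signs and powers of $u$ must be tracked so that each term matches a $\zeta$ with the correct index. The atom term in particular requires care to reconcile the power of $u_2$ and the form of the numerator with the displayed expression. If a discrepancy appears (e.g.\ $u_2^{\eta-1}$ versus $u_2^{\eta}$, or whether the factor is $\lambda_2/\lambda_1$ or $1/\lambda_1$), I would recheck it against the analogous term $J^{\beta}_{12,\tau_2}$, which carries the same factor.
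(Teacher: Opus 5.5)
Your route is the paper's own. You split $\xi_2^{\beta}(a_1)$ into the pieces on $(0,\tau_1)$, on $(\tau_1,\tau_2)$ and at the atom $\tau_2$. You differentiate through $\lambda_1,\lambda_2$ and the shift $h$ using $\partial_{a_1}h=(x_2-x_1)\tau_1\lambda_2/\lambda_1$, and you read the integrals off as $\zeta$-terms. Your scores are correct: $\eta x_1(u^\eta-1)$ on the first interval, and $\eta x_2(u^\eta-1)+\frac{\tau_1}{\lambda_1}(x_2-x_1)\bigl[(\eta-1)u^{-1}-\eta u^{\eta-1}\bigr]$ on the second. So $\xi^{\beta}_{2,\tau_1}$ and $\xi^{\beta}_{2,\tau_1\tau_2}$ come out exactly as displayed.

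The gap is the atom. You leave the numerator ``up to sign'', assert that the result is the stated term, and postpone any mismatch to a later reconciliation. That reconciliation cannot succeed, because the displayed $\xi^{\beta}_{2,\tau_2}(a_1)$ is wrong. Finishing your own algebra gives $\partial_{a_1}h-x_2(\tau_2+h)=-\bigl[x_1\tau_1\lambda_2/\lambda_1+x_2(\tau_2-\tau_1)\bigr]$. Hence, with $u_2=(\tau_2+h)/\lambda_2$,
\[
\xi^{\beta}_{2,\tau_2}(a_1)=\eta\,u_2^{\eta-1}\,\frac{\frac{x_1\tau_1\lambda_2}{\lambda_1}+x_2(\tau_2-\tau_1)}{\lambda_2}\,\exp\left[-(\beta+1)u_2^{\eta}\right].
\]
This differs from the statement in three ways:
\begin{itemize}
\item the sign;
\item the power of $u_2$;
\item the missing factor $\lambda_2$ on $x_1\tau_1/\lambda_1$.
\end{itemize}

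A quick check confirms the correction. For $x_1=x_2=x$ one has $h=0$ and $\partial_{a_1}=x\,\partial_{a_0}$. The formula above then reduces to $x\,\xi^{\beta}_{1,\tau_2}(a_0)=x\eta u_2^{\eta}e^{-(\beta+1)u_2^{\eta}}$. The displayed one has the opposite sign and is not even invariant under a change of time unit.

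The corrected expression is what the paper's appendix derivation obtains, and it is the factor built into $J^{\beta}_{12,\tau_2}$ and $J^{\beta}_{22,\tau_2}$. So your proposed cross-check against $J^{\beta}_{12,\tau_2}$ would expose the error rather than remove it. Your argument proves the first two components and a corrected third component. You should state that correction explicitly instead of claiming agreement with the displayed formula.
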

\begin{proof}
See Appendix
\end{proof}
\begin{proposition}
For the Weibull-lifetime, the value of the element $\xi_{3}^{\beta}(\eta)$ is
\begin{equation*}
\xi_{3}^{\beta}(\eta)=\xi_{3,\tau_1}^{\beta}(\eta)+\xi_{3,\tau_1\tau_2}^{\beta}(\eta)+\xi_{3,\tau_2}^{\beta}(\eta),
\end{equation*}
where
\begin{equation}
\begin{aligned}
\xi_{3,\tau_1}^{\beta}(\eta)&=\frac{1}{\eta}\zeta_{0,\beta}^{\tau_1}(a_0,a_1,\eta)+H_{0,1,\beta}^{\tau_1}(a_0,a_1,\eta)
\\&-H_{\eta,1,\beta}^{\tau_1}(a_0,a_1,\eta),
\\
\xi_{3,\tau_1\tau_2}^{\beta}(\eta)&=\frac{1}{\eta}\zeta_{0,\beta}^{\tau_1,\tau_2}(a_0,a_1,\eta)+H_{0,1,\beta}^{\tau_1,\tau_2}(a_0,a_1,\eta)
\\&-H_{\eta,1,\beta}^{\tau_1,\tau_2}(a_0,a_1,\eta),
\\
\xi_{3,\tau_2}^{\beta}(\eta)&=-\log\left(\frac{\tau_2+h}{\lambda_2}\right)\left(\frac{\tau_2+h}{\lambda_2}\right)^{\eta}
\\
&\times \exp\left[-\left(\frac{\tau_2+h}{\lambda_2}\right)^{\eta}(\beta+1)\right].
\end{aligned}  \nonumber
\end{equation}
\end{proposition}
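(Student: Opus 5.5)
We obtain $\xi_3^{\beta}(\eta)$ directly from its definition in Theorem~\ref{thm:asymp}, with $s=\eta$. For the mixed observable distribution, the integral against $f_T^{\beta+1}$ is read as the sum of two continuous pieces, on $(0,\tau_1)$ and $(\tau_1,\tau_2)$, plus the atom at $\tau_2$. The atom contributes $\partial_\eta\log(1-F_{x_2}(\tau_2+h))\,(1-F_{x_2}(\tau_2+h))^{\beta+1}$. So the plan is to compute the score with respect to $\eta$ on each piece and then reduce each integral to the functions $\zeta$ and $H$ already introduced.

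\emph{First piece.} On $(0,\tau_1)$ we have $\log f_{x_1}=\log\eta-\log\lambda_1+(\eta-1)\log(t/\lambda_1)-(t/\lambda_1)^{\eta}$, and $\lambda_1$ does not depend on $\eta$. Hence
\[
\partial_\eta\log f_{x_1}=\frac{1}{\eta}+\log(t/\lambda_1)-(t/\lambda_1)^{\eta}\log(t/\lambda_1).
\]
Multiplying by $f_{x_1}^{\beta+1}$ and integrating gives three terms. The first is $\frac1\eta\zeta^{\tau_1}_{0,\beta}$ by the auxiliary Lemma with $\alpha=0$. The other two are $H^{\tau_1}_{0,1,\beta}-H^{\tau_1}_{\eta,1,\beta}$ by the definition of $H$ in the subsequent Lemma, reading $g_{x_1}^{\beta+1}$ there as the model density $f_{x_1}^{\beta+1}$ evaluated at the true parameter.

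\emph{Second piece.} On $(\tau_1,\tau_2)$ the same structure holds with $u=(t+h)/\lambda_2$. The shift $h=\tau_1(\lambda_2/\lambda_1-1)$ does not involve $\eta$, because $\lambda_i=\exp(a_0+a_1x_i)$. Therefore
\[
\partial_\eta\log f_{x_2}(t+h)=\frac1\eta+\log u-u^{\eta}\log u,
\]
which yields $\frac1\eta\zeta^{\tau_1,\tau_2}_{0,\beta}+H^{\tau_1,\tau_2}_{0,1,\beta}-H^{\tau_1,\tau_2}_{\eta,1,\beta}$. When checking that these $\zeta$, $H$ match the Lemmas, the substitution $l=u$ maps the lower limit to $(\tau_1+h)/\lambda_2=\tau_1/\lambda_1$, as used there.

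\emph{Atom.} With $u_2=(\tau_2+h)/\lambda_2$ we have $1-F_{x_2}=\exp(-u_2^{\eta})$, so the score is $-u_2^{\eta}\log u_2$. The weight is $\exp(-(\beta+1)u_2^{\eta})$, and the product gives exactly $\xi^{\beta}_{3,\tau_2}$.

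The only delicate point is the bookkeeping that $h$, and hence $u$ and $u_2$, is free of $\eta$, so that no extra chain-rule terms appear; this contrasts with $\xi_2$, where $h$ depends on $a_1$. The rest is the routine verification that each integral coincides with the corresponding $\zeta$ or $H$ after the change of variables $l=t/\lambda_1$ or $l=(t+h)/\lambda_2$.
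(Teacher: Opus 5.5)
Your proposal is correct and follows the paper's proof essentially step for step: you compute the $\eta$-score on $(0,\tau_1)$, on $(\tau_1,\tau_2)$ and at the atom at $\tau_2$, then identify the resulting integrals with $\zeta_{0,\beta}$, $H_{0,1,\beta}$ and $H_{\eta,1,\beta}$. Your explicit remark that $h$ does not depend on $\eta$, so that no extra chain-rule terms arise, is a point the paper uses only implicitly.
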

\begin{proof}
See Appendix
\end{proof}

\section{Lifetime Characteristics Estimation \label{sec:CI}}

Estimates for the model parameters $a_0$, $a_1$, and $\eta$ have been derived, along with their corresponding asymptotic distributions. As \cite{balakrishnan2025} shows these parameter estimates allow for the approximation of the cumulative distribution function (c.d.f.) of the lifetime, as well as related measures. Nonetheless, in many reliability analyses, the main focus is on quantifying specific features of the lifetime, such as the mean time to failure (MTTF),rather than on the c.d.f. itself. In what follows, we explore the robust estimation of essential lifetime attributes using the proposed MDPDEs, and we outline the development of corresponding confidence intervals (CIs) based on the asymptotic behavior of the estimators and the application of the Delta method.

The reliability of a system at a specified mission time $t$ can be assessed by evaluating the complement of the failure probability. In other words, assuming standard operating conditions, the reliability of a component at mission time $t$ is defined as the reliability function, under normal operating conditions, and is given by
\begin{equation}
R_0(t)=\exp\left(-\left(\frac{t}{\lambda_0}\right)^{\eta}\right)
=\exp\big(-t^{\eta}\exp\{-\eta(a_0+a_1x_0)\}\big),  \nonumber
\end{equation}
and the corresponding MDPDE-based estimator can be written as
\begin{equation}
\widehat{R}_0^{\beta}(t)
=\exp\left(-\left(\frac{t}{\widehat{\lambda}^{\beta}_0}\right)^{\widehat{\eta}^{\beta}}\right)
=\exp\big(-t^{\widehat{\eta}^{\beta}}\exp\{-\widehat{\eta}^{\beta}(\widehat{a}_0^{\beta}+\widehat{a}_1^{\beta}x_0)\}\big).  \nonumber
\end{equation}

Following the asymptotic properties of the MDPDE seen in theorem (\ref{thm:asymp}),
$\widehat{\boldsymbol{\theta}}^{\beta}=(\widehat{a}_0^{\beta},\widehat{a}_1^{\beta},\widehat{\eta}^{\beta})$
is asymptotically normal. Consequently,
\begin{equation}
\sqrt{n}\left(\widehat{R}_0^{\beta}(t)-R_0(t)\right)
\overset{\mathscr{L}}{\longrightarrow}
\mathcal{N}\left(0,\sigma_{\beta}(R_0(t))^2\right),  \nonumber
\end{equation}
where
\begin{equation}
\sigma_{\beta}(R_0(t))^2
=\nabla h_R(\boldsymbol{\theta}_0)^{\top}
\boldsymbol{J}^{-1}_{\beta}(\boldsymbol{\theta}_0)
\boldsymbol{K}_{\beta}(\boldsymbol{\theta}_0)
\boldsymbol{J}^{-1}_{\beta}(\boldsymbol{\theta}_0)
\nabla h_R(\boldsymbol{\theta}_0).  \nonumber
\end{equation}
The expressions of $\nabla h_R(\boldsymbol{\theta}_0)$ is given by
\begin{align*}
\nabla h_R(\boldsymbol{\theta}_0)&=\left(\frac{t}{\lambda_0}\right)^{\eta}R_0(t)\left(\eta,\eta x_0,-\log\left(\frac{t}{\lambda_0}\right)\right),  \nonumber
\end{align*}
and $\boldsymbol{J}_{\beta}(\boldsymbol{\theta})$ and $\boldsymbol{K}_{\beta}(\boldsymbol{\theta})$ where defined in theorem (\ref{thm:asymp}).

For a reliability level $1-\alpha$, the associated lifetime quantile is
\begin{equation}
Q_{1-\alpha}=R_0^{-1}(1-\alpha)
=-\big(\log(1-\alpha)\big)^{1/\eta}\lambda_0,  \nonumber
\end{equation}
with MDPDE estimator
\begin{equation}
\widehat{Q}^{\beta}_{1-\alpha}
=-\big(\log(1-\alpha)\big)^{1/\widehat{\eta}^{\beta}}
\widehat{\lambda}^{\beta}_0.  \nonumber
\end{equation}
Similarly,
\begin{equation}
\sqrt{n}\left(\widehat{Q}^{\beta}_{1-\alpha}-Q_{1-\alpha}\right)
\overset{\mathscr{L}}{\longrightarrow}
\mathcal{N}\left(0,\sigma_{\beta}(Q_{1-\alpha})^2\right),  \nonumber
\end{equation}
where $\sigma_{\beta}(Q_{1-\alpha})^2$ is defined as
	\begin{align*}
	\sigma_{\beta}\left( Q_{1-\alpha}\right)^2=\nabla h_Q(\boldsymbol{\theta}_0)^{T} \boldsymbol{J}^{-1}_{\beta}(\boldsymbol{\theta}_0)\boldsymbol{K}_\beta(\boldsymbol{\theta}_0)\boldsymbol{J}^{-1}_\beta(\boldsymbol{\theta}_0)\nabla h_Q(\boldsymbol{\theta}_0)
	\end{align*}
	and $\boldsymbol{K}_\beta(\boldsymbol{\theta_0})$ and $\boldsymbol{J}_\beta(\boldsymbol{\theta}_0)$ are defined in theorem \ref{thm:asymp}, and 
	\begin{align*}
	\nabla h_Q(\boldsymbol{\theta}_0)^{\top}=Q_{1-\alpha}\left( 1, x_0,\frac{\log(\log(1-\alpha))}{\eta}\right)
	\end{align*}
	is the gradient of the function $h_Q(\boldsymbol{\theta}) = Q_{1-\alpha} =- \left(\log\left(1-\alpha\right)\right)^{1}{\eta}\lambda_0$.

For Weibull lifetimes, the mean time to failure (MTTF) is
\begin{equation}
E_T=\lambda_0\Gamma\left(1+\frac{1}{\eta}\right)
=\exp(a_0+a_1x_0)\Gamma\left(1+\frac{1}{\eta}\right),
\end{equation}
and its robust estimator is obtained by substitution. Its asymptotic distribution is
\begin{equation}
\sqrt{n}\left(\widehat{E}_T^{\beta}-E_T\right)
\overset{\mathscr{L}}{\longrightarrow}
\mathcal{N}\left(0,\sigma_{\beta}(E_T)^2\right).
\end{equation}
where $\sigma_{\beta}(E_T)^2$ is defined as:
\begin{align*}
\sigma_{\beta}\left( E_{T}\right)^2=\nabla h_E(\boldsymbol{\theta}_0)^{T}, \boldsymbol{J}^{-1}_\beta(\boldsymbol{\theta}_0)\boldsymbol{K}_\beta(\boldsymbol{\theta}_0)\boldsymbol{J}^{-1}_\beta(\boldsymbol{\theta}_0)\nabla h_E(\boldsymbol{\theta}_0)
\end{align*}
and $\boldsymbol{K}_\beta(\boldsymbol{\theta})$ and $\boldsymbol{J}_\beta(\boldsymbol{\theta})$ are defined in theorem \ref{thm:asymp}, and 
\begin{equation}
\nabla h_E(\boldsymbol{\theta}_0)^{\top}=E_{T}\left( 1, x_0,-\frac{1}{\eta^2}\psi\left(1+\frac{1}{\eta}\right)\right)
\end{equation}
is the gradient of the function $E_{T}$ where $\psi\left(1+\frac{1}{\eta}\right) $ is the digamma function, de derivative of $\Gamma\left(1+\frac{1}{\eta}\right)$.

Approximate $(1-\alpha)\cdot100\%$ confidence intervals are therefore given by
\begin{equation}
\widehat{R}_0^\beta(t) \pm z_{\alpha/2}\frac{\sigma_{\beta}(R_0(t))}{\sqrt{n}}, \,
\widehat{Q}^\beta_{1-\alpha} \pm z_{\alpha/2}\frac{\sigma_{\beta}(Q_{1-\alpha})}{\sqrt{n}}, \,
\widehat{E}_T^{\beta} \pm z_{\alpha/2}\frac{\sigma_{\beta}(E_T)}{\sqrt{n}}.
\end{equation}

Since $R_0(t)\in[0,1]$, a logit transformation may be applied to avoid boundary issues. Writing 
$\phi_R = \log\left(\widehat{R}_0^\beta(t)/(1-\widehat{R}_0^\beta(t))\right)$, the CI on the logit scale is
\[
\phi_R \pm z_{\alpha/2}\frac{\sigma_{\beta}(R_0(t))}{\widehat{R}_0^\beta(t)(1-\widehat{R}_0^\beta(t))\sqrt{n}},
\]
and inverting back yields
\begin{equation}
\left[
\frac{\widehat{R}_0^\beta(t)}{\widehat{R}_0^\beta(t)+(1-\widehat{R}_0^\beta(t))S},
\;
\frac{\widehat{R}_0^\beta(t)}{\widehat{R}_0^\beta(t)+(1-\widehat{R}_0^\beta(t))/S}
\right],
\end{equation}
where
\[
S=\exp\left(\frac{z_{\alpha/2}}{\sqrt{n}}\frac{\sigma_{\beta}(R_0(t))}{\widehat{R}_0^\beta(t)(1-\widehat{R}_0^\beta(t))}\right).
\]

For $Q_{1-\alpha}$ and $E_T$, which must be positive, we instead apply the logarithmic transformation:
\[
\log(\widehat{Q}_{1-\alpha}^\beta)\pm z_{\alpha/2}\frac{\sigma_{\beta}(Q_{1-\alpha})}{\widehat{Q}_{1-\alpha}^\beta\sqrt{n}}, \,
\log(\widehat{E}_T^\beta)\pm z_{\alpha/2}\frac{\sigma_{\beta}(E_T)}{\widehat{E}_T^\beta\sqrt{n}},
\]
and exponentiating gives
\begin{equation}
\widehat{Q}_{1-\alpha}^\beta\exp\!\left(\pm\frac{z_{\alpha/2}}{\sqrt{n}}
\frac{\sigma_{\beta}(Q_{1-\alpha})}{\widehat{Q}_{1-\alpha}^\beta}\right),\,
\widehat{E}_T^\beta\exp\!\left(\pm\frac{z_{\alpha/2}}{\sqrt{n}}
\frac{\sigma_{\beta}(E_T)}{\widehat{E}_T^\beta}\right).
\end{equation}

\section{Simulation Study}
This section examines the robustness of the MDPDEs for the SSALT model under a Weibull lifetime distribution in the presence of outliers.
A simple SSALT and type- I censoring type will be simulate as follows:
A total of for a total of $n=200$ devices having Weibull lifetime distributions are set under a SSALT. During the first step, all units under test are subjected to a higher-than normal stress of $x_1=1$, and in the second step, to a stress of $x_2=2$. We take the nominal operating stress to be \(x_0=0.5\), i.e., one half of the first stress level and one quarter of the second.
The point of stress change is set at
$\tau_1=3$ and the test completion time is at $\tau_2=5.$ 
The true parameters of the SSALT model are set as $\boldsymbol{\theta} = (a_0, a_1, \eta)^T=(2, -0.8, 5.5)^T.$ 

To evaluate the robustness of the estimators, we contaminate the data by introducing an $\varepsilon$-fraction of outliers, with
$ \varepsilon \in \{0, 0.03, 0.05, 0.07, 0.08, 0.09, 0.10\}.$
The outliers are generated from a Weibull distribution with contaminated parameters $(\tilde{a}_0, \tilde{a}_1, \tilde{\eta})^T$, truncated to the interval $(0,1.5)$. 
We consider three contamination schemes, each targeting a single component of the SSALT model
\(\boldsymbol{\theta}=(a_0,a_1,\eta)\): (i) the intercept \(a_0\) (scale-at-reference stress),
(ii) the slope \(a_1\) (stress–scale sensitivity), and (iii) the Weibull shape \(\eta\).
For each scheme, the two non–contaminated parameters are fixed at their true values, and the contaminated parameter is determined as the
solution of the calibration equation 
\begin{align*}
\varepsilon = \mathbb{P}(0<W<1.5|\tilde{a}_0, \tilde{a}_1, \tilde{\eta}).
 \end{align*}
  Where $W  \sim \text{ Weibull}(\exp(\tilde{a}_0 + \tilde{a}_1 x_1), \tilde{\eta}),$
that is, the probability of failure within the interval $(0,1.5) $ must be equal to the proportion of outliers. 
\subsection{Performance of the MPDE}

To assess the performance of the MDPDEs, we compute the root mean squared error (RMSE) of the estimators for several values of the tuning parameter
\(\beta \in \{0,\,0.2,\,0.4,\,0.6,\,0.8,\,1\}\). As noted in Section \ref{sec:MDPDE}, the case \(\beta=0\) coincides with the MLE.
Let \(\widehat{\boldsymbol{\theta}}^{(r)}(\alpha)
=(\widehat{a}_0^{(r)}(\alpha),\,\widehat{a}_1^{(r)}(\alpha),\,\widehat{\beta}^{(r)}(\alpha))\)
denote the estimator in replication \(r=1,\ldots,M\).

Figure \ref{fig:graf-outlier} shows how the error varies in the contaminated parameter. The abscissa axis shows the proportion of outliers and the ordinate axis the RMSE. We observe that, in the absence of outliers, the MLE outperforms all other estimators. In contrast, even a small contamination level (e.g., \(3\%\)) markedly affects the MLE, causing it to underperform relative to the robust alternatives.
Moreover, the calibration role of the tuning parameter becomes evident: larger values of \(\beta\) yield better performance in the presence of outliers, whereas smaller values of \(\beta\) produce more efficient estimates in uncontaminated settings. Thus, there is an inherent trade-off between desired performance with and without contamination in terms of efficiency and robustness.

\begin{figure}[htb]
  \centering
  \includegraphics[width=1\textwidth]{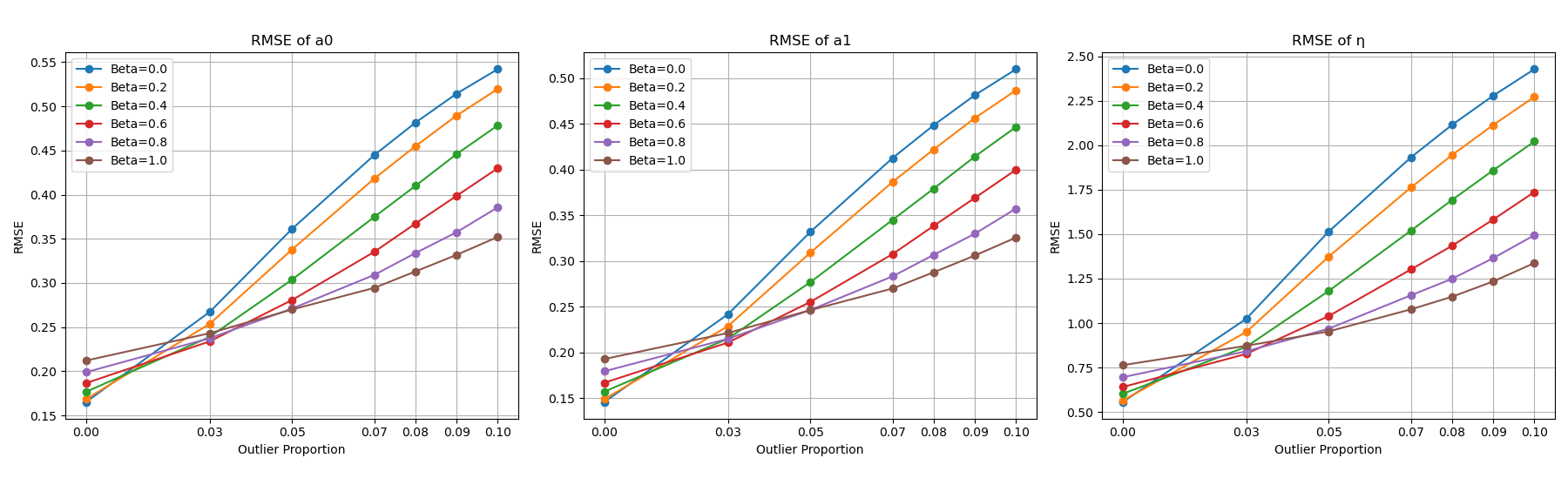}
  \caption{RMSE of  the MDPDEs under increasing contamination proportion for the estimation of a0, a1, eta}
  \label{fig:graf-outlier}
\end{figure}

Furthermore, by inverting the log–linear parametrization  \(\lambda(x)=\exp(a_0+a_1 x)\), we can also compute the RMSE for the scale parameters \(\lambda(x_i)\) at the
stress levels \(x_1=1\), \(x_2=2\), as well as at a nominal operating level \(x_0=0.5\).
Figure~\ref{fig:graf-outlier-lambda} reports the RMSE for the different \(\lambda\) values when the contaminated parameter is the regression coefficient \(a_1\).
The qualitative conclusions mirror those obtained for the model parameters.

\begin{figure}[htb]
  \centering
  \includegraphics[width=1\textwidth]{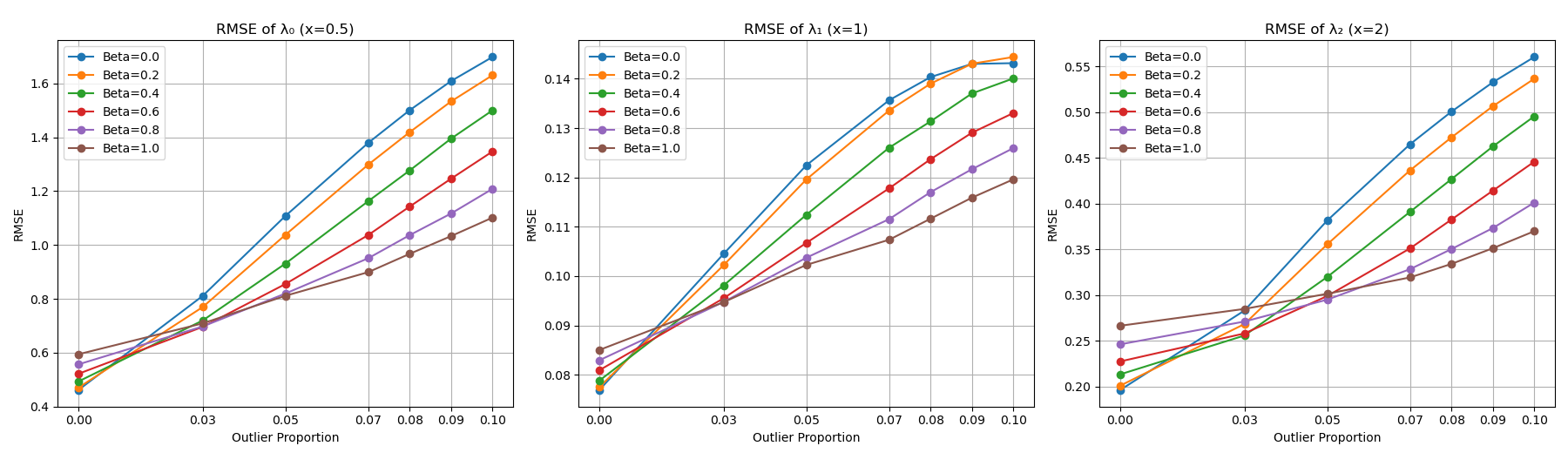}
  \caption{RMSE of $\lambda_i$ when the contaminated parameter is $a_1$.}
  \label{fig:graf-outlier-lambda}
\end{figure}

As discussed in Section \ref{sec:CI}, given the estimates of the SSALT model, we can estimate some values of interest such as the MTTF, the reliability at a certain mission time or the median time of failure. Figure  \ref{fig:graf-outlier-charfunctions} shows this values under normal operating conditions $x_0=0.5$ when the contaminated parameter is $a_1$ at a mission time $t=2.$
For this simulation, we also computed confidence intervals for the model parameters under parameter-specific contamination, as well as confidence intervals (on both the original and transformed methods) for the quantities of interest. Empirical coverage of the confidence intervals are presented in Table \ref{tab:table_CI_parameters}.
In general, the presence of outliers substantially reduces empirical coverage for all estimates. 
However,  intervals produced by MDPDE with large values of the tuning parameter \(\beta\) keep empirical coverage  markedly higher than those based on the MLE. In fact as it is shown in Tables \ref{tab:table_CI_MTTF} and \ref{tab:table_CI_survival}, for the median and the \(\mathrm{MTTF}\), the MDPDE attains satisfactory coverage for both transformed and direct intervals.

\begin{figure}[htb]
	\centering
	\includegraphics[width=1\textwidth]{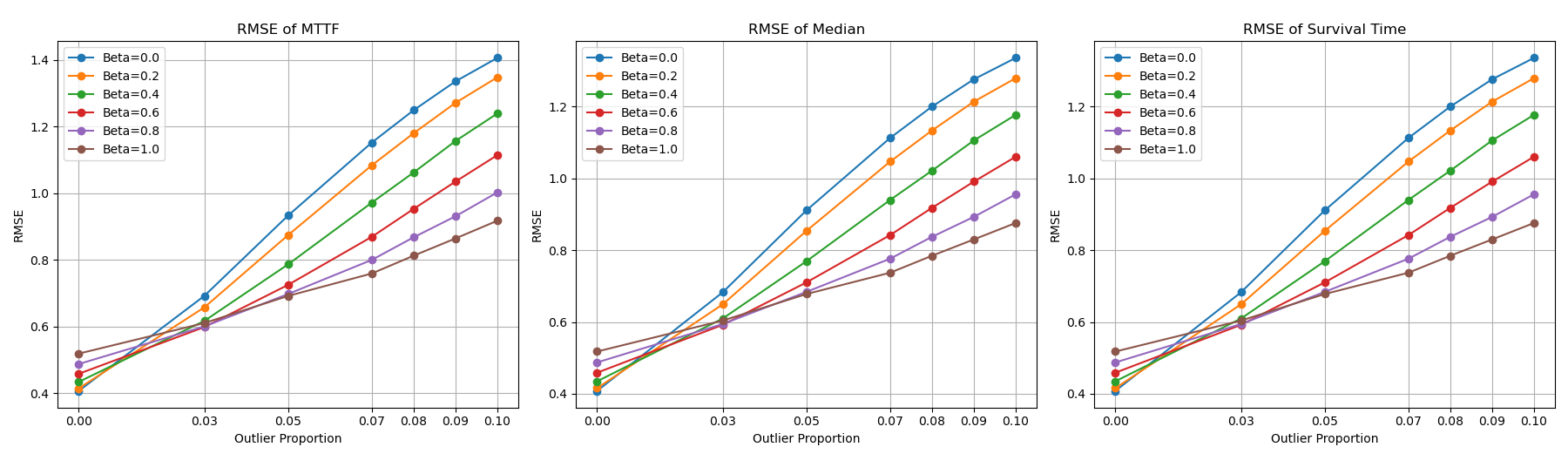}
	\caption{RMSE for the MTTF (top), the Reability at t=2 (center) and the median (bottom) when the contaminated parameter is $a_1$.}
	\label{fig:graf-outlier-charfunctions}
\end{figure}

\begin{table}[htbp]
\centering
\caption{Coverage of the CI for $(a_0,a_1,\eta)$ under contamination in each parameter. Each cell shows the column vector $\left( a_0 ,a_1 , \eta \right)^{\intercal}$.}
\label{tab:table_CI_parameters}
\renewcommand{\arraystretch}{1.15} 
\small

\begin{tabular}{ccccccc}
\hline
proportion $\setminus \beta$ 
& MLE & 0.2 & 0.4 & 0.6 & 0.8 & 1 \\ 
\hline

0 &
$\begin{matrix}95.9\\95.7\\94.7\end{matrix}$ &
$\begin{matrix}95.9\\95.8\\94.9\end{matrix}$ &
$\begin{matrix}95.3\\95.4\\95\end{matrix}$ &
$\begin{matrix}94.7\\94.6\\95.2\end{matrix}$ &
$\begin{matrix}94.2\\94.1\\95.3\end{matrix}$ &
$\begin{matrix}94.4\\94.2\\94.8\end{matrix}$
\\[3pt]\hline

0.03 &
$\begin{matrix}76.5\\75.4\\60.4\end{matrix}$ &
$\begin{matrix}81.4\\81.3\\72.4\end{matrix}$ &
$\begin{matrix}87\\86.7\\83.9\end{matrix}$ &
$\begin{matrix}89.5\\89.3\\90.1\end{matrix}$ &
$\begin{matrix}90.2\\90.5\\91.7\end{matrix}$ &
$\begin{matrix}91.7\\91.4\\92.5\end{matrix}$
\\[3pt]\hline

0.05 &
$\begin{matrix}54.1\\49.5\\9.9\end{matrix}$ &
$\begin{matrix}62.6\\59.9\\27.8\end{matrix}$ &
$\begin{matrix}75.4\\73.9\\57.3\end{matrix}$ &
$\begin{matrix}81.7\\80.2\\75.6\end{matrix}$ &
$\begin{matrix}84.8\\84.1\\84.1\end{matrix}$ &
$\begin{matrix}86.5\\86.8\\87.6\end{matrix}$
\\[3pt]\hline

0.07 &
$\begin{matrix}32.3\\24.3\\0.3\end{matrix}$ &
$\begin{matrix}43.5\\36.8\\5.4\end{matrix}$ &
$\begin{matrix}59.4\\56.1\\28.3\end{matrix}$ &
$\begin{matrix}73.3\\71.6\\57.3\end{matrix}$ &
$\begin{matrix}79.9\\79.6\\74.5\end{matrix}$ &
$\begin{matrix}83.9\\83.3\\83.2\end{matrix}$
\\[3pt]\hline

0.08 &
$\begin{matrix}23.8\\16.8\\0\end{matrix}$ &
$\begin{matrix}33.3\\26.3\\1.4\end{matrix}$ &
$\begin{matrix}52.4\\46.1\\16.5\end{matrix}$ &
$\begin{matrix}67.4\\65.2\\45.8\end{matrix}$ &
$\begin{matrix}76.8\\75.8\\67.7\end{matrix}$ &
$\begin{matrix}81.6\\80.4\\79.3\end{matrix}$
\\[3pt]\hline

0.09 &
$\begin{matrix}17.2\\11\\0\end{matrix}$ &
$\begin{matrix}26.3\\19.6\\0.1\end{matrix}$ &
$\begin{matrix}43.2\\36.8\\8.7\end{matrix}$ &
$\begin{matrix}60\\56.3\\34.5\end{matrix}$ &
$\begin{matrix}73.6\\72\\60.3\end{matrix}$ &
$\begin{matrix}79.1\\78.5\\73.7\end{matrix}$
\\[3pt]\hline

0.1 &
$\begin{matrix}12.5\\7.5\\0\end{matrix}$ &
$\begin{matrix}20.1\\13.5\\0\end{matrix}$ &
$\begin{matrix}34.9\\29\\3.9\end{matrix}$ &
$\begin{matrix}53.7\\49.8\\23.5\end{matrix}$ &
$\begin{matrix}67.2\\65.4\\50.8\end{matrix}$ &
$\begin{matrix}77.1\\75.9\\67.7\end{matrix}$
\\[3pt]\hline

\end{tabular}
\end{table}

\begin{table}[htbp]
\centering
\caption{Coverage of the direct and transformed CI for the MTTF under contamination in $a_1$ and $\eta$. Each cell shows $\left(a_1,\ \eta\right)^{\intercal}$.}
\label{tab:table_CI_MTTF}
\renewcommand{\arraystretch}{1.15}
\small

\begin{tabular}{c|cccccc}
\hline
\multicolumn{7}{c}{\textbf{Direct CI}} \\
\hline
proportion $\setminus \beta$ & MLE & 0.2 & 0.4 & 0.6 & 0.8 & 1 \\
\hline

0 &
$[96.4,96.4]$ &
$[96.2,96.2]$ &
$[95.4,95.4]$ &
$[95.1,95.1]$ &
$[94.5,94.5]$ &
$[94.3,94.3]$
\\

0.03 &
$[74.4,70.9]$ &
$[79.3,77.1]$ &
$[85.2,83.8]$ &
$[87.7,87.7]$ &
$[88.0,88.2]$ &
$[89.4,88.9]$
\\

0.05 &
$[54.3,44.7]$ &
$[62.7,57.5]$ &
$[73.8,73.2]$ &
$[80.2,81.0]$ &
$[83.8,85.0]$ &
$[85.4,87.3]$
\\

0.07 &
$[37.5,22.3]$ &
$[46.5,39.1]$ &
$[59.7,58.6]$ &
$[72.1,73.8]$ &
$[78.8,80.7]$ &
$[83.2,85.2]$
\\

0.08 &
$[31.0,13.8]$ &
$[39.2,29.3]$ &
$[54.7,52.1]$ &
$[67.4,70.2]$ &
$[75.7,78.8]$ &
$[80.7,83.6]$
\\

0.09 &
$[25.4,7.9]$ &
$[31.9,22.2]$ &
$[46.4,43.7]$ &
$[60.4,64.0]$ &
$[73.1,77.1]$ &
$[79.1,82.0]$
\\

0.1 &
$[20.4,5.5]$ &
$[27.8,13.9]$ &
$[40.6,35.9]$ &
$[54.9,57.8]$ &
$[67.4,72.6]$ &
$[77.1,80.5]$
\\
\end{tabular}


\vspace{0.5cm}

\begin{tabular}{c|cccccc}
\hline
\multicolumn{7}{c}{\textbf{Transformed CI}} \\
\hline
proportion $\setminus \beta$ & MLE & 0.2 & 0.4 & 0.6 & 0.8 & 1 \\
\hline

0 &
$[95.9,95.9]$ &
$[95.7,95.7]$ &
$[95.3,95.3]$ &
$[95.4,95.4]$ &
$[94.4,94.4]$ &
$[94.1,94.1]$
\\

0.03 &
$[78.5,75.7]$ &
$[83.0,81.7]$ &
$[87.8,87.5]$ &
$[89.9,89.4]$ &
$[90.7,89.9]$ &
$[91.8,91.4]$
\\

0.05 &
$[60.5,50.0]$ &
$[68.5,63.9]$ &
$[77.7,77.2]$ &
$[83.3,84.5]$ &
$[85.5,87.5]$ &
$[87.0,88.3]$
\\

0.07 &
$[43.5,27.6]$ &
$[52.5,44.6]$ &
$[64.3,63.5]$ &
$[76.3,77.8]$ &
$[81.9,83.2]$ &
$[85.3,86.8]$
\\

0.08 &
$[36.7,17.5]$ &
$[45.7,34.2]$ &
$[58.4,56.5]$ &
$[71.8,73.6]$ &
$[78.3,81.9]$ &
$[82.8,86.2]$
\\

0.09 &
$[30.7,10.6]$ &
$[37.9,26.3]$ &
$[53.2,49.4]$ &
$[66.0,69.5]$ &
$[76.5,79.6]$ &
$[81.6,84.0]$
\\

0.1 &
$[26.0,7.1]$ &
$[32.2,18.7]$ &
$[47.0,41.3]$ &
$[60.4,63.0]$ &
$[72.7,77.0]$ &
$[79.8,82.0]$
\\

\end{tabular}

\end{table}

\begin{table}[htbp]
\centering
\caption{Coverage of the direct and transformed CI for the survival rate under contamination in $a_1$ and $\eta$. Each cell shows $\left(a_1,\ \eta \right)^{\intercal}$.}
\label{tab:table_CI_survival}
\renewcommand{\arraystretch}{1.18}
\small

\begin{tabular}{c|cccccc}
\hline
\multicolumn{7}{c}{\textbf{Direct CI}} \\
\hline
proportion $\setminus \beta$ & MLE & 0.2 & 0.4 & 0.6 & 0.8 & 1 \\ 
\hline

0 &
$[94.6,94.6]$ &
$[94.7,94.7]$ &
$[94.0,94.0]$ &
$[93.6,93.6]$ &
$[93.6,93.6]$ &
$[93.5,93.5]$
\\

0.03 &
$[72.6,71.2]$ &
$[72.7,72.0]$ &
$[76.2,76.8]$ &
$[79.7,80.3]$ &
$[81.6,81.9]$ &
$[83.2,82.9]$
\\

0.05 &
$[41.8,35.9]$ &
$[44.4,42.1]$ &
$[52.4,53.0]$ &
$[58.9,61.3]$ &
$[64.9,67.9]$ &
$[68.4,71.3]$
\\

0.07 &
$[13.7,7.4]$ &
$[17.2,13.6]$ &
$[26.2,28.3]$ &
$[36.5,41.3]$ &
$[45.7,51.2]$ &
$[51.8,57.0]$
\\

0.08 &
$[6.6,1.6]$ &
$[8.6,6.0]$ &
$[14.8,16.3]$ &
$[25.7,31.7]$ &
$[34.6,42.2]$ &
$[42.6,48.7]$
\\

0.09 &
$[1.7,0.3]$ &
$[2.3,1.0]$ &
$[8.4,9.2]$ &
$[16.4,22.3]$ &
$[26.5,34.4]$ &
$[33.7,41.0]$
\\

0.1 &
$[0.3,0.0]$ &
$[0.4,0.1]$ &
$[2.2,2.9]$ &
$[9.0,13.9]$ &
$[18.3,26.4]$ &
$[26.3,33.5]$
\\

\end{tabular}

\vspace{0.6cm}

\begin{tabular}{c|cccccc}
\hline
\multicolumn{7}{c}{\textbf{Transformed CI}} \\
\hline
proportion $\setminus \beta$ & MLE & 0.2 & 0.4 & 0.6 & 0.8 & 1 \\ 
\hline

0 &
$[94.4,94.4]$ &
$[94.6,94.6]$ &
$[94.8,94.8]$ &
$[94.9,94.9]$ &
$[95.3,95.3]$ &
$[95.2,95.2]$
\\

0.03 &
$[87.8,87.2]$ &
$[88.7,87.7]$ &
$[90.0,90.3]$ &
$[91.7,92.0]$ &
$[92.2,92.5]$ &
$[92.8,92.9]$
\\

0.05 &
$[63.8,57.7]$ &
$[65.0,63.5]$ &
$[72.2,73.0]$ &
$[78.7,80.4]$ &
$[83.3,85.9]$ &
$[85.2,87.5]$
\\

0.07 &
$[32.9,22.0]$ &
$[34.8,31.2]$ &
$[45.9,46.8]$ &
$[56.6,61.5]$ &
$[66.1,70.1]$ &
$[72.2,76.0]$
\\

0.08 &
$[18.7,8.7]$ &
$[21.6,16.3]$ &
$[31.2,32.6]$ &
$[42.1,49.7]$ &
$[54.5,62.8]$ &
$[62.4,68.6]$
\\

0.09 &
$[9.3,2.2]$ &
$[11.5,7.4]$ &
$[18.3,19.9]$ &
$[30.0,38.5]$ &
$[43.0,52.4]$ &
$[51.7,60.9]$
\\

0.1 &
$[3.5,0.4]$ &
$[4.2,1.2]$ &
$[9.9,10.9]$ &
$[19.5,26.7]$ &
$[31.7,41.5]$ &
$[41.6,51.9]$
\\

\end{tabular}

\end{table}

Finally, to better understand the estimators’ behavior under contamination of \(a_1\), we plotted \(95\%\) confidence ellipses for pairs of parameters.
As shown in Figures Figures~\ref{fig:ellip_a0_a1}--\ref{fig:ellip_a1_eta}, the effect of the outliers does not induce a notable inflation in the estimator's variance, but rather a significant deviation of the estimates from the true parameter values.  This bias is effectively mitigated by the MDPDE.

\begin{figure}[htb]
  \centering
  \includegraphics[width=0.75\textwidth]{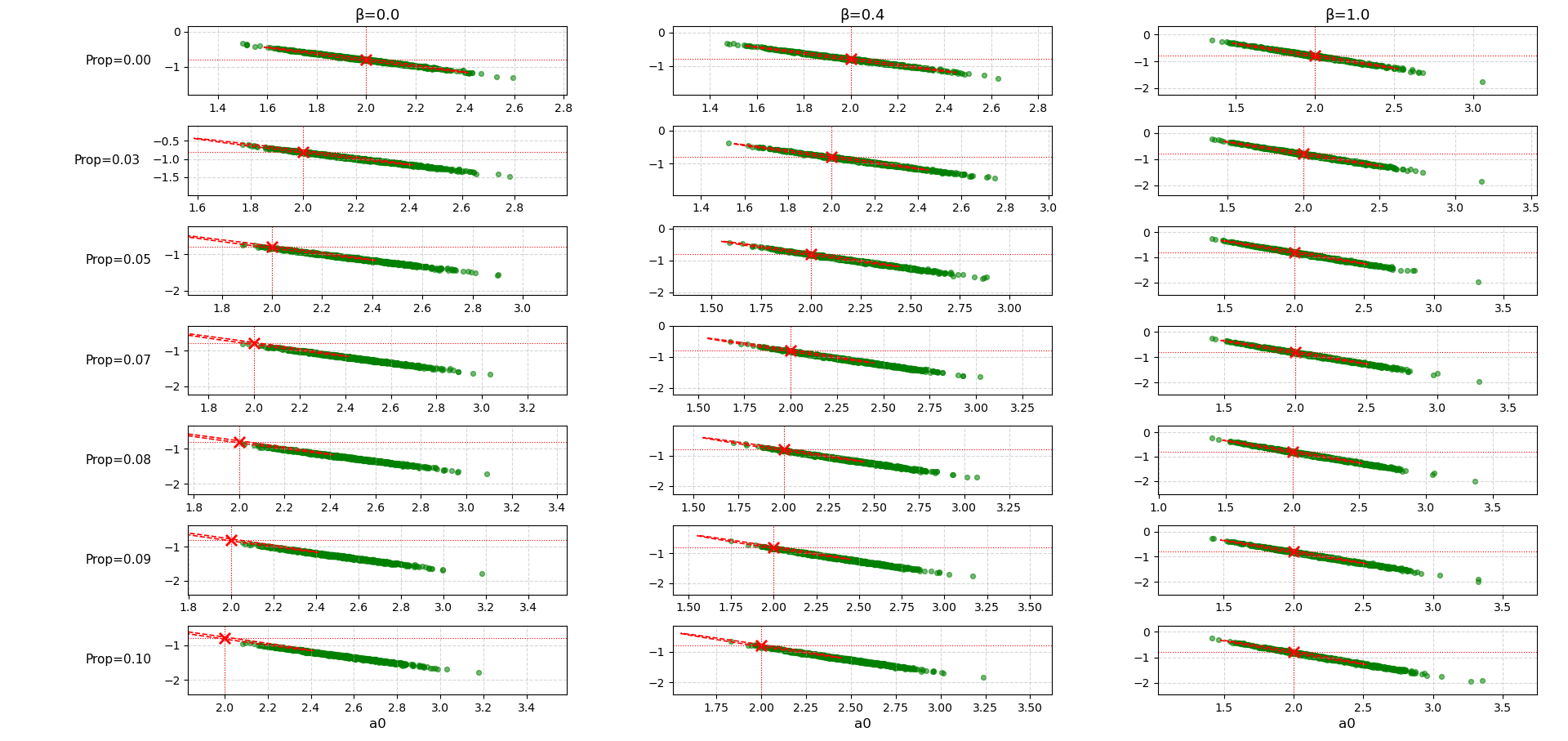}
  \caption{Ellipse of the estimations for $a_0$ and $a_1$ for the 95\% under contamination in  the parameter $a_1$.}
  \label{fig:ellip_a0_a1}
\end{figure}
\begin{figure}[htb]
  \centering
  \includegraphics[width=0.75\textwidth]{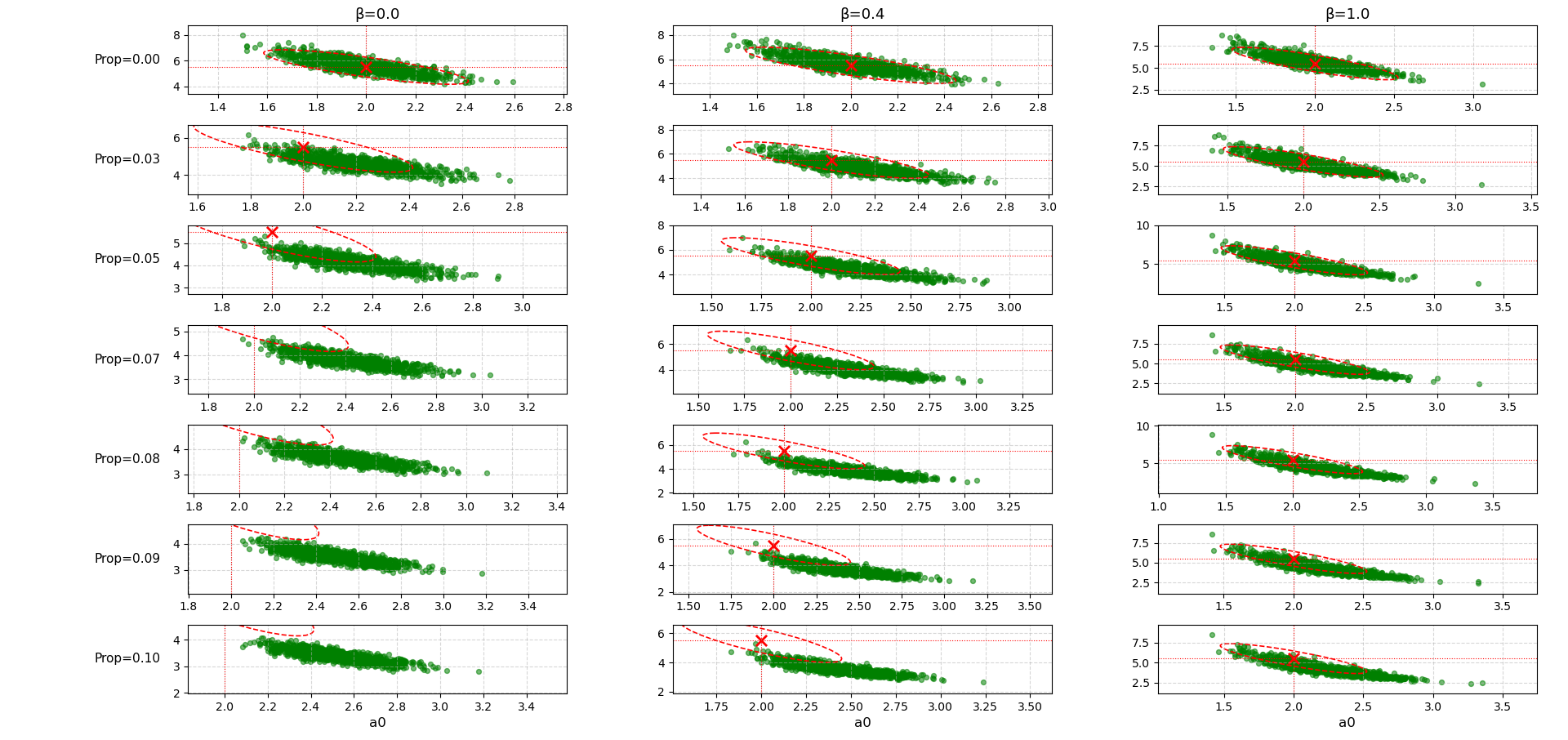}
  \caption{Ellipse of the estimations for $a_0$ and $\eta$ for the 95\% under contamination in  the parameter $a_1$.}
  \label{fig:ellip_a0_eta}
\end{figure}
\begin{figure}[htb]
  \centering
  \includegraphics[width=0.75\textwidth]{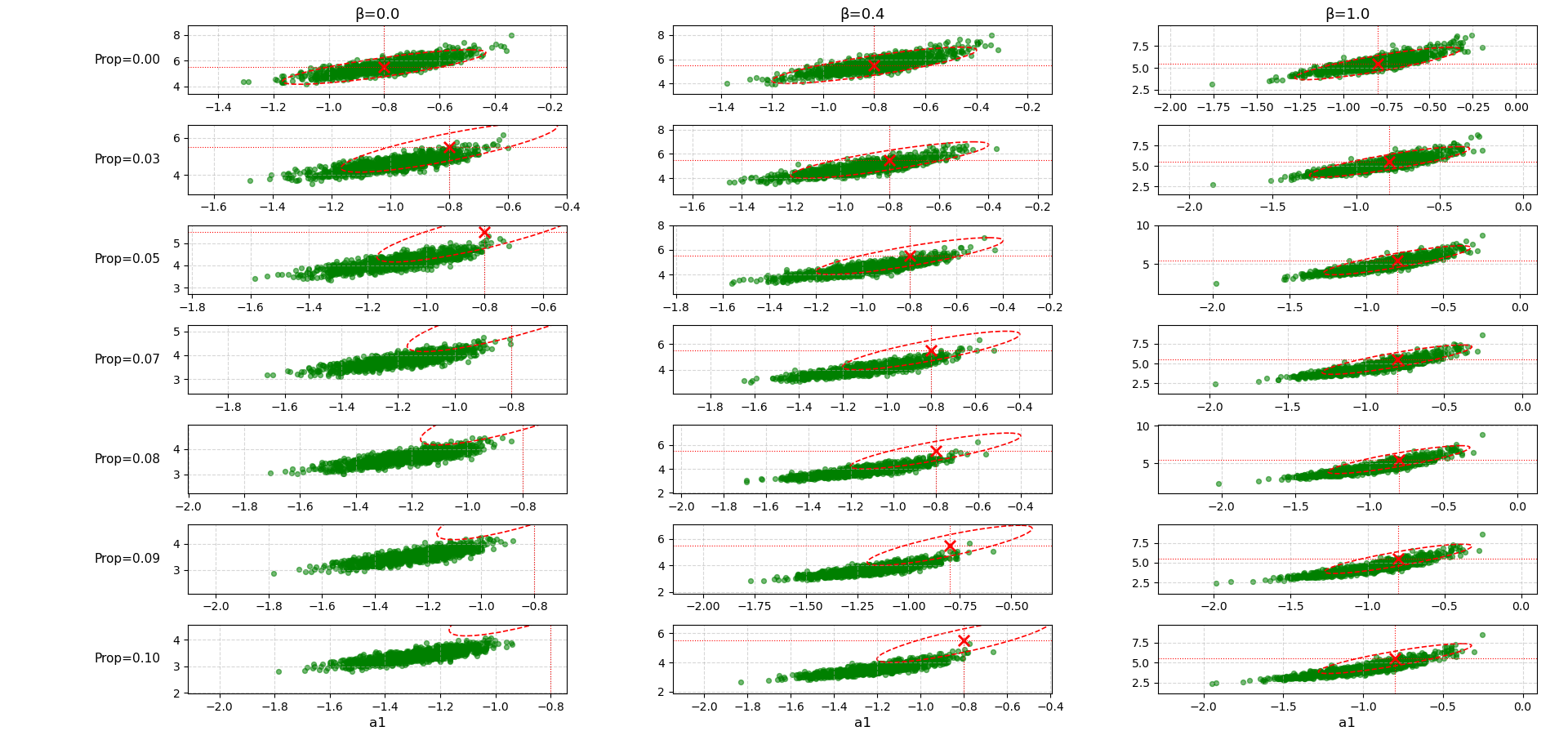}
  \caption{Ellipse of the estimations for $a_1$ and $\eta$ for the 95\% under contamination in  the parameter $a_1$.}
  \label{fig:ellip_a1_eta}
\end{figure}
\section{Real-Data Example \label{sec:example}}

This example was studied by \cite{han2015} to obtain the reliability indices of a kind of solar lightning devices at the normal we will try to do inference for a temperature of $x_0=288^\circ K$. A total of $N = 35$ apparatus were randomly selected for a SSALT with two stress levels $x_1 = 293^\circ K$ and $x_2 = 353^\circ K$. The stress change occur after 500 hours and the experiment ended when all the products failed. In order to make a type I censoring we will stablish an end of the experiment after 530 hours. So all the devices that in the original work failed after that will be considered as devices that survive the test. The original failures times are (in hundred of hours)

\begin{itemize}
  \item Failure times at the first stress level $x_1$:  0.140 0.783 1.324 1.582 1.716 1.794 1.883 2.293 2.660 2.674 2.725 3.085 3.924 4.396 4.612 4.892.
  \item Failure times at the second stress level $x_2$: 5.002 5.022 5.082 5.112 5.147 5.238 5.244 5.247 5.305 5.337 5.407 5.408 5.445 5.483 5.717.
\end{itemize}

The Table~\ref{tab:estimates_and_cis} shows the parameter estimates under different values of $\beta$, together with their approximate CIs, derived from the asymptotic distributions in theorems \ref{thm:asymp}:
\[
\hat{\theta}_i \pm \frac{\widehat{\sigma}_\beta(\widehat{\theta}^\beta_i)}{\sqrt{n}},
\]
with $\widehat{\sigma}_\beta(\widehat{\bm{\theta}}^\beta_i) = \left(J_{\beta}(\widehat{\bm{\theta}}^\beta)^{-1}K_\beta(\widehat{\bm{\theta}}^\beta)J_\beta(\widehat{\bm{\theta}}^\beta)^{-1}\right)_{ii}$ where the corresponding quantities
$J_\beta(\bm{\theta}_i)$ and $K_\beta(\bm{\theta}_i)$ are defined for $a_0$ in Proposition \ref{thm:asympa0}, for $a_1$ in Proposition \ref{thm:asympa1} and for $\eta$ in Proposition \ref{thm:asympeta}.
Larger values of the tuning parameter leads to wider CIs, but results are quite stable across the different $\beta$s.
\begin{table}[htb]
\centering
\caption{Estimations and CI of the parameters.}
\label{tab:estimates_and_cis}
\begin{tabular}{lcccccc}
\toprule
$\beta$ & $\hat{a}_0$ & IC($\hat{a}_0$) & $\hat{a}_1$ ($\times 10^2$) & IC($\hat{a}_1$) ($\times 10^2$) & $\hat{\eta}$ & IC($\hat{\eta}$) \\
\midrule
0.000 & 14.384 & $[9.677, 19.091]$ & -4.400 & $[-6.000, -2.800]$ & 2.793 & $[1.690, 3.896]$ \\
0.200 & 15.746 & $[10.507, 20.985]$ & -4.800 & $[-6.600, -3.100]$ & 1.463 & $[0.824, 2.102]$ \\
0.400 & 15.423 & $[10.098, 20.747]$ & -4.700 & $[-6.500, -3.000]$ & 1.471 & $[0.818, 2.123]$ \\
0.600 & 15.220 & $[9.857, 20.583]$ & -4.600 & $[-6.400, -2.900]$ & 1.472 & $[0.822, 2.121]$ \\
0.800 & 15.090 & $[9.396, 20.784]$ & -4.600 & $[-6.500, -2.700]$ & 1.475 & $[0.790, 2.160]$ \\
1.000 & 14.178 & $[7.783, 20.573]$ & -4.300 & $[-6.400, -2.100]$ & 1.643 & $[0.740, 2.546]$ \\
\bottomrule
\end{tabular}
\end{table}

As outlined in Section~\ref{sec:CI}, robust estimators of lifetime-related characteristics like MTTF, reliability at survival time $t$, and distribution quantiles, can be obtained from the model estimates.

Robust MTTF estimates based on MDPDEs, along with their corresponding direct and transformed approximate confidence intervals (CIs), are presented in Table~\ref{tab:estimates_MTTF_and_cis} for two stress levels:
\begin{itemize}
    \item $x_0=288^ \circ K$ (normal operating conditions)
    \item $x_1=293^ \circ K$ (increased stress).
\end{itemize}
A clear and expected trend is apparent: as the stress level increases ($x_0 < x_1 $), the estimated mean lifetime decreases significantly. Specifically, this reduction is by a factor of $\exp\left(-\hat{a}_1\right)$. For a constant stress level, the MTTF estimates are relatively stable across different $\beta$ values, which suggests a possible absence of outlying observations.

Importantly, under normal operating conditions, the lower bounds of the approximate direct confidence intervals exceed zero, and thus need to be truncated. In contrast, transformed CIs for the MTTF provide positive interval bounds, which are more physically meaningful.

When the stress is increased from $x_0=283^\circ K$ to $x_1=293^\circ K$, the MTTF is reduced to 80\% of its original value. More generally, a $5^\circ K$ increase in environmental temperature may lead to a reduction in the MTTF of the devices by $\left(1-\exp\left(5\widehat{a}_1^\beta \right)\right) \cdot 100\%$, which is approximately 19.75\%.

\begin{table}[htb]
\centering
\caption{Estimated mean lifetime and asymptotic (direct and transformed) confidence intervals (in hundrer of hours) of the lightning apparatus under two  constant temperatures}
\label{tab:estimates_MTTF_and_cis}
\begin{tabular}{lcccc}
\toprule
& Mean lifetime & Direct CI & Transformed CI \\
\midrule
\multicolumn{4}{c}{$x_0 = 288$} \\
\midrule
MLE & 4.818 & $[3.842, 5.794]$ & $[3.935, 5.900]$ \\
0.2 & 5.489 & $[3.224, 7.755]$ & $[3.633, 8.294]$ \\
0.4 & 5.549 & $[3.246, 7.851]$ & $[3.664, 8.402]$ \\
0.6 & 5.642 & $[3.267, 8.017]$ & $[3.703, 8.595]$ \\
0.8 & 5.744 & $[3.236, 8.252]$ & $[3.711, 8.889]$ \\
1 & 5.643 & $[3.299, 7.987]$ & $[3.725, 8.549]$ \\
\midrule
\multicolumn{4}{c}{$x_1 = 293$} \\
\midrule
MLE & 3.865 & $[3.234, 4.496]$ & $[3.288, 4.550]$ \\
0.2 & 4.309 & $[2.777, 5.841]$ & $[3.020, 6.149]$ \\
0.4 & 4.381 & $[2.803, 5.959]$ & $[3.056, 6.280]$ \\
0.6 & 4.471 & $[2.807, 6.136]$ & $[3.082, 6.487]$ \\
0.8 & 4.564 & $[2.767, 6.361]$ & $[3.079, 6.766]$ \\
1 & 4.555 & $[2.828, 6.282]$ & $[3.118, 6.655]$ \\
\bottomrule
\end{tabular}
\end{table}

In an industrial context, the reliability of a device at a specific established time under normal operating conditions is a crucial metric, as is understanding how this reliability is impacted by increased environmental temperatures. For a fixed established time of $t=500h$, the MDPDEs for device reliability and their corresponding direct and transformed approximate confidence intervals (CIs) for various $\beta$ values are presented in Table~\ref{tab:estimates_survival_and_cis} for different constant temperatures.

A significant observation is the substantial decrease in reliability as the stress level (temperature) increases. This finding indicates a faster degradation of the product at higher temperatures. Additionally, the selection of the tuning parameter $\beta$ has a relatively minor effect on the estimated reliability, as all estimates remain quite similar.

Similar to the MTTF estimates, the bounds of the direct CIs for reliability under normal operating conditions are truncated to satisfy the natural constraint that reliability must fall between 0 and 1. While the direct and transformed approximate confidence intervals provide similar bounds at lower temperatures, they show a slight difference at higher temperatures.
\begin{table}[h]
\centering
\caption{Estimated reliability at $t=500$ and asymptotic (direct and transformed) confidence intervals (in hours) for the lightning apparatus under two  constant temperatures.}
\label{tab:estimates_survival_and_cis}
\begin{tabular}{lcccc}
\toprule
& $\hat{R}$(500) & Direct CI & Transformed CI \\
\midrule
\multicolumn{4}{c}{$x_0 = 288$} \\
\midrule
MLE & 0.448 & $[0.247, 0.650]$ & $[0.286, 0.703]$ \\
0.2 & 0.470 & $[0.275, 0.665]$ & $[0.310, 0.712]$ \\
0.4 & 0.477 & $[0.280, 0.673]$ & $[0.316, 0.720]$ \\
0.6 & 0.485 & $[0.285, 0.686]$ & $[0.321, 0.734]$ \\
0.8 & 0.495 & $[0.286, 0.704]$ & $[0.324, 0.756]$ \\
1 & 0.505 & $[0.281, 0.730]$ & $[0.324, 0.788]$ \\
\midrule
\multicolumn{4}{c}{$x_1 = 293$} \\
\midrule
MLE & 0.227 & $[0.065, 0.388]$ & $[0.111, 0.462]$ \\
0.2 & 0.341 & $[0.156, 0.527]$ & $[0.198, 0.587]$ \\
0.4 & 0.350 & $[0.162, 0.539]$ & $[0.205, 0.600]$ \\
0.6 & 0.361 & $[0.166, 0.557]$ & $[0.211, 0.620]$ \\
0.8 & 0.373 & $[0.165, 0.581]$ & $[0.213, 0.651]$ \\
1 & 0.379 & $[0.152, 0.606]$ & $[0.208, 0.689]$ \\
\bottomrule
\end{tabular}
\end{table}

Finally, let us consider the results for the 0.5-quantile estimates. The 50\%-quantile represents the time by which 50\% of the lightning apparatus are expected to have survived under constant temperature conditions. Table \ref{tab:estimates_median_and_cis} presents the MDPDEs for the 0.5-quantile and their corresponding approximate confidence intervals (CIs).

As expected, all 0.5-quantile estimates show a strong inverse relationship with temperature; the time at which 50\% of the devices are still functioning is much shorter at higher temperatures. When comparing direct and transformed approximate CIs for the 0.5-quantile, the transformed CIs tend to have higher upper and lower bounds.

\begin{table}[htb]

	\centering
	\caption{MDPDEs for the $50\%-$quantile (in minutes) and their corresponding asymptotic (direct and transformed) confidence intervals for the lightning apparatus under two constant temperatures.}
	\label{tab:estimates_median_and_cis}
\begin{tabular}{lcccc}	

\toprule
& $\hat{Q}_{0.5}$ & Direct CI & Transformed CI \\
\midrule
\multicolumn{4}{c}{$x_0 = 288$} \\
\midrule
MLE & 4.746 & $[3.798, 5.694]$ & $[3.886, 5.795]$ \\
0.2 & 4.718 & $[2.948, 6.487]$ & $[3.242, 6.865]$ \\
0.4 & 4.778 & $[2.973, 6.584]$ & $[3.275, 6.972]$ \\
0.6 & 4.860 & $[2.973, 6.748]$ & $[3.296, 7.167]$ \\
0.8 & 4.952 & $[2.932, 6.972]$ & $[3.294, 7.446]$ \\
1 & 5.047 & $[3.051, 7.043]$ & $[3.398, 7.495]$ \\
\midrule
\multicolumn{4}{c}{$x_1 = 293$} \\
\midrule
MLE & 3.807 & $[3.158, 4.457]$ & $[3.210, 4.515]$ \\
0.2 & 3.703 & $[2.445, 4.962]$ & $[2.636, 5.203]$ \\
0.4 & 3.773 & $[2.470, 5.076]$ & $[2.671, 5.329]$ \\
0.6 & 3.852 & $[2.461, 5.243]$ & $[2.685, 5.527]$ \\
0.8 & 3.935 & $[2.401, 5.469]$ & $[2.665, 5.811]$ \\
1 & 4.074 & $[2.474, 5.674]$ & $[2.751, 6.033]$ \\

\bottomrule
\end{tabular}
\end{table}

\section{Conclusions}
In this paper, we have continued to explore the performance of the DPD-based method for robust parameter estimation under continuous monitoring of a SSALT model. In this study, we have extended the methodology to a more complex lifetime distribution, specifically the Weibull distribution, which allows for a more flexible modelling of the device lifetimes. Explicit expressions for the MDPDE, as well as the asymptotic variance-covariance of the parameter estimates and other related quantities, have been derived.

Our numerical experiments highlight that the DPD approach provides improved estimation performance compared to classical maximum likelihood methods when extreme or influential observations are present. In particular, the impact of contamination on the variance and bias of the parameter estimates is substantially mitigated, confirming the robustness of the methodology. The observed performance suggests that deriving the Influence Function (IF) for the resulting MDPDE in this generalized setting would be a natural and valuable extension, as it would allow for a formal theoretical assessment of robustness under various contamination schemes.  

Furthermore, having access to the asymptotic variance of the MDPDE opens the possibility of constructing robust inferential procedures, such as Rao-type and Wald-type test statistics. Evaluating the performance of these tests in the presence of outliers would be of practical interest, particularly in reliability and life-testing applications where extreme values can heavily distort classical inference. Additional research could also explore extensions to other parametric families, such as log-normal, gamma, or log-logistic distributions, under the mixed distribution formulation, thereby broadening the applicability of the DPD framework in life-stress modeling.  

Finally, the consideration of more realistic data scenarios, such as type-II censoring or interval-censored observations, remains an open research direction. Adapting the DPD-based estimators to accommodate these censoring schemes would be essential for practical implementation in real-life industrial experiments. In summary, this work reinforces the practical relevance of robust estimation using DPD, illustrates its advantages in complex distributions, and outlines multiple avenues for future theoretical and applied research in robust life-testing methodology.

\clearpage
\bibliographystyle{plainnat}  
\bibliography{sn-bibliography}

\clearpage
\section{Appendix}
\subsection{Proof Theorem 3}
Following from \citet{basu1997}, and supposing that the Weibull distribution is the lifetime distribution, the expression of $\bm{J}_{\beta}$, $\bm{K}_{\beta}$ and $\bm{\xi}_{\beta}$ are:
\begin{align*}
\bm{J}_{\beta}\left(\bm{\theta}\right)&=\int \bm{u}_{\bm{\theta}}(t)\bm{u}_{\bm{\theta}}(t)^{\intercal}f(t|\bm{\theta})^{\beta+1}dt
\\
\bm{K}_{\beta}\left(\bm{\theta}\right)&=\int \bm{u}_{\bm{\theta}}(t)\bm{u}_{\bm{\theta}}(t)^{\intercal}f(t|\bm{\theta})^{2\beta+1}dt-\bm{\xi}_{\beta}\left(\bm{\theta}\right)\bm{\xi}_{\beta}\left(\bm{\theta}\right)^{\intercal}
\end{align*}
being
\begin{equation*}
\bm{u}_{\bm{\theta}}(t)=\frac{\partial \log\left(f_{T}\left(t|\bm{\theta}\right)\right)}{\partial \bm{\theta}}.
\end{equation*}
The elements of the diagonal for the matrix $\mathbf{J}_{\beta}$ are:
\begin{align*}
J_{ii}^{\beta}(\theta_i)&=\int_0^{\tau_1}\left(\frac{\partial\log(f_{x_1}(t|\lambda_1,\eta))}{\partial \theta_i}\right)^2 f_{x_1}(t|\lambda_1,\eta)^{\beta+1}dt+\int_{\tau_1}^{\tau_2}\left(\frac{\partial\log(f_{x_2}(t+h|\lambda_1,\lambda_2,\eta))}{\partial \theta_i}\right)^2 f_{x_2}(t+h|\lambda_1,\lambda_2,\eta)^{\beta+1}dt
\\
&+\left(\frac{\partial\log(1-F_{x_2}(\tau_2+h|\lambda_1,\lambda_2,\eta))}{\partial \theta_i}\right)^2 (1-F_{x_2}(\tau_2+h|\lambda_1,\lambda_2,\eta))^{\beta+1}, \quad i=1,2,3;
\end{align*}
the elements outside the diagonal for the matrix $\mathbf{J}_{\beta}$ are:
\begin{align*}
J_{ij}^{\beta}(\theta_i,\theta_j)&=\int_0^{\tau_1}\left(\frac{\partial\log(f_{x_1}(t|\lambda_1,\eta))}{\partial \theta_i}\right) \left(\frac{\partial\log(f_{x_1}(t|\lambda_1,\eta))}{\partial \theta_j}\right) f_{x_1}(t|\lambda_1,\eta)^{\beta+1}dt
\\
&+\int_{\tau_1}^{\tau_2}\left(\frac{\partial\log(f_{x_2}(t+h|\lambda_1,\lambda_2,\eta))}{\partial \theta_i}\right) \left(\frac{\partial\log(f_{x_2}(t+h|\lambda_1,\lambda_2,\eta))}{\partial \theta_j}\right)f_{x_2}(t+h|\lambda_1,\lambda_2,\eta)^{\beta+1}dt
\\
&+\left(\frac{\partial\log(1-F_{x_2}(\tau_2+h|\lambda_1,\lambda_2,\eta))}{\partial \theta_i}\right) \left(\frac{\partial\log(1-F_{x_2}(\tau_2+h|\lambda_1,\lambda_2,\eta))}{\partial \theta_j}\right)(1-F_{x_2}(\tau_2+h|\lambda_1,\lambda_2,\eta))^{\beta+1},
\\
& \quad i,j=1,2,3, \quad i \neq j;
\end{align*}
and the elements of $\mathbf{\xi}_{\beta}$ are:
\begin{align*}
\xi_{i}^{\beta}(\theta_i)&=\int_0^{\tau_1}\left(\frac{\partial\log(f_{x_1}(t|\lambda_1,\eta))}{\partial \theta_i}\right) f_{x_1}(t|\lambda_1,\eta)^{\beta+1}dt+\int_{\tau_1}^{\tau_2}\left(\frac{\partial\log(f_{x_2}(t+h|\lambda_1,\lambda_2,\eta))}{\partial \theta_i}\right) f_{x_2}(t+h|\lambda_1,\lambda_2,\eta)^{\beta+1}dt
\\
&+\left(\frac{\partial\log(1-F_{x_2}(\tau_2+h|\lambda_1,\lambda_2,\eta))}{\partial \theta_i}\right) (1-F_{x_2}(\tau_2+h|\lambda_1,\lambda_2,\eta))^{\beta+1}, \quad i=1,2,3.
\end{align*}
\subsection{Proof Lemma 1}

\begin{equation*}
\zeta_{\alpha,\beta}^{\tau_1}\left(a_0,a_1,\eta\right)=\int_{0}^{\tau_1} \left(\frac{t}{\lambda_1}\right)^{\alpha} \frac{\eta^{\beta+1}}{\lambda_1^{\frac{\eta}{\beta+1}}}t^{(\eta-1)(\beta+1)}\exp\left(-\left(\frac{t}{\lambda_1}\right)^{\eta}(\beta+1)\right). 
\end{equation*}
Doing the following change of variable:
\begin{align*}
\left(\frac{t}{\lambda_1}\right)^{\eta}(\beta+1)=l & \Rightarrow t^{\eta}=\frac{l}{\beta+1}\lambda_1^{\eta} \Rightarrow t=\frac{l^{\frac{1}{\eta}}}{(\beta+1)^{\frac{1}{\eta}}}\lambda_1  \\
&\Rightarrow dt= \frac{\lambda_1}{(\beta+1)^{\frac{1}{\eta}}}\frac{1}{\eta}  l^{\frac{1}{\eta}-1}dl.
\end{align*}
Then we have:
\begin{align*}
\zeta_{\alpha,\beta}^{\tau_1}\left(a_0,a_1,\eta\right)=\int_0^{\left(\frac{\tau_1}{\lambda_1}\right)^{\eta}(\beta+1)} &\frac{1}{\lambda^{\alpha}}\frac{l^{\frac{\alpha}{\eta}}}{(\beta+1)^{\frac{\alpha}{\eta}}}\lambda_1^{\alpha} \frac{\eta^{\beta+1}}{\lambda_1^{\eta(\beta+1)}}\frac{l^{(\eta-1)(\beta+1)}}{(\beta+1)^{\frac{(\eta-1)(\beta+1)}{\eta}}}
\lambda_1^{(\eta-1)(\beta+1)}\\
& \cdot \exp\left(-l\right) \frac{\lambda_1}{(\beta+1)^{\frac{1}{\eta}}} \frac{1}{\eta} l^{\frac{1}{\eta}-1}dl.
\end{align*}
Then, we have the following exponents:
\begin{align*}
l &\rightarrow \frac{\alpha}{\eta}+ \frac{(\beta+1)(\eta-1)}{\eta}+\frac{1}{\eta}-1=\frac{\alpha+(\beta+1)(\eta-1)-1}{\eta}-1
\\
\lambda_1 &\rightarrow -\alpha+\alpha -\eta(\beta+1) + (\eta-1)(\beta+1)+1=-\beta
\\
\eta & \rightarrow \beta+1-1=\beta
\\
(\beta+1) & \rightarrow -\frac{\alpha}{\eta} -\frac{(\eta-1)(\beta+1)}{\eta} - \frac{1}{\eta} = -\frac{\alpha + (\eta-1)(\beta+1)+1}{\eta}.
\end{align*}
Then we have:
\begin{align*}
\zeta_{\alpha,\beta}^{\tau_1}\left(a_0,a_1,\eta\right)&=\left(\frac{\eta}{\lambda_1}\right)^\beta \frac{1}{(\beta+1)^{\frac{\alpha+(\eta-1)(\beta+1)+1}{\eta}}}\int_0^{\left(\frac{\tau_1}{\lambda_1}\right)^{\eta}(\beta+1)} \exp\left(-l\right) l^{\frac{\alpha+(\eta-1)(\beta+1)+1}{\eta}-1} dl
\\
&=\left(\frac{\eta}{\lambda_1}\right)^\beta \frac{1}{(\beta+1)^{\frac{\alpha+(\eta-1)(\beta+1)+1}{\eta}}}\gamma\left(\frac{\alpha+(\eta-1)(\beta+1)+1}{\eta},\left(\frac{\tau_1}{\lambda_1}\right)^{\eta}(\beta+1)\right).
\end{align*}
We also have to prove the following result:
\begin{align*}
\zeta_{\alpha,\beta}^{\tau_1,\tau_2}\left(a_0,a_1,\eta\right)&=\int_{\tau_1}^{\tau_2}\left(\frac{t+\frac{\lambda_2}{\lambda_1}\tau_1-\tau_1}{\lambda_2} \right)^{\alpha}g_{x_2}(t)^{\beta+1}dt
\\
&=\int_{\tau_1}^{\tau_2}\left(\frac{t+\frac{\lambda_2}{\lambda_1}\tau_1-\tau_1}{\lambda_2} \right)^{\alpha} \frac{\eta^{\beta+1}}{\lambda_2^{\eta(\beta+1)}} \left(t+\frac{\lambda_2}{\lambda_1}\tau_1-\tau_1\right)^{(\eta-1)(\beta+1)}
\\
& \cdot \exp\left(-\frac{1}{\lambda_2^{\eta}}\left(t+\frac{\lambda_2}{\lambda_1}\tau_1-\tau_1\right)^{\eta}(\beta+1)\right)dt.
\end{align*}
Doing the following change of variable:
\begin{align*}
\left( \frac{t+\frac{\lambda_2}{\lambda_1}\tau_1-\tau_1}{\lambda_2}\right)^{\eta} (\beta+1)&= l \Rightarrow \left(t+\frac{\lambda_2}{\lambda_1}\tau_1-\tau_1\right)^{\eta}=\frac{l \lambda_2 ^{\eta}}{\beta+1}
\\
\Rightarrow  t+\frac{\lambda_2}{\lambda_1}\tau_1-\tau_1&=\frac{l^{\frac{1}{\eta}} \lambda_2}{(\beta+1)^{\frac{1}{\eta}}} \Rightarrow
 t=\frac{l^{\frac{1}{\eta}} \lambda_2}{(\beta+1)^{\frac{1}{\eta}}}+\frac{\lambda_2}{\lambda_1}\tau_1+\tau_1
 \\
 dt&=\frac{1}{\eta}l^{\frac{1}{\eta}-1}\lambda_2\frac{1}{(\beta+1)^{\frac{1}{\eta}}}dl.
\end{align*}
And the integral can be expressed as
\begin{align*}
\zeta_{\alpha,\beta}^{\tau_1,\tau_2}\left(a_0,a_1,\eta\right)&=\int_{\left(\frac{\tau_1}{\lambda_1}\right)^{\eta}(\beta+1)}^{\left(\tau_2+\frac{\lambda_2}{\lambda_1}\tau_1-\tau_1\right)^{\eta}\frac{1}{\lambda_2^{\eta}}(\beta+1)}
\frac{1}{\lambda_2^{\alpha}} \frac{l^{\frac{\alpha}{\eta}}\lambda_2^{\alpha}}{(\beta+1)^{\frac{\alpha}{\eta}}}\frac{\eta^{\beta+1}}{\lambda_2^{\eta(\beta+1)}}\frac{l^{\frac{(\beta+1)(\eta-1)}{\eta}}\lambda_2^{(\beta+1)(\eta-1)}}{(\beta+1)^{\frac{(\beta+1)(\eta-1)}{\eta}}}
\\
& \cdot \exp\left(-l\right)\frac{1}{\eta} l^{\frac{1}{\eta}-1}\lambda_2 \frac{1}{(\beta+1)^{\frac{1}{\eta}}}dl.
\end{align*}
Then, we have the following exponents:
\begin{align*}
\eta & \rightarrow (\beta+1)-1=\beta
\\
l  &\rightarrow \frac{\alpha}{\eta} +\frac{(\beta+1)(\eta-1)}{\eta} + \frac{1}{\eta} -1 = \frac{\alpha+(\beta+1)(\eta-1)+1}{\eta}-1
\\
\lambda_2 & \rightarrow -\alpha+\alpha+(\beta+1)(\eta-1)+1-\eta(\beta+1)=-\beta
\\
\beta+1 & \rightarrow -\frac{\alpha}{\eta}- \frac{(\beta+1)(\eta-1)}{\eta}-\frac{1}{\eta}=-\frac{\alpha+(\beta+1)(\eta-1)-1}{\eta}.
\end{align*}
So the integral can be expressed as
\begin{align*}
&\left(\frac{\eta}{\lambda_2}\right)^{\beta}\frac{1}{(\beta+1)^{\frac{\alpha+(\beta+1)(\eta-1)-1}{\eta}}}\int_{\left(\frac{\tau_1}{\lambda_1}\right)^{\eta}(\beta+1)}^{\frac{1}{\lambda_2^{\eta}}\left(\tau_2+\frac{\lambda_2}{\lambda_1}\tau_1-\tau_1\right)} l^{\frac{\alpha+(\beta+1)(\eta-1)+1}{\eta}}\exp\left(-l\right) dl 
\\
&=\left(\frac{\eta}{\lambda_2}\right)^{\beta}\frac{1}{(\beta+1)^{\frac{\alpha+(\beta+1)(\eta-1)-1}{\eta}}} \left\{ \gamma\left(\frac{\alpha+(\beta+1)(\eta-1)+1}{\eta},\frac{1}{\lambda_2}\left(\tau_2+\frac{\lambda_2}{\lambda_1}\tau_1-\tau_1\right) ^{\eta}(\beta+1)\right) \right.
\\
& \left.-\gamma\left(\frac{\alpha+(\beta+1)(\eta-1)+1}{\eta},\left(\frac{\tau_1}{\lambda_1}\right)^{\eta}(\beta+1)\right)\right\}.
\end{align*}

\subsection{Proof Proposition 2}
\begin{align*}
h_{n}^{1,\beta}(a_0,a_1,\eta)&=\int_0^{\tau_1}f_{x_1}\left(t|\lambda_1,\eta\right)^{\beta+1}dt+\int_{\tau_1}^{\tau_2}f_{x_2}\left(t+h|\lambda_1,\lambda_2,\eta\right)^{\beta+1}dt+\left(1- F_{x_2}\left(\tau_2+h|\lambda_1,\lambda_2,\eta\right)\right)^{\beta+1}
\\
&=\zeta_{0,\beta}^{\tau_1}(a_0,a_1,\eta)+\zeta_{0,\beta}^{\tau_1,\tau_2}(a_0,a_1,\eta)+\exp\left(-\frac{\beta+1}{\lambda_2^{\eta}}\left(\tau_2+\frac{\lambda_2}{\lambda_1}\tau_1-\tau_1\right)\right)
\\
&=\left(\frac{\eta}{\lambda_1}\right)^{\beta}\frac{1}{(\beta+1)^{\frac{(\beta+1)(\eta-1)+1}{\eta}}} \gamma\left(\frac{(\beta+1)(\eta-1)+1}{\eta}, \left(\frac{\tau_1}{\lambda_1}\right)^{\eta}(\beta+1)\right)
\\
&+\left(\frac{\eta}{\lambda_1}\right)^{\beta}\frac{1}{(\beta+1)^{\frac{(\beta+)(\eta-1)+1}{\eta}}}\left\{\gamma\left(\frac{(\beta+1)(\eta-1)+1}{\eta},\frac{1}{\lambda_2^{\eta}}\left(\tau_2+\frac{\lambda_2}{\lambda_1}\tau_1-\tau_1\right)^{\eta}(\beta+1)\right)\right.
\\
&-\left. \gamma\left(\frac{(\eta-1)(\beta+1)}{\eta},\left(\frac{\tau_1}{\lambda_1}\right)^{\eta}(\beta+1)\right) \right\}+\exp\left(-\frac{\beta+1}{\lambda_2^{\eta}}\left(\tau_2+\frac{\lambda_2}{\lambda_1}\tau_1-\tau_1\right)\right).
\end{align*}
\subsection{Proof Proposition 4}

For  $J_{11}^{\beta}(a_0)$
\begin{align*}
J_{11}^{\beta}(a_0)&=\int_{0}^{\tau_1}\left(\frac{\partial \log\left(f_{x_1}(t|\lambda_1,\eta)\right)}{\partial a_0}\right)^2 f_{x_1}(t|\lambda_1,\eta)^{\beta+1}dt+\int_{\tau_1}^{\tau_2}\left(\frac{\partial \log\left(f_{x_2}(t+h|\lambda_1,\lambda_2,\eta)\right)}{\partial a_0}\right)^2 f_{x_2}(t+h|\lambda_1,\lambda_2,\eta)^{\beta+1}dt
\\
&+\left(\frac{\partial \log\left(1-F_{x_2}(\tau_2+h|\lambda_1,\lambda_2,\eta)\right)}{\partial a_0}\right)^2 \left(1-F_{x_2}(\tau_2+h|\lambda_1,\lambda_2,\eta)\right)^{\beta+1}
\\
&=J_{11,\tau_1}^{\beta}(a_0)+J_{11,\tau_1 \tau_2}^{\beta} (a_0)+ J_{11,\tau_2}^{\beta}(a_0).
\end{align*}
For $J_{11,\tau_1}^{\beta}(a_0)$:
\begin{equation*}
J_{11,\tau_1}^{\beta}(a_0)=\int_{0}^{\tau_1}\left(\frac{\partial \log\left(f_{x_1}(t|\lambda_1,\eta)\right)}{\partial a_0}\right)^2 f_{x_1}(t|\lambda_1,\eta)^{\beta+1}dt
\end{equation*}
\begin{align}
\log \left(f_{x_1}(t|\lambda_1,\eta)\right)&=\log(\eta)-\eta\log(\lambda_1)+(\eta-1)\log(t)-\left(\frac{t}{\lambda_1}\right)^{\eta} \nonumber
\\
\Rightarrow \frac{\partial  \log \left(f_{x_1}(t|\lambda_1,\eta)\right)}{\partial a_0}&=-\eta +\frac{t^{\eta}}{\lambda_1^{\eta}}=\eta\left(-1+\left(\frac{t}{\lambda_1}\right)^{\eta}\right) \nonumber
\\
\Rightarrow \left( \frac{\partial  \log \left(f_{x_1}(t|\lambda_1,\eta)\right)}{\partial a_0}\right)^2&=\eta^2\left(1-2\left(\frac{t}{\lambda_1}\right)^\eta+\left(\frac{t}{\lambda_1}\right)^{2\eta}\right).
\label{eq:dev_f1_a0}
\end{align}
So
\begin{align*}
J_{11,\tau_1}^{\beta}(a_0)&=\eta^2\left\{ \int_{0}^{\tau_1} f_{x_1}(t|\lambda_1,\eta)^{\beta+1}dt+ \int_{0}^{\tau_1} \left(\frac{t}{\lambda_1}\right)^{2\eta} f_{x_1}(t|\lambda_1,\eta)^{\beta+1}dt\right.
\\
&\left.-2\int_{0}^{\tau_1} \left(\frac{t}{\lambda_1}\right)^{\eta} f_{x_1}(t|\lambda_1,\eta)^{\beta+1}dt\right\}=\eta^2\left\{ \zeta_{0,\beta}^{\tau_1}(a_0,a_1,\eta)+\zeta_{2\eta,\beta}^{\tau_1}(a_0,a_1,\eta)-2\zeta_{\eta,\beta}^{\tau_1}(a_0,a_1,\eta)\right\}.
\end{align*}
In relation to $J_{11,\tau_1\tau_2}^{\beta}(a_0)$ we have
\begin{equation*}
J_{11,\tau_1 \tau_2}^{\beta} (a_0)=\int_{\tau_1}^{\tau_2}\left(\frac{\partial \log\left(f_{x_2}(t+h|\lambda_1,\lambda_2,\eta)\right)}{\partial a_0}\right)^2 f_{x_2}(t+h|\lambda_1,\lambda_2,\eta)^{\beta+1}dt
\end{equation*}
\begin{align*}
\log \left(f_{x_2}(t+h|\lambda_1,\lambda_2,\eta)\right)&=\log(\eta)-\eta\log(\lambda_2)+(\eta-1)\log\left(t+\frac{\lambda_2}{\lambda_1}\tau_1-\tau_1\right)-\left(\frac{t+\frac{\lambda_2}{\lambda_1}\tau_1-\tau_1}{\lambda_2}\right)^{\eta}
\\
\Rightarrow \frac{\partial  \log \left(f_{x_2}(t+h|\lambda_1,\lambda_2,\eta)\right)}{\partial a_0}&=-\eta +(\eta-1)\frac{1}{t+\frac{\lambda_2}{\lambda_1}\tau_1-\tau_1}\tau_1\frac{\partial}{\partial a_0}\left(\frac{\lambda_2}{\lambda_1}\right)
\\
&-\frac{\eta\left(t+\frac{\lambda_2}{\lambda_1}\tau_1-\tau_1\right)^{\eta-1}\tau_1\frac{\partial}{\partial a_0}\left(\frac{\lambda_2}{\lambda_1}\right)\lambda_2^{\eta}}{\lambda_2^{2\eta}}+\frac{\eta\lambda_2^{\eta-1}\frac{\partial \lambda_2}{\partial a_0}\left(t+\frac{\lambda_2}{\lambda_1}\tau_1-\tau_1\right)^{\eta}}{\lambda_2^{2\eta}}
\end{align*}
\begin{align}
\lambda_2&=\exp(a_0+a_1x_2),\quad \lambda_1=\exp(a_0+a_1x_1) \Rightarrow \frac{\lambda_2}{\lambda_1}=\exp(a_1(x_2-x_1)) \nonumber
\\
&\Rightarrow \frac{\partial}{\partial a_0}\left(\frac{\lambda_2}{\lambda_1}\right)=0 \nonumber
\\
&\Rightarrow\frac{\partial  \log \left(f_{x_2}(t+h|\lambda_1,\lambda_2,\eta)\right)}{\partial a_0}=-\eta+\frac{\eta}{\lambda_2^{\eta}}\left(t+\frac{\lambda_2}{\lambda_1}\tau_1-\tau_1\right)^{\eta}=\eta\left(-1+\left(\frac{t+\frac{\lambda_2}{\lambda_1}\tau_1-\tau_1}{\lambda_2}\right)^{\eta}\right)
\label{eq:dev_f2_a0}
\end{align}
\begin{align*}
\Rightarrow \left( \frac{\partial  \log \left(f_{x_2}(t+h|\lambda_1,\lambda_2,\eta)\right)}{\partial a_0}\right)^2&=\eta^2\left(1+\left(\frac{t+\frac{\lambda_2}{\lambda_1}\tau_1-\tau_1}{\lambda_2}\right)^{2\eta}-2\left(\frac{t+\frac{\lambda_2}{\lambda_1}\tau_1-\tau_1}{\lambda_2}\right)^{\eta}\right).
\end{align*}
Then
\begin{align*}
J_{11,\tau_1\tau_2}^{\beta}(a_0)&=\eta^2\left\{\int_{\tau_1}^{\tau_2}f_{x_2}(t+h|\lambda_1,\lambda_2,\eta)^{\beta+1}dt+\int_{\tau_1}^{\tau_2}\left(\frac{t+\frac{\lambda_2}{\lambda_1}\tau_1-\tau_1}{\lambda_2}\right)^{2\eta}f_{x_2}(t+h|\lambda_1,\lambda_2,\eta)^{\beta+1}dt \right.
\\
&\left. -2\int_{\tau_1}^{\tau_2} \left(\frac{t+\frac{\lambda_2}{\lambda_1}\tau_1-\tau_1}{\lambda_2}\right)^\eta f_{x_2}(t+h|\lambda_1,\lambda_2,\eta)^{\beta+1}dt\right\}
\\
&=\eta^{2}\left\{\zeta_{0,\beta}^{\tau_1,\tau_2}(a_0,a_1,\eta)+\zeta_{2\eta,\beta}^{\tau_1,\tau_2}(a_0,a_1,\eta)-2\zeta_{\eta,\beta}^{\tau_1,\tau_2}(a_0,a_1,\eta)\right\}.
\end{align*}
In relation to $J_{11,\tau_2}^{\beta}(a_0)$ we have
\begin{equation*}
J_{11,\tau_2}^{\beta} (a_0)=\left(\frac{\partial \log\left(1-F_{x_2}(\tau_2+h|\lambda_1,\lambda_2,\eta)\right)}{\partial a_0}\right)\left(1-F_{x_2}(\tau_2+h|\lambda_1,\lambda_2,\eta)\right)^{\beta+1}
\end{equation*}
\begin{align*}
&\log\left(1-F_{x_2}(\tau_2+h|\lambda_1,\lambda_2,\eta)\right)=-\left(\frac{\tau_2+\frac{\lambda_2}{\lambda_1}\tau_1-\tau_1}{\lambda_2}\right)^\eta
\\
&\Rightarrow \frac{\partial \log\left(1-F_{x_2}(\tau_2+h|\lambda_1,\lambda_2,\eta)\right)}{\partial a_0}=-\eta \left(\frac{\tau_2+\frac{\lambda_2}{\lambda_1}\tau_1-\tau_1}{\lambda_2}\right)^{\eta-1} \frac{\frac{\partial}{\partial a_0}\left(\frac{\lambda_2}{\lambda_1}\right)-\frac{\partial\lambda_2}{\partial a_0}(\tau_2+\frac{\lambda_2}{\lambda_1}\tau_1-\tau_1)}{\lambda_2^2}.
\end{align*}
Noting that
\begin{align*}
\frac{\partial}{\partial a_0}\left(\frac{\lambda_2}{\lambda_1}\right)&=0;
\\
\frac{\partial \lambda_2}{\partial a_0}&=\frac{\partial}{\partial a_0} \exp(a_0+a_1x_2) = \exp(a_0+a_1x_2)=\lambda_2.
\end{align*}
We have
\begin{align*}
&\frac{\partial \log\left(1-F_{x_2}(\tau_2+h|\lambda_1,\lambda_2,\eta)\right)}{\partial a_0}=\eta \left(\frac{\tau_2+\frac{\lambda_2}{\lambda_1}\tau_1-\tau_1}{\lambda_2}\right)^{\eta-1} \frac{\tau_2+ \frac{\lambda_2}{\lambda_1}\tau_1-\tau_1}{\lambda_2}=\eta \left(\frac{\tau_2+\frac{\lambda_2}{\lambda_1}\tau_1-\tau_1}{\lambda_2}\right)^{\eta} 
\\
&\Rightarrow \left(\frac{\partial \log\left(1-F_{x_2}(\tau_2+h|\lambda_1,\lambda_2,\eta)\right)}{\partial a_0}\right)^2=\eta^2 \left(\frac{\tau_2+\frac{\lambda_2}{\lambda_1}\tau_1-\tau_1}{\lambda_2}\right)^{2\eta} 
\\
&\Rightarrow  J_{11,\tau_2}^{\beta}(a_0)=\eta^2 \left(\frac{\tau_2+\frac{\lambda_2}{\lambda_1}\tau_1-\tau_1}{\lambda_2}\right)^{2\eta} \exp\left(-\left(\frac{\tau_2+\frac{\lambda_2}{\lambda_1}\tau_1-\tau_1}{\lambda_2}\right)^{\eta}(\beta+1)\right).
\end{align*}

\subsection{Proof Proposition 5}

For $J_{22}^{\beta}(a_1)$ we have:
\begin{align}
J_{22}^{\beta}(a_1)&=\int_{0}^{\tau_1}\left(\frac{\partial \log\left(f_{x_1}(t|\lambda_1,\eta)\right)}{\partial a_1}\right)^2 f_{x_1}(t|\lambda_1,\eta)^{\beta+1}dt+\int_{\tau_1}^{\tau_2}\left(\frac{\partial \log\left(f_{x_2}(t+h|\lambda_1,\lambda_2,\eta)\right)}{\partial a_1}\right)^2 f_{x_2}(t+h|\lambda_1,\lambda_2,\eta)^{\beta+1}dt \nonumber
\\
&+\left(\frac{\partial \log\left(1-F_{x_2}(\tau_2+h|\lambda_1,\lambda_2,\eta)\right)}{\partial a_1}\right)^2 \left(1-F_{x_2}(\tau_2+h|\lambda_1,\lambda_2,\eta)\right)^{\beta+1} \nonumber
\\
&=J_{22,\tau_1}^{\beta}(a_1)+J_{22,\tau_1 \tau_2}^{\beta} (a_1)+ J_{22,\tau_2}^{\beta}(a_1).
\label{eq:dev_F2_a0}
\end{align}
In relation to $J_{22,\tau_1}^{\beta}(a_1)$ we have:
\begin{align}
J_{22,\tau_1}^{\beta}&=\int_{0}^{\tau_1}\left(\frac{\partial \log\left(f_{x_1}(t|\lambda_1,\eta)\right)}{\partial a_1}\right)^2 f_{x_1}(t|\lambda_1,\eta)^{\beta+1}dt \nonumber
\\
\log\left(f_{x_1}(t|\lambda_1,\eta)\right)&=\log(\eta)-\eta\log(\lambda_1)+(\eta-1)\log(t)-\left(\frac{t}{\lambda_1}\right)^{\eta} \nonumber
 \\
 \Rightarrow \frac{\partial \log\left(f_{x_1}(t|\lambda_1,\eta)\right)}{\partial a_1}&=-\eta x_1+t^{\eta}\eta\lambda_1^{\eta-1}\lambda_1x_1=\eta x_1 \left(-1+\left(\frac{t}{\lambda_1}\right)^{\eta}\right) \nonumber
 \\ 
 \Rightarrow \left(\frac{\partial \log\left(f_{x_1}(t|\lambda_1,\eta)\right)}{\partial a_1}\right)^2&=\eta^2 x_1^2 \left\{1+\left(\frac{t}{\lambda_1}\right)^{2\eta}-2\left(\frac{t}{\lambda_1}\right)^{\eta}\right\} \nonumber
 \\
 \Rightarrow J_{22,\tau_1}^{\beta}(a_1)&=\eta^2 x_1^2\left\{\int_{0}^{\tau_1} f_{x_1}(t|\lambda_1,\eta)^{\beta+1}dt+\int_{0}^{\tau_1}\left(\frac{t}{\lambda_1}\right)^{2\eta} f_{x_1}(t|\lambda_1,\eta)^{\beta+1}dt  \right. \nonumber
 \\
 &\left. -2\int_{0}^{\tau_1} \left(\frac{t}{\lambda_1}\right)^{\eta}f_{x_1}(t|\lambda_1,\eta)^{\beta+1}dt\right\} \nonumber
 \\
 \Rightarrow J_{22,\tau_1}^{\beta}(a_1)&=x_1^2 J_{11,\tau_1}^{\beta}(a_0).
 \label{eq:dev_f1_a1}
\end{align}
In relation with $J_{22,\tau_1 \tau_2}^{\beta}(a_1)$
\begin{align*}
J_{22,\tau_1 \tau_2}^{\beta}(a_1)&=\int_{\tau_1}^{\tau_2}  \frac{\partial \log\left(f_{x_2}(t+h|\lambda_1,\lambda_2,\eta)\right)}{\partial a_1}f_{x_2}(t+h|\lambda_1,\lambda_2,\eta)^{\beta+1}dt
\\
\log\left(f_{x_2}(t+h|\lambda_1,\lambda_2,\eta)\right)&=\log(\eta)-\eta \log(\lambda_2)+(\eta-1)\log\left(t+\frac{\lambda_2}{\lambda_1}\tau_1-\tau_1\right)-\frac{1}{\lambda_2^{\eta}}\left(t+\frac{\lambda_2}{\lambda_1}\tau_1-\tau_1\right)^\eta
\end{align*}
\begin{align*}
\begin{cases}
\lambda_1&=\exp(a_0+a_1x_1) 
\\
\lambda_2&=\exp(a_0+a_1x_2) 
\end{cases}
\Rightarrow
\begin{cases}
 \frac{\partial \lambda_1}{\partial a_0}& = \lambda_1; \quad  \frac{\partial \lambda_1}{\partial a_1} = x_1\lambda_1;\quad  \frac{\partial}{\partial a_0}\left(\frac{\lambda_2}{\lambda_1}\right)=0
 \\
  \frac{\partial \lambda_2}{\partial a_0}& = \lambda_2; \quad  \frac{\partial \lambda_2}{\partial a_1} = x_2\lambda_2;\quad  \frac{\partial}{\partial a_1}\left(\frac{\lambda_2}{\lambda_1}\right)=\frac{\lambda_2}{\lambda_1}(x_2-x_1).
 \end{cases}
\end{align*}
Then, we have
\begin{align}
\frac{\partial \log\left(f_{x_2}(t+h|\lambda_1,\lambda_2,\eta)\right)}{\partial a_1}&=-\eta x_2 + (\eta-1)\frac{\tau_1}{t+\frac{\lambda_2}{\lambda_1}\tau_1-\tau_1}\frac{\partial}{\partial a_1}\left(\frac{\lambda_2}{\lambda_1}\right)\\
&-\frac{\eta \lambda_2^\eta\left(t+\frac{\lambda_2}{\lambda_1}\tau_1-\tau_1\right)^{\eta-1}\tau_1\frac{\partial}{\partial a_1}\left(\frac{\lambda_2}{\lambda_1}\right)-\eta\lambda_2^{\eta-1}x_2\lambda_2\left(t+\frac{\lambda_2}{\lambda_1}\tau_1-\tau_1\right)^{\eta}}{\lambda_2^{2\eta}} \nonumber
\\
&=-\eta x_2 +(\eta-1)\frac{\tau_1 \frac{\lambda_2}{\lambda_1}(x_2-x_1)}{t+\frac{\lambda_2}{\lambda_1}\tau_1-\tau_1}-\frac{\eta \tau_1}{\lambda_2^{\eta+1}}\left(\frac{t+\frac{\lambda_2}{\lambda_1}\tau_1-\tau_1}{\lambda_2}\right)^{\eta-1}\lambda_2^{\eta}\frac{\lambda_2}{\lambda_1}(x_2-x_1) \nonumber
\\
&+\frac{\eta \lambda_2^{\eta}x_2}{\lambda_2^{\eta}}\left(\frac{t+\frac{\lambda_2}{\lambda_1}\tau_1-\tau_1}{\lambda_2}\right)^{\eta} \nonumber
\\
&=-\eta x_2 + \frac{\eta-1}{\lambda_1}\tau_1(x_2-x_1)\frac{1}{\frac{t+\frac{\lambda_2}{\lambda_1}\tau_1-\tau_1}{\lambda_2}}-\frac{\eta\tau_1(x_2-x_1)}{\lambda_1}\left(\frac{t+\frac{\lambda_2}{\lambda_1}\tau_1-\tau_1}{\lambda_2}\right)^{\eta-1} \nonumber
\\
&+\eta x_2 \left(\frac{t+\frac{\lambda_2}{\lambda_1}\tau_1-\tau_1}{\lambda_2}\right)^{\eta} 
 \label{eq:dev_f2_a1}
\end{align}
\begin{align*}
\Rightarrow \left(\frac{\partial \log\left(f_{x_2}(t+h|\lambda_1,\lambda_2,\eta)\right)}{\partial a_1}\right)^{2}&=\eta^2 x_2^2 +\frac{(\eta-1)^2\tau_1^2}{\lambda_1^2}(x_2-x_1)^2\frac{1}{\left(\frac{t+\frac{\lambda_2}{\lambda_1}\tau_1-\tau_1}{\lambda_2}\right)}
\\
&+\frac{\eta^2\tau_1^2}{\lambda_1^2}(x_2-x_1)^2\left(\frac{t+\frac{\lambda_2}{\lambda_1}\tau_1-\tau_1}{\lambda_2}\right)^{2(\eta-1)}+\eta^2x_2^2\left(\frac{t+\frac{\lambda_2}{\lambda_1}\tau_1-\tau_1}{\lambda_2}\right)^{2\eta}
\\
&-2\eta x_2\frac{(\eta-1)\tau_1}{\lambda_1}(x_2-x_1)\frac{1}{\left(\frac{t+\frac{\lambda_2}{\lambda_1}\tau_1-\tau_1}{\lambda_2}\right)}
\\
&+2\frac{\eta^2\tau_1 x_2(x_2-x_1)}{\lambda_1}\left(\frac{t+\frac{\lambda_2}{\lambda_1}\tau_1-\tau_1}{\lambda_2}\right)^{\eta-1} -2\eta^2x_2^2\left(\frac{t+\frac{\lambda_2}{\lambda_1}\tau_1-\tau_1}{\lambda_2}\right)^{\eta}
\\
&-2\frac{(\eta-1) \eta\tau_1^2(x_2-x_1)^2}{\lambda_1^2}\left(\frac{t+\frac{\lambda_2}{\lambda_1}\tau_1-\tau_1}{\lambda_2}\right)^{\eta-2}
\\
&+2\frac{(\eta-1)\tau_1}{\lambda_1}\eta x_2 (x_2-x_1)\left(\frac{t+\frac{\lambda_2}{\lambda_1}\tau_1-\tau_1}{\lambda_2}\right)^{\eta-1}
\\
&-2\frac{\eta^2\tau_1 x_2 (x_2-x_1)}{\lambda_1}\left(\frac{t+\frac{\lambda_2}{\lambda_1}\tau_1-\tau_1}{\lambda_2}\right)^{2\eta-1}.
\end{align*}
So, this derivative consists of 10 summands. Each summand is labeled $c_i$, $i=1,..., 10$ according to the order in which it appears.

Then, we have
\begin{align*}
J_{22,\tau_1\tau_2}^{\beta}(a_1)=\sum_{i=i}^{10}\int_{\tau_1}^{\tau_2} c_i f_{x_2}(t+h|\lambda_1,\lambda_2,\eta)^{\beta+1}dt=\sum_{i=1}^{10}c_i^*.
\end{align*}
Now we are going to obtain each value of $c_i^*, i=1, ..., 10$
\begin{align*}
c_1^*&=\eta^2 x_2^2\int_{\tau_1}^{\tau_2}  f_{x_2}(t+h|\lambda_1,\lambda_2,\eta)^{\beta+1}dt=\eta^2 x_2^2 \zeta_{0,\beta}^{\tau_1,\tau_2}(a_0,a_1,\eta)
\\
c_2^*&=\frac{(\eta-1)^2 \tau_1^2}{\lambda_1^2}(x_2-x_1)^2\int_{\tau_1}^{\tau_2} \left(\frac{t+\frac{\lambda_2}{\lambda_1}\tau_1-\tau_1}{\lambda_2}\right)^{-2} f_{x_2}(t+h|\lambda_1,\lambda_2,\eta)^{\beta+1}dt
	\\
	&=\frac{(\eta-1)^2 \tau_1^2}{\lambda_1^2}(x_2-x_1)^2\zeta_{-2,\beta}^{\tau_1,\tau_2}(a_0,a_1,\eta)
\\
c_3^*&=\frac{\eta^2\tau_1^2}{\lambda_1^2}(x_2-x_1)^2\int_{\tau_1}^{\tau_2} \left(\frac{t+\frac{\lambda_2}{\lambda_1}\tau_1-\tau_1}{\lambda_2}\right)^{2(\eta-1)} f_{x_2}(t+h|\lambda_1,\lambda_2,\eta)^{\beta+1}dt
	\\
	&=\frac{\eta^2\tau_1^2}{\lambda_1^2}(x_2-x_1)^2\zeta_{2(\eta-1),\beta}^{\tau_1,\tau_2}(a_0,a_1,\eta)
\\
c_4^*&=\eta^2 x_2^2\int_{\tau_1}^{\tau_2} \left(\frac{t+\frac{\lambda_2}{\lambda_1}\tau_1-\tau_1}{\lambda_2}\right)^{2\eta} f_{x_2}(t+h|\lambda_1,\lambda_2,\eta)^{\beta+1}dt
	\\
	&=\eta^2 x_2^2 \zeta_{2\eta,\beta}^{\tau_1,\tau_2}(a_0,a_1,\eta)
\\
c_5^*&=-2\eta x_2 \frac{(\eta-1)\tau_1}{\lambda_1}(x_2-x_1)\int_{\tau_1}^{\tau_2} \left(\frac{t+\frac{\lambda_2}{\lambda_1}\tau_1-\tau_1}{\lambda_2}\right)^{-1} f_{x_2}(t+h|\lambda_1,\lambda_2,\eta)^{\beta+1}dt
	\\
	&=-2\eta x_2 \frac{(\eta-1)\tau_1}{\lambda_1}(x_2-x_1) \zeta_{-1,\beta}^{\tau_1,\tau_2}(a_0,a_1,\eta)
\\
c_6^*&=2\frac{\eta^2\tau_1x_2(x_2-x_1)}{\lambda_1}\int_{\tau_1}^{\tau_2} \left(\frac{t+\frac{\lambda_2}{\lambda_1}\tau_1-\tau_1}{\lambda_2}\right)^{\eta-1} f_{x_2}(t+h|\lambda_1,\lambda_2,\eta)^{\beta+1}dt
	\\
	&=2\frac{\eta^2\tau_1x_2(x_2-x_1)}{\lambda_1} \zeta_{\eta-1,\beta}^{\tau_1,\tau_2}(a_0,a_1,\eta)
\\
c_7^*&=-2\eta^2 x_2^2\int_{\tau_1}^{\tau_2} \left(\frac{t+\frac{\lambda_2}{\lambda_1}\tau_1-\tau_1}{\lambda_2}\right)^{\eta} f_{x_2}(t+h|\lambda_1,\lambda_2,\eta)^{\beta+1}dt
	\\
	&=-2\eta^2 x_2^2 \zeta_{\eta,\beta}^{\tau_1,\tau_2}(a_0,a_1,\eta)
\\
c_8^*&=-2\frac{(\eta-1) \eta\tau_1^2(x_2-x_1)^2}{\lambda_1^2}\int_{\tau_1}^{\tau_2} \left(\frac{t+\frac{\lambda_2}{\lambda_1}\tau_1-\tau_1}{\lambda_2}\right)^{\eta-2} f_{x_2}(t+h|\lambda_1,\lambda_2,\eta)^{\beta+1}dt
	\\
	&=-2\frac{(\eta-1) \eta\tau_1^2(x_2-x_1)^2}{\lambda_1^2} \zeta_{\eta-2,\beta}^{\tau_1,\tau_2}(a_0,a_1,\eta)
\\
c_9^*&=2\frac{(\eta-1)\tau_1}{\lambda_1}\eta x_2 (x_2-x_1)\int_{\tau_1}^{\tau_2} \left(\frac{t+\frac{\lambda_2}{\lambda_1}\tau_1-\tau_1}{\lambda_2}\right)^{\eta-1} f_{x_2}(t+h|\lambda_1,\lambda_2,\eta)^{\beta+1}dt
	\\
	&=2\frac{(\eta-1)\tau_1}{\lambda_1}\eta x_2 (x_2-x_1) \zeta_{\eta-1,\beta}^{\tau_1,\tau_2}(a_0,a_1,\eta)
\\
c_{10}^*&=-2\frac{\eta^2\tau_1}{\lambda_1}(x_2-x_1)x_2\int_{\tau_1}^{\tau_2} \left(\frac{t+\frac{\lambda_2}{\lambda_1}\tau_1-\tau_1}{\lambda_2}\right)^{2\eta-1} f_{x_2}(t+h|\lambda_1,\lambda_2,\eta)^{\beta+1}dt
	\\
	&=-2\frac{\eta^2\tau_1}{\lambda_1}(x_2-x_1)x_2 \zeta_{2\eta-1,\beta}^{\tau_1,\tau_2}(a_0,a_1,\eta).
\end{align*}
And for $J_{22,\tau_2}^{\beta}(a_1)$ we have
\begin{align*}
J_{22,\tau_2}^{\beta}(a_1)=\left(\frac{\partial \log\left(1-F_{x_2}(\tau_2+h|\lambda_1,\lambda_2,\eta)\right)}{\partial a_1}\right)^{2} \left(1-F_{x_2}(\tau_2+h|\lambda_1,\lambda_2,\eta)\right)^{\beta+1}
\end{align*}
\begin{align}
\log\left(1-F_{x_2}(\tau_2+h|\lambda_1,\lambda_2,\eta)\right)&=-\left(\frac{\tau_2+\frac{\lambda_2}{\lambda_1}\tau_1-\tau_1}{\lambda_2}\right)^{\eta} \nonumber
\\
\Rightarrow \frac{\partial \log\left(1-F_{x_2}(\tau_2+h|\lambda_1,\lambda_2,\eta)\right)}{\partial a_1}&=-\eta\left(\frac{\tau_2+\frac{\lambda_2}{\lambda_1}\tau_1-\tau_1}{\lambda_2}\right)^{\eta-1} \frac{\lambda_2\frac{\lambda_2}{\lambda_1}(x_2-x_1)\tau_1-\lambda_2x_2\left(\tau_2+\frac{\lambda_2}{\lambda_1}\tau_1-\tau_1\right)}{\lambda_2^2}  \nonumber
\\
&=\eta\left(\frac{\tau_2+\frac{\lambda_2}{\lambda_1}\tau_1-\tau_1}{\lambda_2}\right)^{\eta-1}\frac{\frac{x_1\tau_1}{\lambda_1}+x_2(\tau_2-\tau_1)}{\lambda_2}
 \label{eq:dev_F2_a1}
\end{align}
\begin{align*}
&\Rightarrow \left(\frac{\partial \log\left(1-F_{x_2}(\tau_2+h|\lambda_1,\lambda_2,\eta)\right)}{\partial a_1}\right)^2 \left(1-F_{x_2}(\tau_2+h|\lambda_1,\lambda_2,\eta)\right)^{\beta+1}
\\
&=\eta^2\left(\frac{\tau_2+\frac{\lambda_2}{\lambda_1}\tau_1-\tau_1}{\lambda_2}\right)^{2\eta-2}\left(\frac{\frac{x_1\tau_1}{\lambda_1}+x_2(\tau_2-\tau_1)}{\lambda_2}\right)^2\exp\left(-\left(\frac{\tau_2+\frac{\lambda_2}{\lambda_1}\tau_1-\tau_1}{\lambda_2}\right)^{\eta}(\beta+1)\right).
\end{align*}

\subsection{Proof Lemma 6}
For the first function:
\begin{align*}
H_{\alpha,\gamma,\beta}^{\tau_1}(a_0,a_1,\eta)&=\int_{0}^{\tau_1}\left(\frac{t}{\lambda_1}\right)\left(\log\left(\frac{t}{\lambda_1}\right)\right)^{\gamma}g_{x_1}(t)^{\beta+1}dt.
\end{align*}  
Doing the following change of variable:
\begin{align*}
 l&=\frac{t}{\lambda_1} \text{, we have}
\\
H_{\alpha,\gamma,\beta}^{\tau_1}(a_0,a_1,\eta)&=\int_{0}^{\frac{\tau_1}{\lambda_1}} l^{\alpha}  \left(\log\left(l\right)\right)^{\gamma}\frac{\eta^{\beta+1}}{\lambda_1^{\beta+1}}\lambda_1^{(\beta+1)(\eta-1)} l^{(\beta+1)(\eta-1)}\exp\left(-l^{\eta}(\beta+1)\right)\lambda_1dl
\\
&=\lambda_1\left(\frac{\eta}{\lambda_1}\right)^{\beta+1}\int_{0}^{\frac{\tau_1}{\lambda_1}}l^{\alpha+(\eta-1)(\beta+1)}\left(\log\left(l\right)\right)^{\gamma}\exp\left(-l^{\eta}(\beta+1)\right)dl.
\end{align*}  
For the second function and the considered density function for the Weibull lifetime we have
\begin{align*}
H_{\alpha,\gamma,\beta}^{\tau_1,\tau_2}(a_0,a_1,\eta)&=\int_{\tau_1}^{\tau_2}\left(\frac{t+\frac{\lambda_2}{\lambda_1}\tau_1-\tau_1}{\lambda_2}\right)^{\alpha}\left(\log\left(\frac{t+\frac{\lambda_2}{\lambda_1}\tau_1-\tau_1}{\lambda_2}\right)\right)^{\gamma}\frac{\eta^{\beta+1}}{\lambda_2^{(\beta+1)\eta}}\left(t+\frac{\lambda_2}{\lambda_1}\tau_1-\tau_1\right)^{(\beta+1)(\eta-1)}
\\
& \cdot \exp\left(-\frac{1}{\lambda_2^{\eta}}\left(t+\frac{\lambda_2}{\lambda_1}\tau_1-\tau_1\right)^{\eta}(\beta+1)\right)dt.
\end{align*}
Doing the following change of variable:
\begin{align*}
l=\frac{t+\frac{\lambda_2}{\lambda_1}\tau_1-\tau_1}{\lambda_2} \rightarrow dt=\lambda_2dl \text{, so we have}
\end{align*}
\begin{align*}
t=\tau_1& \Rightarrow l=\frac{\tau_1+\frac{\lambda_2}{\lambda_1}\tau_1-\tau_1}{\lambda_2}=\frac{\tau_1}{\lambda_1}
\\
t=\tau_2&  \Rightarrow l=\frac{\tau_2+\frac{\lambda_2}{\lambda_1}\tau_1-\tau_1}{\lambda_2}
\\
H_{\alpha,\gamma,\beta}^{\tau_1,\tau_2}(a_0,a_1,\eta)&=\int_{\frac{\tau_1}{\lambda_1}}^{\frac{1}{\lambda_2}\left(\tau_2+\frac{\lambda_2}{\lambda_1}\tau_1-\tau_1\right)}l^{\alpha}\left(\log\left(l\right)\right)^{\gamma}\frac{\eta^{\beta+1}}{\lambda_2^{\eta(\beta+1)}}\lambda_2^{(\eta-1)(\beta+1)}\exp\left(-l^{\eta}(\beta+1)\right)\lambda_2dl
\\
&=\lambda_2\left(\frac{\eta}{\lambda_2}\right)^{\beta+1}\int_{\frac{\tau_1}{\lambda_1}}^{\left(\tau_2+\frac{\lambda_2}{\lambda_1}\tau_1-\tau_1\right)\frac{1}{\lambda_2}}l^{\alpha+(\eta-1)(\beta+1)}\left(\log\left(l\right)\right)^{\gamma}\exp\left(-l^{\eta}(\beta+1)\right)dl.
\end{align*}

\subsection{Proof Propostion 7}
For $J_{33}^{\beta}(\eta)$
\begin{align*}
J_{33}^{\beta}(\eta)&=\int_{0}^{\tau_1}\left(\frac{\partial \log(f_{x_1}(t|\lambda_1,\eta))}{\partial \eta} \right)^2f_{x_1}(t|\lambda_1,\eta)^{\beta+1}dl
\\
&+\int_{\tau_1}^{\tau_2}\left(\frac{\partial \log(f_{x_2}(t+h|\lambda_1,\lambda_2,\eta))}{\partial \eta} \right)^2f_{x_2}(t+h|\lambda_1,\lambda_2,\eta)^{\beta+1}dl
\\
&+\left(\frac{\partial \log(1-F_{x_2}(\tau_2+h|\lambda_1,\lambda_2,\eta))}{\partial \eta} \right)^2 \log(1-F_{x_2}(\tau_2+h|\lambda_1,\lambda_2,\eta)^{\beta+1}
\\&=J_{33,\tau_1}^{\beta}(\eta)+J_{33,\tau_1\tau_2}^{\beta}(\eta)+J_{33,\tau_2}^{\beta}(\eta).
\end{align*}
For $J_{33,\tau_1}^{\beta}(\eta)$
\begin{align}
f_{x_1}(t|\lambda_1,\eta)&=\frac{\eta}{\lambda_1^{\eta}}t^{\eta-1}\exp\left(-\left(\frac{t}{\lambda_1}\right)^{\eta}\right) \nonumber
\\
\Rightarrow \log\left(f_{x_1}(t|\lambda_1,\eta)\right)&=\log(\eta)-\log(\lambda_1)+(\eta-1)\log\left(\frac{t}{\lambda_1}\right)-\left(\frac{t}{\lambda_1}\right)^{\eta} \nonumber
\\
\Rightarrow \frac{\partial \log\left(f_{x_1}(t|\lambda_1,\eta)\right)}{\partial \eta}&=\frac{1}{\eta}+\log\left(\frac{t}{\lambda_1}\right)-\left(\frac{t}{\lambda_1}\right)^{\eta}\log\left(\frac{t}{\lambda_1}\right) \nonumber
\\
\Rightarrow \left(\frac{\partial \log\left(f_{x_1}(t|\lambda_1,\eta)\right)}{\partial \eta}\right)^2&=\frac{1}{\eta^2}+\left(\log\left(\frac{t}{\lambda_1}\right)\right)^2+\left(\frac{t}{\lambda_1}\right)^{2\eta}\left(\log\left(\frac{t}{\lambda_1}\right)\right)^2+2\frac{1}{\eta}\log\left(\frac{t}{\lambda_1}\right) \nonumber
\\
&-2\frac{1}{\eta}\left(\frac{t}{\lambda_1}\right)^{\eta}\log\left(\frac{t}{\lambda_1}\right)-2\left(\frac{t}{\lambda_1}\right)^{\eta}\left(\log\left(\frac{t}{\lambda_1}\right)\right)^2 .
 \label{eq:dev_f1_eta}
\end{align}
Then
\begin{align*}
J_{33,\tau_1}^{\beta}(\eta)&=\frac{1}{\eta^2}\int_{0}^{\tau_1}f_{x_1}(t|\lambda_1,\eta)^{\beta+1}dt+\int_{0}^{\tau_1}\left(\log\left(\frac{t}{\lambda_1}\right)\right)^2f_{x_1}(t|\lambda_1,\eta)^{\beta+1}dt
\\
&+\int_{0}^{\tau_1}\left(\frac{t}{\lambda_1}\right)^{2\eta}\left(\log\left(\frac{t}{\lambda_1}\right)\right)^2f_{x_1}(t|\lambda_1,\eta)^{\beta+1}dt+\frac{2}{\eta}\int_{0}^{\tau_1}\log\left(\frac{t}{\lambda_1}\right)f_{x_1}(t|\lambda_1,\eta)^{\beta+1}dt
\\
&-\frac{2}{\eta}\int_{0}^{\tau_1}\left(\frac{t}{\lambda_1}\right)^{\eta}\log\left(\frac{t}{\lambda_1}\right)f_{x_1}(t|\lambda_1,\eta)^{\beta+1}dt-2\int_{0}^{\tau_1}\left(\frac{t}{\lambda_1}\right)^{\eta}\left(\log\left(\frac{t}{\lambda_1}\right)\right)^{2}f_{x_1}(t|\lambda_1,\eta)^{\beta+1}dt
\\
&=A_1+A_2+A_3+A_4+A_5+A_6.
\end{align*}
Then, we have
\begin{align*}
A_1&=\frac{1}{\eta^2}\int_{0}^{\tau_1}f_{x_1}(t|\lambda_1,\eta)^{\beta+1}dt=\frac{1}{\eta^2}H_{0,0,\beta}^{\tau_1}(a_0,a_1,\eta)
\\
A_2&=\int_{0}^{\tau_1}\left(\log\left(\frac{t}{\lambda_1}\right)\right)^2f_{x_1}(t|\lambda_1,\eta)^{\beta+1}dt=H_{0,2,\beta}^{\tau_1}(a_0,a_1,\eta)
\\
A_3&=\int_{0}^{\tau_1}\left(\frac{t}{\lambda_1}\right)^{2\eta}\left(\log\left(\frac{t}{\lambda_1}\right)\right)^2f_{x_1}(t|\lambda_1,\eta)^{\beta+1}dt=H_{2\eta,2,\beta}^{\tau_1}(a_0,a_1,\eta)
 \\
 A_4&=\frac{2}{\eta}\int_{0}^{\tau_1}\log\left(\frac{t}{\lambda_1}\right)f_{x_1}(t|\lambda_1,\eta)^{\beta+1}dt=\frac{2}{\eta}H_{0,1,\beta}^{\tau_1}(a_0,a_1,\eta)
 \\
 A_5&=-\frac{2}{\eta}\int_{0}^{\tau_1}\left(\frac{t}{\lambda_1}\right)^{\eta}\log\left(\frac{t}{\lambda_1}\right)f_{x_1}(t|\lambda_1,\eta)^{\beta+1}dt=-\frac{2}{\eta}H_{\eta,1,\beta}^{\tau_1}(a_0,a_1,\eta)
 \\
 A_6&=-2\int_{0}^{\tau_1}\left(\frac{t}{\lambda_1}\right)^{\eta}\left(\log\left(\frac{t}{\lambda_1}\right)\right)^{2}f_{x_1}(t|\lambda_1,\eta)^{\beta+1}dt=-2H_{\eta,2,\beta}^{\tau_1}(a_0,a_1,\eta).
\end{align*}
Finally, we have
\begin{align*}
J_{33,\tau_1}^{\beta}(\eta)=&\frac{1}{\eta^2}H_{0,0,\beta}^{\tau_1}(a_0,a_1,\eta)+H_{0,2,\beta}^{\tau_1}(a_0,a_1,\eta)+H_{2\eta,2,\beta}^{\tau_1}(a_0,a_1,\eta)
\\
&\frac{2}{\eta}H_{0,1,\beta}^{\tau_1}(a_0,a_1,\eta)-\frac{2}{\eta}H_{\eta,1,\beta}^{\tau_1}(a_0,a_1,\eta)-2H_{\eta,2,\beta}^{\tau_1}(a_0,a_1,\eta).
\end{align*}
For the term $J_{33,\tau_1\tau_2}^{\beta}(\eta)$
\begin{align}
f_{x_2}(t+h|\lambda_1,\lambda_2,\eta)&=\frac{\eta}{\lambda_2}\left(\frac{t+\frac{\lambda_2}{\lambda_1}\tau_1-\tau_1}{\lambda_2}\right)^{\eta-1}\exp\left(-\left(\frac{t+\frac{\lambda_2}{\lambda_1}\tau_1-\tau_1}{\lambda_2}\right)^{\eta}\right) \nonumber
\\
\Rightarrow \log\left(f_{x_2}(t+h|\lambda_1,\lambda_2,\eta)\right)&=\log(\eta)-\log(\lambda_2)+(\eta-1)\log\left(\frac{t+\frac{\lambda_2}{\lambda_1}\tau_1-\tau_1}{\lambda_2}\right)-\left(\frac{t+\frac{\lambda_2}{\lambda_1}\tau_1-\tau_1}{\lambda_2}\right)^{\eta} \nonumber
\\
\Rightarrow \frac{\partial \log\left(f_{x_2}(t+h|\lambda_1,\lambda_2,\eta)\right)}{\partial \eta}&=\frac{1}{\eta}+\log\left(\frac{t+\frac{\lambda_2}{\lambda_1}\tau_1-\tau_1}{\lambda_2}\right)-\left(\frac{t+\frac{\lambda_2}{\lambda_1}\tau_1-\tau_1}{\lambda_2}\right)^{\eta}\log\left(\frac{t+\frac{\lambda_2}{\lambda_1}\tau_1-\tau_1}{\lambda_2}\right) \nonumber
\\
\Rightarrow \left(\frac{\partial \log\left(f_{x_2}(t+h|\lambda_1,\lambda_2,\eta)\right)}{\partial \eta}\right)^2&=\frac{1}{\eta^2}+\left(\log\left(\frac{t+\frac{\lambda_2}{\lambda_1}\tau_1-\tau_1}{\lambda_2}\right)\right)^2 \nonumber
\\
&+\left(\frac{t+\frac{\lambda_2}{\lambda_1}\tau_1-\tau_1}{\lambda_2}\right)^{2\eta}\left(\log\left(\frac{t+\frac{\lambda_2}{\lambda_1}\tau_1-\tau_1}{\lambda_2}\right)\right)^2+\frac{2}{\eta}\log\left(\frac{t+\frac{\lambda_2}{\lambda_1}\tau_1-\tau_1}{\lambda_2}\right) \nonumber
\\
&-\frac{2}{\eta}\left(\frac{t+\frac{\lambda_2}{\lambda_1}\tau_1-\tau_1}{\lambda_2}\right)^{\eta}\log\left(\frac{t+\frac{\lambda_2}{\lambda_1}\tau_1-\tau_1}{\lambda_2}\right)\nonumber
\\
&-2\left(\frac{t+\frac{\lambda_2}{\lambda_1}\tau_1-\tau_1}{\lambda_2}\right)^{\eta}\left(\log\left(\frac{t+\frac{\lambda_2}{\lambda_1}\tau_1-\tau_1}{\lambda_2}\right)\right)^2.
 \label{eq:dev_f2_eta}
\end{align}
Then
\begin{align*}
J_{33,\tau_1\tau_2}^{\beta}(\eta)&=\frac{1}{\eta^2}\int_{\tau_1}^{\tau_2}f_{x_2}(t+h|\lambda_1,\lambda_2,\eta)^{\beta+1}dt+\int_{\tau_1}^{\tau_2}\left(\log\left(\frac{t+\frac{\lambda_2}{\lambda_1}\tau_1-\tau_1}{\lambda_2}\right)\right)^2f_{x_2}(t+h|\lambda_1,\lambda_2,\eta)^{\beta+1}dt
\\
&+\int_{\tau_1}^{\tau_2}\left(\frac{t+\frac{\lambda_2}{\lambda_1}\tau_1-\tau_1}{\lambda_2}\right)^{2\eta}\left(\log\left(\frac{t+\frac{\lambda_2}{\lambda_1}\tau_1-\tau_1}{\lambda_2}\right)\right)^2f_{x_2}(t+h|\lambda_1,\lambda_2,\eta)^{\beta+1}dt
\\
&+\frac{2}{\eta}\int_{\tau_1}^{\tau_2}\log\left(\frac{t+\frac{\lambda_2}{\lambda_1}\tau_1-\tau_1}{\lambda_2}\right)f_{x_2}(t+h|\lambda_1,\lambda_2,\eta)^{\beta+1}dt
\\
&-\frac{2}{\eta}\int_{\tau_1}^{\tau_2}\left(\frac{t+\frac{\lambda_2}{\lambda_1}\tau_1-\tau_1}{\lambda_2}\right)^{\eta}\log\left(\frac{t+\frac{\lambda_2}{\lambda_1}\tau_1-\tau_1}{\lambda_2}\right)f_{x_2}(t+h|\lambda_1,\lambda_2,\eta)^{\beta+1}dt
\\
&-2\int_{\tau_1}^{\tau_2}\left(\frac{t+\frac{\lambda_2}{\lambda_1}\tau_1-\tau_1}{\lambda_2}\right)^{\eta}\left(\log\left(\frac{t+\frac{\lambda_2}{\lambda_1}\tau_1-\tau_1}{\lambda_2}\right)\right)^2f_{x_2}(t+h|\lambda_1,\lambda_2,\eta)^{\beta+1}dt
\\
&=B_1+B_2+B_3+B_4+B_5+B_6.
\end{align*}
Then, calculating each term
\begin{align*}
B_1&=\frac{1}{\eta^2}\int_{\tau_1}^{\tau_2}f_{x_2}(t+h|\lambda_1,\lambda_2,\eta)^{\beta+1}dt=\frac{1}{\eta^2}H_{0,0,\beta}^{\tau_1,\tau_2}(a_0,a_1,\eta)
\\
B_2&=\int_{\tau_1}^{\tau_2}\left(\log\left(\frac{t+\frac{\lambda_2}{\lambda_1}\tau_1-\tau_1}{\lambda_2}\right)\right)^2f_{x_2}(t+h|\lambda_1,\lambda_2,\eta)^{\beta+1}dt=H_{0,2,\beta}^{\tau_1,\tau_2}(a_0,a_1,\eta)
\\
B_3&=\int_{\tau_1}^{\tau_2}\left(\frac{t+\frac{\lambda_2}{\lambda_1}\tau_1-\tau_1}{\lambda_2}\right)^{2\eta}\left(\log\left(\frac{t+\frac{\lambda_2}{\lambda_1}\tau_1-\tau_1}{\lambda_2}\right)\right)^2f_{x_2}(t+h|\lambda_1,\lambda_2,\eta)^{\beta+1}dt=H_{2\eta,2,\beta}^{\tau_1,\tau_2}(a_0,a_1,\eta)
\\
B_4&=\frac{2}{\eta}\int_{\tau_1}^{\tau_2}\log\left(\frac{t+\frac{\lambda_2}{\lambda_1}\tau_1-\tau_1}{\lambda_2}\right)f_{x_2}(t+h|\lambda_1,\lambda_2,\eta)^{\beta+1}dt=\frac{2}{\eta}H_{0,1,\beta}^{\tau_1,\tau_2}(a_0,a_1,\eta)
\\
B_5&=-\frac{2}{\eta}\int_{\tau_1}^{\tau_2}\left(\frac{t+\frac{\lambda_2}{\lambda_1}\tau_1-\tau_1}{\lambda_2}\right)^{\eta}\log\left(\frac{t+\frac{\lambda_2}{\lambda_1}\tau_1-\tau_1}{\lambda_2}\right)f_{x_2}(t+h|\lambda_1,\lambda_2,\eta)^{\beta+1}dt=-\frac{2}{\eta}H_{\eta,1,\beta}^{\tau_1,\tau_2}(a_0,a_1,\eta)
\\
B_6&=-2\int_{\tau_1}^{\tau_2}\left(\frac{t+\frac{\lambda_2}{\lambda_1}\tau_1-\tau_1}{\lambda_2}\right)^{\eta}\left(\log\left(\frac{t+\frac{\lambda_2}{\lambda_1}\tau_1-\tau_1}{\lambda_2}\right)\right)^2f_{x_2}(t+h|\lambda_1,\lambda_2,\eta)^{\beta+1}dt=-2H_{\eta,2,\beta}^{\tau_1,\tau_2}(a_0,a_1,\eta).
\end{align*}
Then, we have
\begin{align*}
J_{33,\tau_1\tau_2}^{\beta}(\eta)&=\frac{1}{\eta^2}H_{0,0,\beta}^{\tau_1,\tau_2}(a_0,a_1,\eta)+H_{0,2,\beta}^{\tau_1,\tau_2}(a_0,a_1,\eta)+H_{2\eta,2,\beta}^{\tau_1,\tau_2}(a_0,a_1,\eta)+\frac{2}{\eta}H_{0,1,\beta}^{\tau_1,\tau_2}(a_0,a_1,\eta)
\\
&-\frac{2}{\eta}H_{\eta,1,\beta}^{\tau_1,\tau_2}(a_0,a_1,\eta)-2H_{\eta,2,\beta}^{\tau_1,\tau_2}(a_0,a_1,\eta).
\end{align*}
And for $J_{33,\tau_2}^{\beta}(\eta)$
\begin{align*}
J_{33,\tau_2}^{\beta}(\eta)=\left(\frac{\partial \log\left(1-F_{x_2}(\tau_2+h|\lambda_1,\lambda_2,\eta)\right)}{\partial \eta}\right)^2\left(1-F_{x_2}(\tau_2+h|\lambda_1,\lambda_2,\eta)\right)^{\beta+1}.
\end{align*}
Then
\begin{align}
1-F_{x_2}(\tau_2+h|\lambda_1,\lambda_2,\eta)&=\exp\left(-\left(\frac{\tau_2+\frac{\lambda_2}{\lambda_1}\tau_1-\tau_1}{\lambda_2}\right)^{\eta}\right) \nonumber
\\
\Rightarrow \log\left(1-F_{x_2}(\tau_2+h|\lambda_1,\lambda_2,\eta)\right)&=-\left(\frac{\tau_2+\frac{\lambda_2}{\lambda_1}\tau_1-\tau_1}{\lambda_2}\right)^{\eta} \nonumber
\\
\Rightarrow \frac{\partial \log\left(1-F_{x_2}(\tau_2+h|\lambda_1,\lambda_2,\eta)\right)}{\partial \eta}&=-\left(\frac{\tau_2+\frac{\lambda_2}{\lambda_1}\tau_1-\tau_1}{\lambda_2}\right)^{\eta}\log\left(\frac{\tau_2+\frac{\lambda_2}{\lambda_1}\tau_1-\tau_1}{\lambda_2}\right) \nonumber
\\
\Rightarrow \left( \frac{\partial \log\left(1-F_{x_2}(\tau_2+h|\lambda_1,\lambda_2,\eta)\right)}{\partial \eta}\right)^2&=\left(\frac{\tau_2+\frac{\lambda_2}{\lambda_1}\tau_1-\tau_1}{\lambda_2}\right)^{2\eta}\left(\log\left(\frac{\tau_2+\frac{\lambda_2}{\lambda_1}\tau_1-\tau_1}{\lambda_2}\right)\right)^2 \nonumber
\\
\Rightarrow J_{33,\tau_2}^{\beta}(\eta)=\left(\frac{\tau_2+\frac{\lambda_2}{\lambda_1}\tau_1-\tau_1}{\lambda_2}\right)^{2\eta}&\left(\log\left(\frac{\tau_2+\frac{\lambda_2}{\lambda_1}\tau_1-\tau_1}{\lambda_2}\right)\right)^2\exp\left(-\left(\frac{\tau_2+\frac{\lambda_2}{\lambda_1}\tau_1-\tau_1}{\lambda_2}\right)^\eta(\beta+1)\right) .
\label{eq:dev_F2_eta}
\end{align}

\subsection{Proof Proposition 8}
For $J_{12}^{\beta}(a_0,a_1)$
\begin{align*}
J_{12}^{\beta}(a_0,a_1)&=\int_{0}^{\tau_1}\frac{\partial \log\left(f_{x_1}(t|\lambda_1,\eta)\right)}{\partial a_0}\frac{\partial \log\left(f_{x_1}(t|\lambda_1,\eta)\right)}{\partial a_1}f_{x_1}(t|\lambda_1,\eta)^{\beta+1}dt
\\
&+\int_{\tau_1}^{\tau_2}\frac{\partial \log\left(f_{x_2}(t+h|\lambda_1,\lambda_2,\eta)\right)}{\partial a_0}\frac{\partial \log\left(f_{x_2}(t+h|\lambda_1,\lambda_2,\eta)\right)}{\partial a_1}f_{x_2}(t+h|\lambda_1,\lambda_2,\eta)^{\beta+1}dt
\\
&+\frac{\partial \log\left(1-F_{x_2}(\tau_2+h|\lambda_1,\lambda_2,\eta)\right)}{\partial a_0}\frac{\partial \log\left(1-F_{x_2}(\tau_2+h|\lambda_1,\lambda_2,\eta)\right)}{\partial a_1}\left(1-F_{x_2}(\tau_2+h|\lambda_1,\lambda_2,\eta)\right)^{\beta+1}
\\
&=J_{12,\tau_1}^{\beta}(a_0,a_1)+J_{12,\tau_1\tau_2}^{\beta}(a_0,a_1)+J_{12,\tau_2}^{\beta}(a_0,a_1).
\end{align*}
For  $J_{12,\tau_1}^{\beta}(a_0,a_1)$, using the derivatives calculated in \ref{eq:dev_f1_a0} and \ref{eq:dev_f1_a1}.
\begin{align*}
\frac{\partial \log\left(f_{x_1}(t|\lambda_1,\eta)\right)}{\partial a_0}=\eta\left(-1+\left(\frac{t}{\lambda_1}\right)^{\eta}\right); \quad \frac{\partial \log\left(f_{x_1}(t|\lambda_1,\eta)\right)}{\partial a_1}=\eta x_1\left(-1+\left(\frac{t}{\lambda_1}\right)^{\eta}\right).
\end{align*}
Then
\begin{align*}
\frac{\partial \log\left(f_{x_1}(t|\lambda_1,\eta)\right)}{\partial a_0}\frac{\partial \log\left(f_{x_1}(t|\lambda_1,\eta)\right)}{\partial a_1}=\eta^2 x_1 \left(-1+\left(\frac{t}{\lambda_1}\right)^{2\eta}-2\left(\frac{t}{\lambda_1}\right)^{\eta}\right).
\end{align*}
So
\begin{align*}
J_{12,\tau_1}^{\beta}(a_0,a_1)&=\eta^2x_1\left\{\int_{0}^{\tau_1}f_{x_1}(t|\lambda_1,\eta)^{\beta+1}dt+\int_{0}^{\tau_1}\left(\frac{t}{\lambda_1}\right)^2f_{x_1}(t|\lambda_1,\eta)^{\beta+1}dt \right.
\\
&\left.-2\int_{0}^{\tau_1}\left(\frac{t}{\lambda_1}\right)^{\eta}f_{x_1}(t|\lambda_1,\eta)^{\beta+1}dt\right\}
\\&=\eta^2x_1\left\{\zeta_{0,\beta}^{\tau_1}(a_0,a_1,\eta)+\zeta_{2\eta,\beta}^{\tau_1}(a_0,a_1,\eta)-2\zeta_{\eta,\beta}^{\tau_1}(a_0,a_1,\eta)\right\}.
\end{align*}
For $J_{12,\tau_1\tau_2}^{\beta}(a_0,a_1)$
\begin{align*}
J_{12,\tau_1\tau_2}^{\beta}(a_0,a_1)=\int_{\tau_1}^{\tau_2}\frac{\partial \log\left(f_{x_2}(t+h|\lambda_1,\lambda_2,\eta)\right)}{\partial a_0}\frac{\partial \log\left(f_{x_2}(t+h|\lambda_1,\lambda_2,\eta)\right)}{\partial a_1}f_{x_2}(t+h|\lambda_1,\lambda_2,\eta)^{\beta+1}dt.
\end{align*}
For the derivatives calculated in \eqref{eq:dev_f2_a0} and \eqref{eq:dev_f2_a1}.
\begin{align*}
\frac{\partial \log\left(f_{x_2}(t+h|\lambda_1,\lambda_2,\eta)\right)}{\partial a_0}&=\eta\left(-1+\left(\frac{t+\frac{\lambda_2}{\lambda_1}\tau_1-\tau_1}{\lambda_2}\right)^{\eta}\right)
\\
\frac{\partial \log\left(f_{x_2}(t+h|\lambda_1,\lambda_2,\eta)\right)}{\partial a_1}&=-\eta x_2 + \frac{\eta-1}{\lambda_1}\tau_1(x_2-x_1)\frac{1}{\frac{t+\frac{\lambda_2}{\lambda_1}\tau_1-\tau_1}{\lambda_2}}
\\
&-\frac{\eta\tau_1(x_2-x_1)}{\lambda_1}\left(\frac{t+\frac{\lambda_2}{\lambda_1}\tau_1-\tau_1}{\lambda_2}\right)^{\eta-1}
\\
&+\eta x_2 \left(\frac{t+\frac{\lambda_2}{\lambda_1}\tau_1-\tau_1}{\lambda_2}\right)^{\eta}
\\
\Rightarrow \frac{\partial \log\left(f_{x_2}(t+h|\lambda_1,\lambda_2,\eta)\right)}{\partial a_0}\frac{\partial \log\left(f_{x_2}(t+h|\lambda_1,\lambda_2,\eta)\right)}{\partial a_1}&=\eta^2x_2-\eta(\eta-1)\frac{\frac{\tau_1}{\lambda_1}(x_2-x_1)}{\left(\frac{t+\frac{\lambda_2}{\lambda_1}\tau_1-\tau_1}{\lambda_2}\right)}
\\
&+\eta^2\frac{\tau_1}{\lambda_1}(x_2-x_1)\left(\frac{t+\frac{\lambda_2}{\lambda_1}\tau_1-\tau_1}{\lambda_2}\right)^{\eta-1}
\\
&-\eta^2\left(\frac{t+\frac{\lambda_2}{\lambda_1}\tau_1-\tau_1}{\lambda_2}\right)^{\eta}
\\
&-\eta^2x_2\left(\frac{t+\frac{\lambda_2}{\lambda_1}\tau_1-\tau_1}{\lambda_2}\right)^{\eta}
\\
&+\eta(\eta-1)\frac{\tau_1}{\lambda_1}(x_2-x_1)\left(\frac{t+\frac{\lambda_2}{\lambda_1}\tau_1-\tau_1}{\lambda_2}\right)^{\eta-1}
\\
&-\eta^2\frac{\tau_1}{\lambda_1}(x_2-x_1)\left(\frac{t+\frac{\lambda_2}{\lambda_1}\tau_1-\tau_1}{\lambda_2}\right)^{2\eta-1}
\\
&+\eta^2x_2\left(\frac{t+\frac{\lambda_2}{\lambda_1}\tau_1-\tau_1}{\lambda_2}\right)^{2\eta}.
\end{align*}
So
\begin{align*}
J_{12,\tau_1\tau_2}^{\beta}(a_0,a_1)&=\eta^2x_2\int_{\tau_1}^{\tau_2}f_{x_2}(t+h|\lambda_1,\lambda_2,\eta)^{\beta+1}dt
\\
&-\eta(\eta-1)\frac{\tau_1}{\lambda_1}(x_2-x_1)\int_{\tau_1}^{\tau_2}\left(\frac{t+\frac{\lambda_2}{\lambda_1}\tau_1-\tau_1}{\lambda_2}\right)^{-1}f_{x_2}(t+h|\lambda_1,\lambda_2,\eta)^{\beta+1}dt
\\
&+\eta(2\eta-1)\frac{\tau_1}{\lambda_1}(x_2-x_1)\int_{\tau_1}^{\tau_2}\left(\frac{t+\frac{\lambda_2}{\lambda_1}\tau_1-\tau_1}{\lambda_2}\right)^{\eta-1}f_{x_2}(t+h|\lambda_1,\lambda_2,\eta)^{\beta+1}dt
\\
&-2\eta^2x_2\int_{\tau_1}^{\tau_2}\left(\frac{t+\frac{\lambda_2}{\lambda_1}\tau_1-\tau_1}{\lambda_2}\right)^{\eta}f_{x_2}(t+h|\lambda_1,\lambda_2,\eta)^{\beta+1}dt
\\
&-\eta^2\frac{\tau_1}{\lambda_1}(x_2-x_1)\int_{\tau_1}^{\tau_2}\left(\frac{t+\frac{\lambda_2}{\lambda_1}\tau_1-\tau_1}{\lambda_2}\right)^{2\eta-1}f_{x_2}(t+h|\lambda_1,\lambda_2,\eta)^{\beta+1}dt
\\
&+\eta^2x_2\int_{\tau_1}^{\tau_2}\left(\frac{t+\frac{\lambda_2}{\lambda_1}\tau_1-\tau_1}{\lambda_2}\right)^{2\eta}f_{x_2}(t+h|\lambda_1,\lambda_2,\eta)^{\beta+1}dt
\\
&=\eta^2x_2\zeta_{0,\beta}^{\tau_1,\tau_2}(a_0,a_1,\eta)
-\eta(\eta-1)\frac{\tau_1}{\lambda_1}(x_2-x_1)\zeta_{-1,\beta}^{\tau_1,\tau_2}(a_0,a_1,\eta)
\\
&+\eta(2\eta-1)\frac{\tau_1}{\lambda_1}(x_2-x_1)\zeta_{\eta-1,\beta}^{\tau_1,\tau_2}(a_0,a_1,\eta)
-2\eta^2x_2\zeta_{\eta,\beta}^{\tau_1,\tau_2}(a_0,a_1,\eta)
\\
&-\eta^2\frac{\tau_1}{\lambda_1}(x_2-x_1)\zeta_{2\eta-1,\beta}^{\tau_1,\tau_2}(a_0,a_1,\eta)
+\eta^2x_2\zeta_{2\eta,\beta}^{\tau_1,\tau_2}(a_0,a_1,\eta).
\end{align*}
And for $J_{12,\tau_2}^{\beta}(a_0,a_1)$
\begin{align*}
J_{12,\tau_2}^{\beta}(a_0,a_1)=\frac{\partial \log\left(1-F_{x_2}(\tau_2+h|\lambda_1,\lambda_2,\eta)\right)}{\partial a_0}\frac{\partial \log\left(1-F_{x_2}(\tau_2+h|\lambda_1,\lambda_2,\eta)\right)}{\partial a_1}\left(1-F_{x_2}(\tau_2+h|\lambda_1,\lambda_2,\eta)\right)^{\beta+1}.
\end{align*}
Using the derivatives calculated in \eqref{eq:dev_F2_a0} and \eqref{eq:dev_F2_a1}.
\begin{align*}
\frac{\partial \log\left(1-F_{x_2}(\tau_2+h|\lambda_1,\lambda_2,\eta)\right)}{\partial a_0}&=\eta\left(\frac{\tau_2+\frac{\lambda_2}{\lambda_1}\tau_1-\tau_1}{\lambda_2}\right)^{\eta}
\\
\frac{\partial \log\left(1-F_{x_2}(\tau_2+h|\lambda_1,\lambda_2,\eta)\right)}{\partial a_1}&=\eta\left(\frac{\tau_2+\frac{\lambda_2}{\lambda_1}\tau_1-\tau_1}{\lambda_2}\right)^{\eta-1}\frac{\frac{x_1\tau_1}{\lambda_1}+x_2(\tau_2-\tau_1)}{\lambda_2}
\\
\Rightarrow  J_{12,\tau_2}^{\beta}(a_0,a_1)&=\eta^2\left(\frac{\tau_2+\frac{\lambda_2}{\lambda_1}\tau_1-\tau_1}{\lambda_2}\right)^{2\eta-1}\frac{\frac{x_1\tau_1}{\lambda_1}+x_2(\tau_2-\tau_1)}{\lambda_2}
\\
&\cdot\exp\left(-\left(\frac{\tau_2+\frac{\lambda_2}{\lambda_1}\tau_1-\tau_1}{\lambda_2}\right)^{\eta}(\beta+1)\right).
\end{align*}

\subsection{Proof Propostion 9}

For $J_{13}^{\beta}(a_0,\eta)$
\begin{align*}
J_{13}^{\beta}(a_0,\eta)&=\int_{0}^{\tau_1}\left(\frac{\partial \log(f_{x_1}(t|\lambda_1,\eta))}{\partial a_0}\right)\left(\frac{\partial \log(f_{x_1}(t|\lambda_1,\eta))}{\partial \eta}\right)f_{x_1}(t|\lambda_1,\eta)^{\beta+1}dt
\\
+&\int_{\tau_1}^{\tau_2}\left(\frac{\partial \log(f_{x_2}(t+h|\lambda_1,\lambda_2,\eta))}{\partial a_0}\right)\left(\frac{\partial \log(f_{x_2}(t+h|\lambda_1,\lambda_2,\eta))}{\partial \eta}\right)f_{x_2}(t+h|\lambda_1,\lambda_2,\eta)^{\beta+1}dt
\\
+&\frac{\partial \log\left(1-F_{x_2}(\tau_2+h|\lambda_1,\lambda_2,\eta)\right)}{\partial a_0}\frac{\partial \log\left(1-F_{x_2}(\tau_2+h|\lambda_1,\lambda_2,\eta)\right)}{\partial \eta}\left(1-F_{x_2}(\tau_2+h|\lambda_1,\lambda_2,\eta)\right)^{\beta+1}
\\
&=J_{13,\tau_1}^{\beta}(a_0,\eta)+J_{13,\tau_1\tau_2}^{\beta}(a_0,\eta)+J_{13,\tau_2}^{\beta}(a_0,\eta).
\end{align*}
In relation to  $J_{13,\tau_1}^{\beta}(a_0,\eta)$ and using the derivatives calculated in \eqref{eq:dev_f1_a0} and \eqref{eq:dev_f1_eta}.
\begin{align*}
\frac{\partial \log(f_{x_1}(t|\lambda_1,\eta))}{\partial a_0}&=\eta\left(-1+\left(\frac{t}{\lambda_1}\right)^{\eta}\right);
 \\
 \frac{\partial \log(f_{x_1}(t|\lambda_1,\eta))}{\partial \eta}&=\frac{1}{\eta}+\log\left(\frac{t}{\lambda_1}\right)-\left(\frac{t}{\lambda_1}\right)^{\eta}\log\left(\frac{t}{\lambda_1}\right)
 \\
 \Rightarrow \frac{\partial \log(f_{x_1}(t|\lambda_1,\eta))}{\partial a_0} \frac{\partial \log(f_{x_1}(t|\lambda_1,\eta))}{\partial \eta}&=-1-\eta\log\left(\frac{t}{\lambda_1}\right)+\eta\left(\frac{t}{\lambda_1}\right)^{\eta}\log\left(\frac{t}{\lambda_1}\right)
  \\
  &+\left(\frac{t}{\lambda_1}\right)^{\eta}+\eta\left(\frac{t}{\lambda_1}\right)^{\eta}\log\left(\frac{t}{\lambda_1}\right)-\eta\left(\frac{t}{\lambda_1}\right)^{2\eta}\log\left(\frac{t}{\lambda_1}\right).
\end{align*}
Then, we have:
\begin{align*}
J_{13,\tau_1}^{\beta}(a_0,\eta)&=-\int_{0}^{\tau_1}     f_{x_1}(t|\lambda_1,\eta)^{\beta+1}dt
-\eta\int_{0}^{\tau_1}\log\left(\frac{t}{\lambda_1}\right)     f_{x_1}(t|\lambda_1,\eta)^{\beta+1}dt
\\
&+ \eta\int_{0}^{\tau_1}\left(\frac{t}{\lambda_1}\right)^{\eta}     \log\left(\frac{t}{\lambda_1}\right)  f_{x_1}(t|\lambda_1,\eta)^{\beta+1}dt
+\int_{0}^{\tau_1}\left(\frac{t}{\lambda_1}\right)^{\eta}      f_{x_1}(t|\lambda_1,\eta)^{\beta+1}dt
\\
&+\eta \int_{0}^{\tau_1} \left(\frac{t}{\lambda_1}\right)^{\eta}    \log\left(\frac{t}{\lambda_1}\right)   f_{x_1}(t|\lambda_1,\eta)^{\beta+1}dt
-\eta \int_{0}^{\tau_1} \left(\frac{t}{\lambda_1}\right)^{2\eta}     \log\left(\frac{t}{\lambda_1}\right)  f_{x_1}(t|\lambda_1,\eta)^{\beta+1}dt
\\
&=-\zeta_{0,\beta}^{\tau_1}(a_0,a_1,\eta)
-\eta H_{0,1,\beta}^{\tau_1}(a_0,a_1,\eta)
+2\eta H_{ \eta,1,\beta}^{\tau_1}(a_0,a_1,\eta)
\\
&+\zeta_{\eta,\beta}^{\tau_1}(a_0,a_1,\eta)
-\eta H_{2\eta,1,\beta}^{\tau_1}(a_0,a_1,\eta).
\end{align*}
In relation to $J_{13,\tau_1\tau_2}^{\beta}(a_0,\eta)$ and using the derivatives calculated in  \eqref{eq:dev_f2_a0} and \eqref{eq:dev_f2_eta}.
\begin{align*}
\frac{\partial \log\left(f_{x_2}(t+h|\lambda_1,\lambda_2,\eta\right)}{\partial a_0}&=\eta\left(-1+\left(\frac{t+\frac{\lambda_2}{\lambda_1}\tau_1-\tau_1}{\lambda_2}\right)^{\eta}\right);
\\
\frac{\partial \log\left(f_{x_2}(t+h|\lambda_1,\lambda_2,\eta\right)}{\partial \eta}&=\frac{1}{\eta}+\log\left(\frac{t+\frac{\lambda_2}{\lambda_1}\tau_1-\tau_1}{\lambda_2}\right)-\left(\frac{t+\frac{\lambda_2}{\lambda_1}\tau_1-\tau_1}{\lambda_2}\right)^{\eta}\log\left(\frac{t+\frac{\lambda_2}{\lambda_1}\tau_1-\tau_1}{\lambda_2}\right)
\end{align*}
\begin{align*}
\Rightarrow  \frac{\partial \log\left(f_{x_2}(t+h|\lambda_1,\lambda_2,\eta\right)}{\partial a_0}\frac{\partial \log\left(f_{x_2}(t+h|\lambda_1,\lambda_2,\eta\right)}{\partial \eta}&=-1-\eta\log\left(\frac{t+\frac{\lambda_2}{\lambda_1}\tau_1-\tau_1}{\lambda_2}\right)
\\
&+\eta\left(\frac{t+\frac{\lambda_2}{\lambda_1}\tau_1-\tau_1}{\lambda_2}\right)^{\eta}\log\left(\frac{t+\frac{\lambda_2}{\lambda_1}\tau_1-\tau_1}{\lambda_2}\right)
\\
&+\left(\frac{t+\frac{\lambda_2}{\lambda_1}\tau_1-\tau_1}{\lambda_2}\right)^{\eta}
\\
&+\eta\left(\frac{t+\frac{\lambda_2}{\lambda_1}\tau_1-\tau_1}{\lambda_2}\right)^{\eta}\log\left(\frac{t+\frac{\lambda_2}{\lambda_1}\tau_1-\tau_1}{\lambda_2}\right)
\\
&-\eta\left(\frac{t+\frac{\lambda_2}{\lambda_1}\tau_1-\tau_1}{\lambda_2}\right)^{2\eta}\log\left(\frac{t+\frac{\lambda_2}{\lambda_1}\tau_1-\tau_1}{\lambda_2}\right).
\end{align*}
Then
\begin{align*}
J_{13,\tau_1\tau_2}^{\beta}(a_0,\eta)&=-\int_{\tau_1}^{\tau_2}f_{x_2}(t+h|\lambda_1,\lambda_2,\eta)^{\beta+1}dt
-\eta \int_{\tau_1}^{\tau_2}\log\left(\frac{t+\frac{\lambda_2}{\lambda_1}\tau_1-\tau_1}{\lambda_2}\right)f_{x_2}(t+h|\lambda_1,\lambda_2,\eta)^{\beta+1}dt
\\
&+\eta \int_{\tau_1}^{\tau_2}\left(\frac{t+\frac{\lambda_2}{\lambda_1}\tau_1-\tau_1}{\lambda_2}\right)^{\eta}\log\left(\frac{t+\frac{\lambda_2}{\lambda_1}\tau_1-\tau_1}{\lambda_2}\right)f_{x_2}(t+h|\lambda_1,\lambda_2,\eta)^{\beta+1}dt
\\
&+\int_{\tau_1}^{\tau_2}\left(\frac{t+\frac{\lambda_2}{\lambda_1}\tau_1-\tau_1}{\lambda_2}\right)^{\eta}f_{x_2}(t+h|\lambda_1,\lambda_2,\eta)^{\beta+1}dt
\\
&+\eta\int_{\tau_1}^{\tau_2}\left(\frac{t+\frac{\lambda_2}{\lambda_1}\tau_1-\tau_1}{\lambda_2}\right)^{\eta}\log\left(\frac{t+\frac{\lambda_2}{\lambda_1}\tau_1-\tau_1}{\lambda_2}\right)f_{x_2}(t+h|\lambda_1,\lambda_2,\eta)^{\beta+1}dt
\\
&-\eta\int_{\tau_1}^{\tau_2}\left(\frac{t+\frac{\lambda_2}{\lambda_1}\tau_1-\tau_1}{\lambda_2}\right)^{2\eta}\log\left(\frac{t+\frac{\lambda_2}{\lambda_1}\tau_1-\tau_1}{\lambda_2}\right)f_{x_2}(t+h|\lambda_1,\lambda_2,\eta)^{\beta+1}dt
\\
&=-\zeta_{0,\beta}^{\tau_1,\tau_2}(a_0,a_1,\eta)-\eta H_{0,1,\beta}^{\tau_1,\tau_2}(a_0,a_1,\eta)+2\eta H_{\eta,1,\beta}^{\tau_1,\tau_2}(a_0,a_1,\beta)
\\
&+\zeta_{\eta,\beta}^{\tau_1,\tau_2}(a_0,a_1,\eta)-\eta H_{2\eta,1,\beta}^{\tau_1,\tau_2}(a_0,a_1,\eta).
\end{align*}
And for $J_{13,\tau_2}^{\beta}(a_0,a_1,\eta)$
\begin{align*}
J_{13,\tau_2}^{\beta}(a_0,\eta)=\frac{\partial \log\left(1-F_{x_2}(\tau_2+h|\lambda_1,\lambda_2,\eta)\right)}{\partial a_0}\frac{\partial \log\left(1-F_{x_2}(\tau_2+h|\lambda_1,\lambda_2,\eta)\right)}{\partial \eta}\left(1-F_{x_2}(\tau_2+h|\lambda_1,\lambda_2,\eta)\right)^{\beta+1}.
\end{align*}
Using the derivatives calculated in  \ref{eq:dev_F2_a0} and \ref{eq:dev_F2_eta}.
\begin{align*}
\frac{\partial \log\left(1-F_{x_2}(\tau_2+h|\lambda_1,\lambda_2,\eta)\right)}{\partial a_0}&=\eta\left(\frac{\tau_2+\frac{\lambda_2}{\lambda_1}\tau_1-\tau_1}{\lambda_2}\right)^{\eta}
\\
\frac{\partial \log\left(1-F_{x_2}(\tau_2+h|\lambda_1,\lambda_2,\eta)\right)}{\partial \eta}&=-\log\left(\frac{\tau_2+\frac{\lambda_2}{\lambda_1}\tau_1-\tau_1}{\lambda_2}\right)\left(\frac{\tau_2+\frac{\lambda_2}{\lambda_1}\tau_1-\tau_1}{\lambda_2}\right)^{\eta}
\end{align*}
\begin{align*}
\Rightarrow \frac{\partial \log\left(1-F_{x_2}(\tau_2+h|\lambda_1,\lambda_2,\eta)\right)}{\partial a_0}\frac{\partial \log\left(1-F_{x_2}(\tau_2+h|\lambda_1,\lambda_2,\eta)\right)}{\partial \eta}=& -\eta\log\left(\frac{\tau_2+\frac{\lambda_2}{\lambda_1}\tau_1-\tau_1}{\lambda_2}\right)
\\
&\cdot \left(\frac{\tau_2+\frac{\lambda_2}{\lambda_1}\tau_1-\tau_1}{\lambda_2}\right)^{2\eta}
\end{align*}
\begin{align*}
\Rightarrow  J_{13,\tau_2}^{\beta}(a_0,a_1,\eta)&=\frac{\partial \log\left(1-F_{x_2}(\tau_2+h|\lambda_1,\lambda_2,\eta)\right)}{\partial a_0}\frac{\partial \log\left(1-F_{x_2}(\tau_2+h|\lambda_1,\lambda_2,\eta)\right)}{\partial \eta}
\\
&\cdot \left(1-F_{x_2}(\tau_2+h|\lambda_1,\lambda_2,\eta)\right)^{\beta+1}
\\
&=-\eta\log\left(\frac{\tau_2+\frac{\lambda_2}{\lambda_1}\tau_1-\tau_1}{\lambda_2}\right)\left(\frac{\tau_2+\frac{\lambda_2}{\lambda_1}\tau_1-\tau_1}{\lambda_2}\right)^{2\eta}
\\
&\cdot \exp\left(-\left(\frac{\tau_2+\frac{\lambda_2}{\lambda_1}\tau_1-\tau_1}{\lambda_2}\right)^\eta(\beta+1)\right).
\end{align*}

\subsection{Proof Proposition 10}
\begin{proof}
For $J_{23}^{\beta}(a_1,\eta)$
\begin{align*}
J_{23}^{\beta}(a_1,\eta)&=\int_{0}^{\tau_1}\frac{\partial \log\left(f_{x_1}(t|\lambda_1,\eta)\right)}{\partial a_1}\frac{\partial \log\left(f_{x_1}(t|\lambda_1,\eta)\right)}{\partial \eta}f_{x_1}(t|\lambda_1,\eta)^{\beta+1}dt
\\
&+\int_{\tau_1}^{\tau_2}\frac{\partial \log\left(f_{x_2}(t+h|\lambda_1,\lambda_2,\eta)\right)}{\partial a_1}\frac{\partial \log\left(f_{x_2}(t+h|\lambda_1,\lambda_2,\eta)\right)}{\partial \eta}f_{x_2}(t+h|\lambda_1,\lambda_2,\eta)^{\beta+1}dt
\\
&+\frac{\partial \log\left(1-F_{x_2}(\tau_2+h|\lambda_1,\lambda_2,\eta)\right)}{\partial a_1}\frac{\partial \log\left(1-F_{x_2}(\tau_2+h|\lambda_1,\lambda_2,\eta)\right)}{\partial \eta}\left(1-F_{x_2}(\tau_2+h|\lambda_1,\lambda_2,\eta)\right)^{\beta+1}
\\
&=J_{23,\tau_1}^{\beta}(a_1,\eta)+J_{23,\tau_1\tau_2}^{\beta}(a_1,\eta)+J_{23,\tau_2}^{\beta}(a_1,\eta).
\end{align*}
In relation to  $J_{23,\tau_1}^{\beta}(a_0,\eta)$ and using the derivatives calculated in  \eqref{eq:dev_f1_a1} and \eqref{eq:dev_f1_eta}.
\begin{align*}
\frac{\partial \log(f_{x_1}(t|\lambda_1,\eta))}{\partial a_1}&=\eta x_1 \left(-1+\left(\frac{t}{\lambda_1}\right)^{\eta}\right)=x_2\frac{\partial \log(f_{x_1}(t|\lambda_1,\eta))}{\partial a_0}
 \\
 \frac{\partial \log(f_{x_1}(t|\lambda_1,\eta))}{\partial \eta}&=\frac{1}{\eta}+\log\left(\frac{t}{\lambda_1}\right)-\left(\frac{t}{\lambda_1}\right)^{\eta}\log\left(\frac{t}{\lambda_1}\right)
 \\
 \Rightarrow \frac{\partial \log(f_{x_1}(t|\lambda_1,\eta))}{\partial a_1} \frac{\partial \log(f_{x_1}(t|\lambda_1,\eta))}{\partial \eta}&=x_1\frac{\partial \log(f_{x_1}(t|\lambda_1,\eta))}{\partial a_0} \frac{\partial \log(f_{x_1}(t|\lambda_1,\eta))}{\partial \eta}.
\end{align*}
Then, we have:
\begin{align*}
J_{23,\tau_1}^{\beta}(a_0,\eta)&=x_1 J_{13,\tau_1}^{\beta}(a_0,\eta)
\\
&=x_1\left\{ -\zeta_{0,\beta}^{\tau_1}(a_0,a_1,\eta)
-\eta H_{0,1,\beta}^{\tau_1}(a_0,a_1,\eta)
+2\eta H_{ \eta,1,\beta}^{\tau_1}(a_0,a_1,\eta) \right.
\\
&\left.+\zeta_{\eta,\beta}^{\tau_1}(a_0,a_1,\eta)
-\eta H_{2\eta,1,\beta}^{\tau_1}(a_0,a_1,\eta)\right\}.
\end{align*}
In relation to $J_{23,\tau_1\tau_2}^{\beta}(a_0,\eta)$ and using the derivatives calculated in  \eqref{eq:dev_f2_a1} and \eqref{eq:dev_f2_eta}.
\begin{align*}
\frac{\partial \log\left(f_{x_2}(t+h|\lambda_1,\lambda_2,\eta\right)}{\partial a_1}&=-\eta x_2 + \frac{\eta-1}{\lambda_1}\tau_1(x_2-x_1)\left( \frac{t+\frac{\lambda_2}{\lambda_1}\tau_1-\tau_1}{\lambda_2}\right)^{-1}-\frac{\eta\tau_1(x_2-x_1)}{\lambda_1}\left(\frac{t+\frac{\lambda_2}{\lambda_1}\tau_1-\tau_1}{\lambda_2}\right)^{\eta-1}
\\
&+\eta x_2 \left(\frac{t+\frac{\lambda_2}{\lambda_1}\tau_1-\tau_1}{\lambda_2}\right)^{\eta}
\\
\frac{\partial \log\left(f_{x_2}(t+h|\lambda_1,\lambda_2,\eta\right)}{\partial \eta}&=\frac{1}{\eta}+\log\left(\frac{t+\frac{\lambda_2}{\lambda_1}\tau_1-\tau_1}{\lambda_2}\right)-\left(\frac{t+\frac{\lambda_2}{\lambda_1}\tau_1-\tau_1}{\lambda_2}\right)^{\eta}\log\left(\frac{t+\frac{\lambda_2}{\lambda_1}\tau_1-\tau_1}{\lambda_2}\right)
\end{align*}
\begin{align*}
\Rightarrow  \frac{\partial \log\left(f_{x_2}(t+h|\lambda_1,\lambda_2,\eta\right)}{\partial a_1}\frac{\partial \log\left(f_{x_2}(t+h|\lambda_1,\lambda_2,\eta\right)}{\partial \eta}&=-x_2-\eta x_2 \log\left(\frac{t+\frac{\lambda_2}{\lambda_1}\tau_1-\tau_1}{\lambda_2}\right)
\\
&+\eta x_2 \left(\frac{t+\frac{\lambda_2}{\lambda_1}\tau_1 -\tau_1}{\lambda_2}\right)^{\eta}\log\left(\frac{t+\frac{\lambda_2}{\lambda_1}\tau_1-\tau_1}{\lambda_2}\right)
\\
&+\frac{\eta-1}{\eta}\frac{\tau_1}{\lambda_1}(x_2-x_1)\frac{1}{\frac{t+\frac{\lambda_2}{\lambda_1}\tau_1 -\tau_1}{\lambda_2}}
\\
&+(\eta-1)\frac{\tau_1}{\lambda_1}(x_2-x_1)\frac{1}{\frac{t+\frac{\lambda_2}{\lambda_1}\tau_1 -\tau_1}{\lambda_2}}\log\left(\frac{t+\frac{\lambda_2}{\lambda_1}\tau_1 -\tau_1}{\lambda_2}\right)
\\
&-(\eta-1)\frac{\tau_1}{\lambda_1}(x_2-x_1)\left(\frac{t+\frac{\lambda_2}{\lambda_1}\tau_1-\tau_1}{\lambda_2}\right)^{\eta-1}\log\left(\frac{t+\frac{\lambda_2}{\lambda_1}\tau_1-\tau_1}{\lambda_2}\right)
\\
&-\frac{\tau_1}{\lambda_1}(x_2-x_1)\left(\frac{t+\frac{\lambda_2}{\lambda_1}\tau_1-\tau_1}{\lambda_2}\right)^{\eta-1}
\\
&-\eta\frac{\tau_1}{\lambda_1}(x_2-x_1)\left(\frac{t+\frac{\lambda_2}{\lambda_1}\tau_1-\tau_1}{\lambda_2}\right)^{\eta-1}\log\left(\frac{t+\frac{\lambda_2}{\lambda_1}\tau_1-\tau_1}{\lambda_2}\right)
\\
&+\eta\frac{\tau_1}{\lambda_1}(x_2-x_1)\left(\frac{t+\frac{\lambda_2}{\lambda_1}\tau_1-\tau_1}{\lambda_2}\right)^{2\eta-1}\log\left(\frac{t+\frac{\lambda_2}{\lambda_1}\tau_1-\tau_1}{\lambda_2}\right)
\\
&+x_2\left(\frac{t+\frac{\lambda_2}{\lambda_1}\tau_1-\tau_1}{\lambda_2}\right)^{\eta}
\\
&+\eta x_2\left(\frac{t+\frac{\lambda_2}{\lambda_1}\tau_1-\tau_1}{\lambda_2}\right)^{\eta}\log\left(\frac{t+\frac{\lambda_2}{\lambda_1}\tau_1-\tau_1}{\lambda_2}\right)
\\
&-\eta x_2\left(\frac{t+\frac{\lambda_2}{\lambda_1}\tau_1-\tau_1}{\lambda_2}\right)^{2\eta}\log\left(\frac{t+\frac{\lambda_2}{\lambda_1}\tau_1-\tau_1}{\lambda_2}\right).
\end{align*}
Then
\begin{align*}
J_{23,\tau_1\tau_2}^{\beta}(a_1,\eta)&=-x_2\int_{\tau_1}^{\tau_2}f_{x_2}(t+h|\lambda_1,\lambda_2,\eta)^{\beta+1}dt
-\eta x_2 \int_{\tau_1}^{\tau_2}\log\left(\frac{t+\frac{\lambda_2}{\lambda_1}\tau_1-\tau_1}{\lambda_2}\right)f_{x_2}(t+h|\lambda_1,\lambda_2,\eta)^{\beta+1}dt
\\
& +\eta x_2 \int_{\tau_1}^{\tau_2}\left(\frac{t+\frac{\lambda_2}{\lambda_1}\tau_1-\tau_1}{\lambda_2}\right)^{\eta}\log\left(\frac{t+\frac{\lambda_2}{\lambda_1}\tau_1-\tau_1}{\lambda_2}\right)f_{x_2}(t+h|\lambda_1,\lambda_2,\eta)^{\beta+1}dt
\\
& + \frac{\eta-1}{\eta}\frac{\tau_1}{\lambda_1}(x_2-x_1) \int_{\tau_1}^{\tau_2}\left(\frac{t+\frac{\lambda_2}{\lambda_1}\tau_1-\tau_1}{\lambda_2}\right)^{-1}f_{x_2}(t+h|\lambda_1,\lambda_2,\eta)^{\beta+1}dt
\\
&+(\eta-1)\frac{\tau_1}{\lambda_1}(x_2-x_1) \int_{\tau_1}^{\tau_2}\left(\frac{t+\frac{\lambda_2}{\lambda_1}\tau_1-\tau_1}{\lambda_2}\right)^{-1}\log\left(\frac{t+\frac{\lambda_2}{\lambda_1}\tau_1-\tau_1}{\lambda_2}\right)f_{x_2}(t+h|\lambda_1,\lambda_2,\eta)^{\beta+1}dt
\\
&-(\eta-1)\frac{\tau_1}{\lambda_1}(x_2-x_1) \int_{\tau_1}^{\tau_2}\left(\frac{t+\frac{\lambda_2}{\lambda_1}\tau_1-\tau_1}{\lambda_2}\right)^{\eta-1}\log\left(\frac{t+\frac{\lambda_2}{\lambda_1}\tau_1-\tau_1}{\lambda_2}\right)f_{x_2}(t+h|\lambda_1,\lambda_2,\eta)^{\beta+1}dt
\\
&-\frac{\tau_1}{\lambda_1}(x_2-x_1) \int_{\tau_1}^{\tau_2}\left(\frac{t+\frac{\lambda_2}{\lambda_1}\tau_1-\tau_1}{\lambda_2}\right)^{\eta-1}f_{x_2}(t+h|\lambda_1,\lambda_2,\eta)^{\beta+1}dt
\\
&-\eta\frac{\tau_1}{\lambda_1}(x_2-x_1) \int_{\tau_1}^{\tau_2}\left(\frac{t+\frac{\lambda_2}{\lambda_1}\tau_1-\tau_1}{\lambda_2}\right)^{\eta-1}\log\left(\frac{t+\frac{\lambda_2}{\lambda_1}\tau_1-\tau_1}{\lambda_2}\right)f_{x_2}(t+h|\lambda_1,\lambda_2,\eta)^{\beta+1}dt
\\
&+\eta\frac{\tau_1}{\lambda_1}(x_2-x_1) \int_{\tau_1}^{\tau_2}\left(\frac{t+\frac{\lambda_2}{\lambda_1}\tau_1-\tau_1}{\lambda_2}\right)^{2\eta-1}\log\left(\frac{t+\frac{\lambda_2}{\lambda_1}\tau_1-\tau_1}{\lambda_2}\right)f_{x_2}(t+h|\lambda_1,\lambda_2,\eta)^{\beta+1}dt
\\
& +x_2 \int_{\tau_1}^{\tau_2}\left(\frac{t+\frac{\lambda_2}{\lambda_1}\tau_1-\tau_1}{\lambda_2}\right)^{\eta}f_{x_2}(t+h|\lambda_1,\lambda_2,\eta)^{\beta+1}dt
\\
&+\eta x_2 \int_{\tau_1}^{\tau_2}\left(\frac{t+\frac{\lambda_2}{\lambda_1}\tau_1-\tau_1}{\lambda_2}\right)^{\eta}\log\left(\frac{t+\frac{\lambda_2}{\lambda_1}\tau_1-\tau_1}{\lambda_2}\right)f_{x_2}(t+h|\lambda_1,\lambda_2,\eta)^{\beta+1}dt
\\
&-\eta x_2  \int_{\tau_1}^{\tau_2}\left(\frac{t+\frac{\lambda_2}{\lambda_1}\tau_1-\tau_1}{\lambda_2}\right)^{2\eta}\log\left(\frac{t+\frac{\lambda_2}{\lambda_1}\tau_1-\tau_1}{\lambda_2}\right)f_{x_2}(t+h|\lambda_1,\lambda_2,\eta)^{\beta+1}dt
\\
&=- x_2 \zeta_{0,\beta}^{\tau_1,\tau_2}(a_0,a_1,\eta)
-\eta x_2H_{0,1,\beta}^{\tau_1,\tau_2}(a_0,a_1,\eta)
 +2\eta x_2 H_{\eta,1,\beta}^{\tau_1,\tau_2}(a_0,a_1,\eta)
 \\
&
+ \frac{\eta-1}{\eta}\frac{\tau_1}{\lambda_1}(x_2-x_1)\zeta_{-1,\beta}^{\tau_1,\tau_2}(a_0,a_1,\eta)
+(\eta-1)\frac{\tau_1}{\lambda_1}(x_2-x_1)H_{-1,1,\beta}^{\tau_1,\tau_2}(a_0,a_1,\eta)
\\
&
-(2\eta-1)\frac{\tau_1}{\lambda_1}(x_2-x_1)H_{\eta-1,1,\beta}^{\tau_1,\tau_2}(a_0,a_1,\eta)
-\frac{\tau_1}{\lambda_1}(x_2-x_1) \zeta_{\eta-1,\beta}^{\tau_1,\tau_2}(a_0,a_1,\eta)
\\
&
+\eta\frac{\tau_1}{\lambda_1}(x_2-x_1)H_{2\eta-1,1,\beta}^{\tau_1,\tau_2}(a_0,a_1,\eta)
 +x_2\zeta_{\eta,\beta}^{\tau_1,\tau_2}(a_0,a_1,\eta)
-\eta x_2 H_{2\eta,1,\beta}^{\tau_1,\tau_2}(a_0,a_1,\eta).
\end{align*}
And for $J_{23,\tau_2}^{\beta}(a_0,a_1,\eta)$
\begin{align*}
J_{23,\tau_2}^{\beta}(a_0,a_1,\eta)=\frac{\partial \log\left(1-F_{x_2}(\tau_2+h|\lambda_1,\lambda_2,\eta)\right)}{\partial a_1}\frac{\partial \log\left(1-F_{x_2}(\tau_2+h|\lambda_1,\lambda_2,\eta)\right)}{\partial \eta}\left(1-F_{x_2}(\tau_2+h|\lambda_1,\lambda_2,\eta)\right)^{\beta+1}.
\end{align*}
Using the derivatives calculated in  \eqref{eq:dev_F2_a1} and \eqref{eq:dev_F2_eta}.
\begin{align*}
\frac{\partial \log\left(1-F_{x_2}(\tau_2+h|\lambda_1,\lambda_2,\eta)\right)}{\partial a_1}&=\eta\left(\frac{\tau_2+\frac{\lambda_2}{\lambda_1}\tau_1-\tau_1}{\lambda_2}\right)^{\eta-1}\frac{\frac{x_1\tau_1\lambda_2}{\lambda_1}+x_2(\tau_2-\tau_1)}{\lambda_2}
\\
\frac{\partial \log\left(1-F_{x_2}(\tau_2+h|\lambda_1,\lambda_2,\eta)\right)}{\partial \eta}&=-\log\left(\frac{\tau_2+\frac{\lambda_2}{\lambda_1}\tau_1-\tau_1}{\lambda_2}\right)\left(\frac{\tau_2+\frac{\lambda_2}{\lambda_1}\tau_1-\tau_1}{\lambda_2}\right)^{\eta}
\end{align*}
\begin{align*}
\Rightarrow \frac{\partial \log\left(1-F_{x_2}(\tau_2+h|\lambda_1,\lambda_2,\eta)\right)}{\partial a_1}\frac{\partial \log\left(1-F_{x_2}(\tau_2+h|\lambda_1,\lambda_2,\eta)\right)}{\partial \eta}&= -\eta\frac{\frac{x_1\tau_1\lambda_2}{\lambda_1}+x_2(\tau_2-\tau_1)}{\lambda_2}\log\left(\frac{\tau_2+\frac{\lambda_2}{\lambda_1}\tau_1-\tau_1}{\lambda_2}\right)
\\
&\log\left(\frac{\tau_2+\frac{\lambda_2}{\lambda_1}\tau_1-\tau_1}{\lambda_2}\right)^{2\eta-1}
\end{align*}
\begin{align*}
\Rightarrow  J_{13,\tau_2}^{\beta}(a_0,a_1,\eta)&=\frac{\partial \log\left(1-F_{x_2}(\tau_2+h|\lambda_1,\lambda_2,\eta)\right)}{\partial a_0}\frac{\partial \log\left(1-F_{x_2}(\tau_2+h|\lambda_1,\lambda_2,\eta)\right)}{\partial \eta}\left(1-F_{x_2}(\tau_2+h|\lambda_1,\lambda_2,\eta)\right)^{\beta+1}
\\
&=\eta\frac{\frac{x_1\tau_1\lambda_2}{\lambda_1}+x_2(\tau_2-\tau_1)}{\lambda_2}\log\left(\frac{\tau_2+\frac{\lambda_2}{\lambda_1}\tau_1-\tau_1}{\lambda_2}\right)\left(\frac{\tau_2+\frac{\lambda_2}{\lambda_1}\tau_1-\tau_1}{\lambda_2}\right)^{2\eta-1}
\\
&\cdot \exp\left(-\frac{\tau_2+\frac{\lambda_2}{\lambda_1}\tau_1-\tau_1}{\lambda_2}(\beta+1)\right).
\end{align*}
\end{proof}

\subsection{Proof Proposition 11}

\begin{proof}
For $\xi_1(a_0)$ we have
\begin{align*}
\xi_1^{\beta}(a_0)&=\int_{0}^{\tau_1}\frac{\partial \log\left(f_{x_1}(t|\lambda_1,\eta)\right)}{\partial a_0}f_{x_1}(t|\lambda_1,\eta)^{\beta+1}dt+\int_{\tau_1}^{\tau_2}\frac{\partial \log\left(f_{x_2}(t+h|\lambda_1,\lambda_2,\eta)\right)}{\partial a_0} f_{x_2}(t+h|\lambda_1,\lambda_2,\eta)^{\beta+1}dt
\\
&+\frac{\partial \log\left(1-F_{x_2}(\tau_2+h|\lambda_1,\lambda_2,\eta)\right)}{\partial a_0}\left(1-F_{x_2}(\tau_2+h|\lambda_1,\lambda_2,\eta)\right)^{\beta+1}
\\
&=\xi_{1,\tau_1}^{\beta}(a_0)+\xi_{1,\tau_1 \tau_2}^{\beta} (a_0)+ \xi_{1,\tau_2}^{\beta}(a_0).
\end{align*}
For $\xi_{1,\tau_1}^{\beta}(a_0)$ we have taking the derivatives calculated in \eqref{eq:dev_f1_a0}.
\begin{align*}
\frac{\partial \log\left(f_{x_1}(t|\lambda_1,\eta)\right)}{\partial a_0}&=\eta\left(-1+\left(\frac{t}{\lambda_1}\right)^{\eta}\right)
\\
\Rightarrow  \xi_{1,\tau_1}^{\beta}(a_0)&=\eta\left\{-\int_{0}^{\tau_1}f_{x_1}(t|\lambda_1,\eta)^{\beta+1}dt+\int_{0}^{\tau_1}\left(\frac{t}{\lambda_1}\right)^{\eta}f_{x_1}(t|\lambda_1,\eta)^{\beta+1}dt \right\}
\\
&=\eta\left\{-\zeta_{0,\beta}^{\tau_1}(a_0,a_1,\eta)+\zeta_{\eta,\beta}^{\tau_1}(a_0,a_1,\eta)\right\}.
\end{align*}
For $\xi_{1,\tau_1\tau_2}^{\beta}(a_0)$ we have taking the derivatives calculated in \eqref{eq:dev_f2_a0}.
\begin{align*}
\frac{\partial \log\left(f_{x_2}(t+h|\lambda_1,\lambda_2,\eta)\right)}{\partial a_0}&=\eta\left(-1+\left(\frac{t+\frac{\lambda_2}{\lambda_1}\tau_1-\tau_1}{\lambda_2}\right)^{\eta}\right)
\\
\Rightarrow  \xi_{1,\tau_1\tau_2}^{\beta}(a_0)&=\eta\left\{-\int_{0}^{\tau_1}f_{x_2}(t+h|\lambda_1,\lambda_2,\eta)^{\beta+1}dt+\int_{0}^{\tau_1}\left(\frac{t+\frac{\lambda_2}{\lambda_1}\tau_1-\tau_1}{\lambda_2}\right)^{\eta}f_{x_2}(t+h|\lambda_1,\lambda_2,\eta)^{\beta+1}dt \right\}
\\
&=\eta\left\{-\zeta_{0,\beta}^{\tau_1,\tau_2}(a_0,a_1,\eta)+\zeta_{\eta,\beta}^{\tau_1,\tau_2}(a_0,a_1,\eta)\right\}.
\end{align*}
And for $\xi_{1,\tau_2}^{\beta}(a_0)$ we have taking the derivatives calculated in \eqref{eq:dev_F2_a0}.
\begin{align*}
\frac{\partial \log\left(1-F_{x_2}(\tau_2+h|\lambda_1,\lambda_2,\eta)\right)}{\partial a_0}&=\eta\left(\frac{\tau_2+\frac{\lambda_2}{\lambda_1}\tau_1-\tau_1}{\lambda_2}\right)^{\eta}
\\
\Rightarrow  \xi_{1,\tau_2}^{\beta}(a_0)&=\eta \left(\frac{\tau_2+\frac{\lambda_2}{\lambda_1}\tau_1-\tau_1}{\lambda_2}\right)^{\eta} \exp\left(-\left(\frac{\tau_2+\frac{\lambda_2}{\lambda_1}\tau_1-\tau_1}{\lambda_2}\right)^{\eta}(\beta+1)\right).
\end{align*}
\end{proof}

\subsection{Proof Proposition 12}

\begin{proof}
For $\xi_2(a_1)$ we have
\begin{align*}
\xi_2^{\beta}(a_1)&=\int_{0}^{\tau_1}\frac{\partial \log\left(f_{x_1}(t|\lambda_1,\eta)\right)}{\partial a_1}f_{x_1}(t|\lambda_1,\eta)^{\beta+1}dt+\int_{\tau_1}^{\tau_2}\frac{\partial \log\left(f_{x_2}(t+h|\lambda_1,\lambda_2,\eta)\right)}{\partial a_1} f_{x_2}(t+h|\lambda_1,\lambda_2,\eta)^{\beta+1}dt
\\
&+\frac{\partial \log\left(1-F_{x_2}(\tau_2+h|\lambda_1,\lambda_2,\eta)\right)}{\partial a_1}\left(1-F_{x_2}(\tau_2+h|\lambda_1,\lambda_2,\eta)\right)^{\beta+1}
\\
&=\xi_{2,\tau_1}^{\beta}(a_1)+\xi_{2,\tau_1 \tau_2}^{\beta} (a_1)+ \xi_{2,\tau_2}^{\beta}(a_1).
\end{align*}
For $\xi_{2,\tau_1}^{\beta}(a_1)$ we have taking the derivatives calculated in \eqref{eq:dev_f1_a1}.
\begin{align*}
\frac{\partial \log\left(f_{x_1}(t|\lambda_1,\eta)\right)}{\partial a_1}&=\eta x_1\left(-1+\left(\frac{t}{\lambda_1}\right)^{\eta}\right)
\\
\Rightarrow  \xi_{2,\tau_1}^{\beta}(a_1)&=\eta x_1 \left\{-\int_{0}^{\tau_1}f_{x_1}(t|\lambda_1,\eta)^{\beta+1}dt+\int_{0}^{\tau_1}\left(\frac{t}{\lambda_1}\right)^{\eta}f_{x_1}(t|\lambda_1,\eta)^{\beta+1}dt \right\}
\\
&=\eta x_1\left\{-\zeta_{0,\beta}^{\tau_1}(a_0,a_1,\eta)+\zeta_{\eta,\beta}^{\tau_1}(a_0,a_1,\eta)\right\}.
\end{align*}
For $\xi_{2,\tau_1\tau_2}^{\beta}(a_1)$ we have taking the derivatives calculated in \eqref{eq:dev_f2_a1}.
\begin{align*}
\frac{\partial \log\left(f_{x_2}(t+h|\lambda_1,\lambda_2,\eta)\right)}{\partial a_1}&=-\eta x_2 +(\eta-1) \frac{\tau_1}{\lambda_1}(x_2-x_1)\left(\frac{t+\frac{\lambda_2}{\lambda_1}\tau_1-\tau_1}{\lambda_2}\right)^{-1}-\eta\frac{\tau_1}{\lambda_1}(x_2-x_1)\left(\frac{t+\frac{\lambda_2}{\lambda_1}\tau_1-\tau_1}{\lambda_2}\right)^{\eta-1}\nonumber
\\
&+\eta x_2\left(\frac{t+\frac{\lambda_2}{\lambda_1}\tau_1-\tau_1}{\lambda_2}\right)^{\eta} \nonumber
\\
\Rightarrow  \xi_{2,\tau_1\tau_2}^{\beta}(a_1)&=-\eta x_2 \int_{0}^{\tau_1}f_{x_2}(t+h|\lambda_1,\lambda_2,\eta)^{\beta+1}dt
\\
&-\eta\frac{\tau_1}{\lambda_1}(x_2-x_1)\int_{0}^{\tau_1}\left(\frac{t+\frac{\lambda_2}{\lambda_1}\tau_1-\tau_1}{\lambda_2}\right)^{-1}f_{x_2}(t+h|\lambda_1,\lambda_2,\eta)^{\beta+1}dt 
\\
&-\eta\frac{\tau_1}{\lambda_1}(x_2-x_1)\int_{0}^{\tau_1}\left(\frac{t+\frac{\lambda_2}{\lambda_1}\tau_1-\tau_1}{\lambda_2}\right)^{\eta-1}f_{x_2}(t+h|\lambda_1,\lambda_2,\eta)^{\beta+1}dt 
\\
&+\eta x_2 \int_{0}^{\tau_1}\left(\frac{t+\frac{\lambda_2}{\lambda_1}\tau_1-\tau_1}{\lambda_2}\right)^{\eta}f_{x_2}(t+h|\lambda_1,\lambda_2,\eta)^{\beta+1}dt 
\\
&=-\eta x_2 \zeta_{0,\beta}^{\tau_1,\tau_2}(a_0,a_1,\eta)+(\eta-1)\frac{\tau_1}{\lambda_1}(x_2-x_1)\zeta_{-1,\beta}^{\tau_1,\tau_2}(a_0,a_1,\eta)
\\
&-\eta\frac{\tau_1}{\lambda_1}(x_2-x_1)\zeta_{\eta-1,\beta}^{\tau_1,\tau_2}(a_0,a_1,\eta)+\eta x_2(x_2-x_1)\zeta_{\eta,\beta}^{\tau_1,\tau_2}(a_0,a_1,\eta).
\end{align*}
And for $\xi_{2,\tau_2}^{\beta}(a_1)$ we have taking the derivatives calculated in \eqref{eq:dev_F2_a1}.
\begin{align*}
\frac{\partial \log\left(1-F_{x_2}(\tau_2+h|\lambda_1,\lambda_2,\eta)\right)}{\partial a_1}&=\eta\left(\frac{\tau_2+\frac{\lambda_2}{\lambda_1}\tau_1-\tau_1}{\lambda_2}\right)^{\eta-1}\frac{\frac{x_1\tau_1\lambda_2}{\lambda_1}+x_2(\tau_2-\tau_1)}{\lambda_2}
\\
\Rightarrow  \xi_{2,\tau_2}^{\beta}(a_1)&=\eta \left(\frac{\tau_2+\frac{\lambda_2}{\lambda_1}\tau_1-\tau_1}{\lambda_2}\right)^{\eta-1} \exp\left(-\left(\frac{\tau_2+\frac{\lambda_2}{\lambda_1}\tau_1-\tau_1}{\lambda_2}\right)^{\eta}(\beta+1)\right)\frac{\frac{x_1\tau_1\lambda_2}{\lambda_1}+x_2(\tau_2-\tau_1)}{\lambda_2}.
\end{align*}

\end{proof}
\subsection{Proof Proposition 13}
For $\xi_3(\eta)$ we have
\begin{align*}
\xi_3^{\beta}(\eta)&=\int_{0}^{\tau_1}\frac{\partial \log\left(f_{x_1}(t|\lambda_1,\eta)\right)}{\partial \eta}f_{x_1}(t|\lambda_1,\eta)^{\beta+1}dt+\int_{\tau_1}^{\tau_2}\frac{\partial \log\left(f_{x_2}(t+h|\lambda_1,\lambda_2,\eta)\right)}{\partial \eta} f_{x_2}(t+h|\lambda_1,\lambda_2,\eta)^{\beta+1}dt
\\
&+\frac{\partial \log\left(1-F_{x_2}(\tau_2+h|\lambda_1,\lambda_2,\eta)\right)}{\partial \eta}\left(1-F_{x_2}(\tau_2+h|\lambda_1,\lambda_2,\eta)\right)^{\beta+1}
\\
&=\xi_{3,\tau_1}^{\beta}(\eta)+\xi_{3,\tau_1 \tau_2}^{\beta} (\eta)+ \xi_{3,\tau_2}^{\beta}(\eta).
\end{align*}
For $\xi_{3,\tau_1}^{\beta}(\eta)$ we have taking the derivatives calculated in \eqref{eq:dev_f1_eta}.
\begin{align*}
\frac{\partial \log\left(f_{x_1}(t|\lambda_1,\eta)\right)}{\partial \eta}&=\frac{1}{\eta}+\log\left(\frac{t}{\lambda_1}\right)-\left(\frac{t}{\lambda_1}\right)^{\eta}\log\left(\frac{t}{\lambda_1}\right)
\\
\Rightarrow  \xi_{3,\tau_1}^{\beta}(\eta)&=\frac{1}{\eta} \int_{0}^{\tau_1}f_{x_1}(t|\lambda_1,\eta)^{\beta+1}dt
+\int_{0}^{\tau_1}\log\left(\frac{t}{\lambda_1}\right)f_{x_1}(t|\lambda_1,\eta)^{\beta+1}dt
\\
&-\int_{0}^{\tau_1}\left(\frac{t}{\lambda_1}\right)^{\eta}\log\left(\frac{t}{\lambda_1}\right)f_{x_1}(t|\lambda_1,\eta)^{\beta+1}dt 
\\
&=\frac{1}{\eta}\zeta_{0,\beta}^{\tau_1}(a_0,a_1,\eta)+H_{0,1,\beta}^{\tau_1}(a_0,a_1,\eta)-H_{\eta,1,\beta}^{\tau_1}(a_0,a_1,\eta).
\end{align*}
For $\xi_{3,\tau_1\tau_2}^{\beta}(\eta)$ we have taking the derivatives calculated in \eqref{eq:dev_f2_eta}.
\begin{align*}
\frac{\partial \log\left(f_{x_2}(t+h|\lambda_1,\lambda_2,\eta)\right)}{\partial \eta}&=\frac{1}{\eta}+\log\left(\frac{t+\frac{\lambda_2}{\lambda_1}\tau_1-\tau_1}{\lambda_2}\right)-\left(\frac{t+\frac{\lambda_2}{\lambda_1}\tau_1-\tau_1}{\lambda_2}\right)^{\eta}\log\left(\frac{t+\frac{\lambda_2}{\lambda_1}\tau_1-\tau_1}{\lambda_2}\right)
\end{align*}
\begin{align*}
\Rightarrow  \xi_{3,\tau_1\tau_2}^{\beta}(\eta)&=\frac{1}{\eta} \int_{\tau_1}^{\tau_2}f_{x_2}(t+h|\lambda_1,\lambda_2,\eta)^{\beta+1}dt
+\int_{\tau_1}^{\tau_2}\log\left(\frac{t+\frac{\lambda_2}{\lambda_1}\tau_1-\tau_1}{\lambda_2}\right)f_{x_2}(t+h|\lambda_1,\lambda_2,\eta)^{\beta+1}dt
\\
&-\int_{\tau_1}^{\tau_2}\left(\frac{t+\frac{\lambda_2}{\lambda_1}\tau_1-\tau_1}{\lambda_2}\right)^{\eta}\log\left(\frac{t+\frac{\lambda_2}{\lambda_1}\tau_1-\tau_1}{\lambda_2}\right)f_{x_2}(t+h|\lambda_1,\lambda_2,\eta)^{\beta+1}dt 
\\
&=\frac{1}{\eta}\zeta_{0,\beta}^{\tau_1,\tau_2}(a_0,a_1,\eta)+H_{0,1,\beta}^{\tau_1,\tau_2}(a_0,a_1,\eta)-H_{\eta,1,\beta}^{\tau_1,\tau_2}(a_0,a_1,\eta).
\end{align*}
And for $\xi_{3,\tau_2}^{\beta}(\eta)$ we have taking the derivatives calculated in \eqref{eq:dev_F2_eta}.
\begin{align*}
\frac{\partial \log\left(1-F_{x_2}(\tau_2+h|\lambda_1,\lambda_2,\eta)\right)}{\partial \eta}&=-\log\left(\frac{\tau_2+\frac{\lambda_2}{\lambda_1}\tau_1-\tau_1}{\lambda_2}\right)\left(\frac{\tau_2+\frac{\lambda_2}{\lambda_1}\tau_1-\tau_1}{\lambda_2}\right)^{\eta}
\\
\Rightarrow  \xi_{3,\tau_2}^{\beta}(\eta)&=-\log\left(\frac{\tau_2+\frac{\lambda_2}{\lambda_1}\tau_1-\tau_1}{\lambda_2}\right)\left(\frac{\tau_2+\frac{\lambda_2}{\lambda_1}\tau_1-\tau_1}{\lambda_2}\right)^{\eta}
\\
&\cdot \exp\left(-\left(\frac{\tau_2+\frac{\lambda_2}{\lambda_1}\tau_1-\tau_1}{\lambda_2}\right)^{\eta}(\beta+1)\right).
\end{align*}
\subsection{Calculations for integrals}
In this subsection we will get a formula to compute the functions $H_{\alpha,\gamma,\beta}^{\tau_1}(a_0,a_1,\eta)$ and  $H_{\alpha,\gamma,\beta}^{\tau_1,\tau_2}(a_0,a_1,\eta)$.

We have
\begin{align*}
H_{\alpha,\gamma,\beta}^{\tau_1}(a_0,a_1,\eta)=\lambda_1\left(\frac{\eta}{\lambda_1}\right)^{\beta+1}\int_{0}^{\tau_1} l^{\alpha+(\eta-1)(\beta+1)}\left(\log(l)\right)^{\gamma}\exp\left(-l^{\eta}(\beta+1)\right)dl.
\end{align*}
doing
\begin{align*}
l^{\eta}(\beta+1)=u \Rightarrow l&=\left(\frac{u}{\beta+1}\right)^{\frac{1}{\eta}}\Rightarrow dl=\frac{1}{(\beta+1)^{\frac{1}{\eta}}}\frac{1}{\eta}u^{\frac{1-\eta}{\eta}}du.
\end{align*}
\\
The upper limit is $\left(\frac{\tau_1}{\lambda_1}\right)^{\eta}(\beta+1)$
\\
\begin{align*}
H_{\alpha,\gamma,\beta}^{\tau_1}(a_0,a_1,\eta)=&\lambda_1\left(\frac{\eta}{\lambda_1}\right)^{\beta+1}
\\
&\int_{0}^{\left(\frac{\tau_1}{\lambda_1}\right)^{\eta}(\beta+1)} \left(\frac{u}{\beta+1}\right)^{\frac{\alpha+(\eta-1)(\beta+1)}{\eta}}\frac{1}{\eta^{\gamma}}\left(\log(u)-\log(\beta+1)\right)^\gamma \exp(-u)\cdot \frac{1}{(\beta+1)^{\frac{1}{\eta}}}\frac{1}{\eta}u^{\frac{1-\eta}{\eta}}du
\\
&=\lambda_1^{-\beta}(\beta+1)^{-\frac{\alpha+(\eta-1)(\beta+1)+1}{\eta}}\eta^{(\beta-\gamma)}
\\
&\cdot\int_{0}^{\left(\frac{\tau_1}{\lambda_1}\right)^{\eta}(\beta+1)} u^{\frac{\alpha+(\eta-1)(\beta+1)+1}{\eta}-1}\left(\log(u)-\log(\beta+1)\right)^\gamma \exp(-u)du.
\end{align*}
And for $\gamma \in \mathbb{n}$ we can expand $\left(\log(u)-\log(\beta+1)\right)^\gamma$ as a Newton's binomial.
\begin{align*}
H_{\alpha,\gamma,\beta}^{\tau_1}(a_0,a_1,\eta)=&\lambda_1^{-\beta}(\beta+1)^{-\frac{\alpha+(\eta-1)(\beta+1)+1}{\eta}}\eta^{(\beta-\gamma)}
\\
&\cdot \sum_{i=0}^{\gamma}(-1)^{\gamma}\left(\log\left(\beta+1\right)\right)^{\gamma}\int_{0}^{\left(\frac{\tau_1}{\lambda_1}\right)^{\eta}(\beta+1)}u^{\frac{\alpha+(\eta-1)(\beta+1)+1}{\eta}-1}\log(u)^{\gamma-i}\exp(-u)du
\\
&=\lambda_1^{-\beta}(\beta+1)^{-\frac{\alpha+(\eta-1)(\beta+1)+1}{\eta}}\eta^{(\beta-\gamma)}
\\
&\cdot\sum_{i=0}^{\gamma}\binom{\gamma}{i}(-1)^{i}\left(\log\left(\beta+1\right)\right)^{i} \left\{\int_{0}^{\infty}u^{\frac{\alpha+(\eta-1)(\beta+1)+1}{\eta}-1}\log(u)^{\gamma-i}\exp(-u)du\right.
\\
&-\left.\int_{\left(\frac{\tau_1}{\lambda_1}\right)^{\eta}(\beta+1)}^{\infty}u^{\frac{\alpha+(\eta-1)(\beta+1)+1}{\eta}-1}\log(u)^{\gamma-i}\exp(-u)du\right\}
\\
&=\lambda_1^{-\beta}(\beta+1)^{-\frac{\alpha+(\eta-1)(\beta+1)+1}{\eta}}\eta^{(\beta-\gamma)} \cdot\sum_{i=0}^{\gamma}\binom{\gamma}{i}(-1)^{i}\left(\log\left(\beta+1\right)\right)^{i}
\\
 &\cdot \left\{\frac{\partial^{\gamma-i}\Gamma\left(\frac{\alpha+(\eta-1)(\beta+1)+1}{\eta}\right)}{\partial  \frac{\alpha+(\eta-1)(\beta+1)+1}{\eta}} \right.\left .-  \frac{\partial^{\gamma-i}\Gamma\left(\frac{\alpha+(\eta-1)(\beta+1)+1}{\eta},\left(\frac{\tau_1}{\lambda_1}\right)^{\eta}(\beta+1) \right)}{\partial  \frac{\alpha+(\eta-1)(\beta+1)+1}{\eta}}   \right\}.
\end{align*}
And 
\begin{align*}
\frac{\partial^{\gamma-i}\Gamma\left(\frac{\alpha+(\eta-1)(\beta+1)+1}{\eta}\right)}{\partial  \frac{\alpha+(\eta-1)(\beta+1)+1}{\eta}}=&\Gamma\left(\frac{\alpha+(\eta-1)(\beta+1)+1}{\eta}\right)
\\
&\cdot B_{\gamma-i}\left(\psi \left(\frac{\alpha+(\eta-1)(\beta+1)+1}{\eta}\right), ..., \psi^{\gamma-i-1}\left(\frac{\alpha+(\eta-1)(\beta+1)+1}{\eta}\right)\right),
\end{align*}
where $B_{\gamma-i}$ is the $\gamma-i^{th}$ Bell's polynomial,
and
\begin{align*}
 \frac{\partial^{\gamma-i}\Gamma\left(\frac{\alpha+(\eta-1)(\beta+1)+1}{\eta},x \right)}{\partial  \frac{\alpha+(\eta-1)(\beta+1)+1}{\eta}} =&\log(x)^{\gamma-i}\Gamma\left(\frac{\alpha+(\eta-1)(\beta+1)+1}{\eta},x \right)+ \gamma x
 \\
 &\cdot\sum_{n=0}^{\gamma-i-1}P_n^{\gamma-i-1}\log(x)^{\gamma-i-1}T\left(3+n \big|  \frac{\alpha+(\eta-1)(\beta+1)+1}{\eta}\right),
\end{align*}
being
\begin{align*}
P_n^{\gamma-i-1}=\binom{\gamma-i-1}{n} n!,
\end{align*}
and
\begin{align*}
T\left(3+n \big| \frac{\alpha+(\eta-1)(\beta+1)+1}{\eta}\right)=G_{2+n,3+n}^{3+n,0}\left( \begin{matrix} 0, \dots, 0 \\  \frac{\alpha+(\eta-1)(\beta+1)+1}{\eta},-1, \dots, -1 \end{matrix} \middle| z \right)
\end{align*}
where the right term the Meijer-G function.
With an analogous reasoning, the only difference with $H_{\alpha,\gamma,\beta}^{\tau_1,\tau_2}(a_0,a_1,\eta)$ are the limits of integration and changing $\lambda_1$ for $\lambda_2$. So after doing the same substitution we have:
\begin{align*}
H_{\alpha,\gamma,\beta}^{\tau_1,\tau_2}(a_0,a_1,\eta)&=\lambda_2^{-\beta}(\beta+1)^{-\frac{\alpha+(\eta-1)(\beta+1)+1}{\eta}}\eta^{(\beta-\gamma)} \cdot\sum_{i=0}^{\gamma}\binom{\gamma}{i}(-1)^{i}\left(\log\left(\beta+1\right)\right)^{i}
\\
 &\cdot \left\{\frac{\partial^{\gamma-i}\Gamma\left(\frac{\alpha+(\eta-1)(\beta+1)+1}{\eta},\left(\frac{\tau_1}{\lambda_1}\right)^{\eta}(\beta+1) \right)}{\partial  \frac{\alpha+(\eta-1)(\beta+1)+1}{\eta}}  \right.\left .-  \frac{\partial^{\gamma-i}\Gamma\left(\frac{\alpha+(\eta-1)(\beta+1)+1}{\eta},\left(\frac{\tau_2+\frac{\lambda_1}{\lambda_2}\tau_1-\tau_1}{\lambda_2}\right)^{\eta}(\beta+1) \right)}{\partial  \frac{\alpha+(\eta-1)(\beta+1)+1}{\eta}}   \right\}.
\end{align*}
Then, we have the expressions for  $H_{\alpha,\gamma,\beta}^{\tau_1}(a_0,a_1,\eta)$ and  $H_{\alpha,\gamma,\beta}^{\tau_1,\tau_2}(a_0,a_1,\eta)$.
\end{document}